\renewcommand\footnotetextcopyrightpermission[1]{} 
\theoremstyle{plain}
\newtheorem{theorem}{Theorem}[section]
\newtheorem{question}{Question}[section]
\newtheorem{lemma}[theorem]{Lemma}
\newtheorem{corollary}[theorem]{Corollary}
\theoremstyle{definition}
\newtheorem{definition}[theorem]{Definition}
\theoremstyle{remark}
\begin{document}
\pagestyle{plain}

\title{Distributed Differentially Private Data Analytics \\ via Secure Sketching}

\author{Jakob Burkhardt, Hannah Keller, Claudio Orlandi, Chris Schwiegelshohn}
\email{{jakob, hkeller, orlandi, schwiegelshohn}@cs.au.dk}
\affiliation{%
  \institution{Aarhus University}
  \country{Denmark}
}

\newcommand{\algofont}[1]{\mathtt{#1}}

\newcommand{\mR}{\mathbb{R}}
\newcommand{\mA}{\mathbf{A}}
\newcommand{\mS}{\mathbf{S}}

\newcommand{\data}{\mathbf{a}}
\newcommand{\mdata}{\mathbf{A}}
\newcommand{\datamatrix}{\mathbf{A}}
\newcommand{\chernoff}{\gamma}

\newcommand{\datasize}{n}
\newcommand{\sketchsize}{m}
\newcommand{\predictor}{\mathbf{b}}
\newcommand{\x}{\mathbf{x}}
\newcommand{\mx}{\mathbf{X}}

\newcommand{\xopt}{\x_{\text{OPT}}}
\newcommand{\mxopt}{\mx_{\text{OPT}}}

\newcommand{\xest}{\x'}
\newcommand{\mxest}{\mx'}
\newcommand{\xn}{\x_n}
\newcommand{\xs}{\x_s}
\newcommand{\sketch}{\mathbf{S}}
\newcommand{\noise}{\mathbf{g}}
\newcommand{\mnoise}{\mathbf{G}}
\newcommand{\X}{\mathbf{X}}

\newcommand{\tr}{\algofont{trace}}
\newcommand{\magic}{\alpha}
\newcommand{\magict}{\alpha'}

\newcommand{\usvd}{\mathbf{U}}
\newcommand{\sig}{\mathbf{\Sigma}}
\newcommand{\vsvd}{\mathbf{V}}

\newcommand{\jbound}{\sigma\sqrt{360(7d + \ln(1/\Bar{\delta}))} }
\newcommand{\jboundsq}{\sigma^2\cdot360(7d + \ln(1/\Bar{\delta}))}

\newcommand{\newboundb}{\sigma\sqrt{\log(1/\beta)}}
\newcommand{\newbound}{\sigma d\sqrt{\log(1/\beta)}}
\newcommand{\newboundsq}{\sigma^2d^2\log(1/\beta)}
\newcommand{\jboundshortsq}{\sigma^2dn\log(1/\beta)}
\newcommand{\jboundshortsqb}{\sigma^2n\log(1/\beta)}

\newcommand{\const}{c}

\newcommand{\smallthing}{\omega}
\newcommand{\corruptc}{C_{cor}}
\newcommand{\corrupts}{S_{cor}}

\newcommand{\boundnon}{\sigma^2d\log(1/\Tilde{\delta})}

\newcommand{\multerr}{\alpha}
\newcommand{\adderr}{\beta}
\newcommand{\pfail}{t}

\newcommand{\vect}{x}
\newcommand{\ind}{i}
\newcommand{\entry}{\ensuremath{e_\ind}}

\newcommand{\threshold}{\alpha}
\newcommand{\spacereq}{t}
\newcommand{\successprob}{\beta}
\newcommand{\spacefunc}{f}

\newcommand{\lnorm}[1]{\lVert #1 \rVert_2}
\newcommand{\lnormbig}[1]{\Bigl\lVert #1 \Bigr\rVert_2}
\newcommand{\innerprod}[2]{\langle #1 , #2 \rangle}

\newcommand{\hnote}[1]{{\color{purple}{\sf (Hannah's Note:} {\sl{#1}} {\sf EON)}}}
\newcommand{\hannah}[1]{{\color{black} {{#1}}}}
\newcommand{\neurips}[1]{{\color{purple} {{#1}}}}

\newcommand{\squ}[1]{{\color{purple} {\sl{#1}}}}

\newcommand{\chris}[1]{{\color{teal} {Chris: \sl{#1}}}}
\newcommand{\HAN}[1]{{\color{purple} {Hannah: \sl{#1}}}}

\newcommand{\changeforclaudio}[1]{{\color{olive} {Changes for Claudio to check: \sl{#1}}}}

\newcommand{\cnote}[1]{{\color{red}{\sf (Claudio's Note:} {\sl{#1}} {\sf EON)}}}
\newcommand{\claudio}[1]{{\color{red} {\sl{#1}}}}
\newcommand{\mypar}[1]{\vspace{3pt}\noindent\textbf{#1}}

\newcommand{\jnote}[1]{{\color{blue}{\sf (Jakob's Note:} {\sl{#1}} {\sf EON)}}}

\newcommand{\N}{{\mathbb{N}}}
\newcommand{\Z}{{\mathbb{Z}}}
\newcommand{\R}{{\mathbb{R}}}
\newcommand{\E}{{\mathbb{E}}}

\newcommand{\decom}{s}

\newcommand{\sadditive}{\beta_\mS}
\newcommand{\smult}{\ensuremath{\alpha_\mS}}

\newcommand{\ltm}{linear-transformation model}

\begin{abstract}

We introduce the \emph{linear-transformation model}, a distributed model of differentially private data analysis. Clients have access to a 
trusted platform capable of applying a public matrix to their inputs. Such 
computations can be securely distributed across multiple servers using simple and 
efficient secure multiparty computation techniques.

The linear-transformation model serves as an intermediate model between the highly 
expressive \emph{central model} and the minimal \emph{local model}. In the central model, 
clients have access to a trusted platform capable of applying any function to their inputs. 
However, this expressiveness comes at a cost, as it is often prohibitively expensive to distribute such 
computations, leading to the central model typically being implemented by a single trusted 
server. In contrast, the local model assumes no trusted platform, which forces clients 
to add significant noise to their data. The linear-transformation model avoids the single 
point of failure for privacy present in the central model, while also mitigating the high 
noise required in the local model.

We demonstrate that linear transformations are very useful for differential 
privacy, allowing for the computation of linear sketches of input data. These sketches 
largely preserve utility for tasks such as private 
low-rank approximation and private ridge regression, while introducing only minimal error, critically independent 
of the number of clients.

\end{abstract}

\maketitle

\section{Introduction}
\renewcommand{\epsilon}{\varepsilon}
Differential Privacy (DP)~\citep{DMNS06} has become the de-facto standard for 
ensuring the privacy of individuals whose data is used in data analytics. DP offers strong, provable guarantees, such as composability 
and resilience to auxiliary information. In the classic \emph{central model} of differential 
privacy, clients submit their data to a 
trusted central server, which processes the data and releases a (necessarily) 
noisy version of the result. Analysts can 
then use this result to perform queries on the data. Informally, 
differential privacy guarantees that the analyst cannot 
confidently determine whether any individual contributed their data, except with some small 
probability.

The central model requires trust that the server will not reveal any individual's 
data and cannot be compromised by external actors. However, universally trusted 
servers may not exist in practice. In situations where privacy is critical but 
trusting a single server is not feasible, alternative approaches are available: One such approach is to use secure multiparty computation (MPC) 
techniques to distribute the central server's computations across multiple 
servers. Under the assumption that a subset of these servers are honest, this 
setup guarantees the same level of privacy as the central model. However, MPC 
techniques often introduce significant computational overhead, especially when 
distributing complex or expressive computations.

Another option is to avoid relying on external servers altogether. In this case, 
each client must publish their data in a locally differentially private manner, 
by adding noise directly to their data. While this removes the need for a trusted 
server, it results in lower utility due to the increased noise. This approach is 
known as the \emph{local model} of differential privacy~\cite{Kasiviswanathan}, an example 
of which is randomized response~\cite{warner_randomized_1965}.

Both the central and local models of differential privacy have strengths and 
drawbacks. The central model typically yields higher utility 
but requires strong trust in a single server or the costly use of MPC. The local 
model, by contrast, requires no trust in external parties but suffers from reduced 
utility. These tradeoffs have motivated significant research into finding a middle ground 
between these two models. One promising direction is to limit the expressiveness 
of the class of functions $\mathcal{F}$ that can be executed in a trusted manner. 
While this restriction 
allows for more efficient distributed implementation, it is only useful if 
the DP mechanism built around such functions can still achieve good accuracy.

A notable example of such restricted expressiveness is the class of functions known as 
\emph{shuffles}, where a central entity (or a distributed system) randomly permutes the 
messages from clients before they reach the analysing server. This is known as the 
\emph{shuffle model}~\cite{bittau_prochlo_2017,erlingsson_amplification_2019,Cheu_2019}. 
The shuffle model has been used to 
implement many differentially-private mechanisms with reasonable utility~\cite{erlingsson_amplification_2019,Cheu_2019}. Surprisingly, the shuffle model 
is more expressive than initially expected: with a trusted shuffler, any function 
can be securely computed in two rounds under the assumption of an 
honest-majority~\cite{BHNS20}.

We therefore ask the following:
\begin{question}
    What is the least expressive class of functions $\mathcal{F}$ that needs to 
    be securely implemented in order to achieve computationally efficient 
    distributed differential privacy with utility comparable to that of the 
    central model?
\end{question}

\subsection{Our Contribution}

In this work, we investigate the power of the \emph{linear-transformation model (LTM)} for differentially-private mechanisms.
In this model, the clients only have access to a trusted platform for performing arbitrary linear transformations of their collective inputs. Linear functions can be distributed extremely efficiently with secure computations. However, it is also known that the expressiveness of this class of functions is strictly limited. 

For a visualization of the architecture, we refer to Figure \ref{fig:threat}. We demonstrate the benefits of this linear transformation model (LTM) by showing that it can be used, in combinations with \emph{linear sketches} to construct private summaries of a data set with noise comparable to that of central differentially private mechanisms. 
As in the local model, each client adds some amount of noise to their data before passing it onto the linear sketch. We then release the linear transform of the data.
The benefit of having a trusted linear sketch is that the noise can be combined, thereby significantly reducing the amount of noise per client.
All of our mechanisms require only a single round of communication from clients to server.

We demonstrate the effectiveness of this model by exemplarily applying it to problems in numerical linear algebra. 
A particularly powerful and expressive class of linear sketches with numerous linear algebra applications \cite{Woodruff14} are Johnson-Lindenstrauss (JL) transforms.
In this paper, we focus on \emph{low rank approximation} and \emph{regularized linear regression}.
Our contribution is to show how to design mechanisms for these problems, with the most widely used JL-transforms as the underlying secure sketch. 
For these problems, we achieve utility comparable to that of the central model with local privacy guarantees, assuming the JL-transform is computed securely.

For both of the following problems, we have the option of achieving $\varepsilon$ differential privacy, via protocols based on the Laplace mechanism, as well as $(\varepsilon,\delta)$-differential privacy with slightly higher utility, via protocols based on the Gaussian mechanism.

\mypar{Low Rank Approximation.}
We are given a data matrix $\mdata\in\mR^{n\times d}$.  Our goal is to compute an orthogonal projection $\mx\in\mR^{d\times r}$ minimizing the error $OPT = \min_{\mx}\|\mdata-\mdata\mx\mx^T\|_F$, where $\|.\|_F$ denotes the Frobenius norm of a matrix. 
In the LTM, we give a $(\varepsilon,\delta)$ differentially private mechanism with 
multiplicative error $(1 + \alpha)$ and additive error $\tilde{O}(kd^{3/2}\alpha^{-3}\varepsilon^{-1}\log \delta^{-1})$. We also present a $\varepsilon$ differentially private mechanism  with multiplicative error $(1+ \alpha)$ and additive error $\tilde{O}(k^{3}d^{3/2}\alpha^{-6}\varepsilon^{-1})$.
State of the art central model mechanisms~\cite{DworkTT014,hardt2012} have no multiplicative error and additive error $O(k\sqrt{d} \cdot \varepsilon^{-1})$ , which is optimal in terms of $k$ and $d$. For the local model, we are only aware of the result by \cite{AroraBU18}, which has a multiplicative error of $O(\log n \log^3 d)$ and additive error $O(k^2n\cdot d\text{poly}(\varepsilon^{-1},\log \delta^{-1}))$. There are several related questions we will survey in more detail in Section~\ref{sec:relwork}.

\mypar{Ridge Regression.}
We are given a data matrix $\mdata \in \mR^{n \times d}$, 
a target vector $\predictor\in\mR^{n}$ and regularization parameter $\lambda>0$. Our goal is to find $\x\in \mR^{d}$ minimizing
$\min_\x \|\mdata\x-\predictor\|^2 + \lambda\cdot \|\x\|^2$, where $\|.\|$ denotes the Euclidean norm of a vector. 
Each row of the data matrix, as well as the corresponding entry of $\predictor$ resides with a client.
Assume that $\lambda \geq \text{poly}(\varepsilon^{-1},d,\log 1/\delta) $. 
In the LTM, there exists an $(\varepsilon,\delta)$ differentially private mechanism that can compute an $\hat{\x}$ with error $ (1+o(1)) \cdot (\|\mdata\x_{opt}-\predictor\|^2 + \lambda\cdot \|\x_{opt}\|^2) + \text{poly}(\varepsilon^{-1},d,\log 1/\delta)$.
Unlike the low rank approximation problem, we are not aware of other papers achieving comparable bounds in any model. Regression has been studied in the context of empirical risk minimization and some results also exist for comparing closeness of the computed predictor $\hat{\x}$ and that optimal $\x_{opt}$. While we view the low rank approximation mechanisms as our main result, we nevertheless believe this work can serve as a starting point for achieving bounds for ridge regression in terms of cost in any model including the central model.

\begin{table*} [tb]
  \caption{ SOTA utility bounds for various models for frequency estimation and low rank approximation. Our bounds for low rank approximation have a multiplicative error $O(1+\alpha_\mS)$, which we omitted for space.}
  \label{tab:f1}
	 \centering
		\begin{tabular}{|l|c|c|c|c|}
   \hline
			& Local & Shuffle &   LTM ($\textbf{this work}$) & Central \\
   \hline
			\midrule
			\multirow{2}{*}{Frequency Est} & $\tilde{O}(\sqrt{n})$\cite{BassilyNST20} & \multirow{2}{*}{$\tilde{\Theta}(\varepsilon^{-1})$ \cite{GhaziG0PV21}}  & \multirow{2}{*}{$\tilde{O}(\varepsilon^{-1})$}  & \multirow{2}{*}{$\tilde{\Theta}(\varepsilon^{-1})$}  \\
            &  $\Omega(\sqrt{n})$\cite{BassilyS15} & & & \\
   \hline
             \multirow{2}{*}{Low-Rank App} & $(\tilde{O}(\log n \log^3 d),O(k^2nd))$\cite{AroraBU18} & \multirow{2}{*}{N.A.} & \multirow{2}{*}{$\tilde{O}(\frac{kd^{3/2}}{\alpha_\mS^3} \varepsilon^{-1} )$}  & \multirow{2}{*}{$\tilde{\Theta}(k\sqrt{d}\varepsilon^{-1})$\cite{DworkTT014}} \\
             &  $\Omega(\sqrt{n})$\cite{BassilyS15}  & & & \\
             \hline
		\end{tabular}

	\end{table*}
Table~\ref{tab:f1} provides a comparison of the utility bounds for the problems in the local, central, shuffle and LTM models. 
To the best of our knowledge, there are no results on low rank approximation in the shuffle model.

As a warm-up to illustrate the potential inherent in the LTM, and to also highlight the comparison to the other models, we consider the basic problem of computing histograms and frequency of moments.

\section{Related Work}\label{sec:relwork}

There is substantial work on the shuffle model, which also aims at facilitating differential privacy in a distributed setting. We discuss the relationship between the LTM and the shuffle model in more detail in Section \ref{sec:LTM}.
For the problems we studied here, there is an abundance of prior work discussed as follows.

\paragraph{Low Rank Approximation:}
There are various ways in which one could formulate the low rank approximation problem.
The setting which is most important to us is the seminal paper \cite{DworkTT014}, who achieved a worst case additive error of the order $k\sqrt{d}$ for outputting an orthogonal projection matrix $V_k$ in the row space of $\mdata$. This bound is also optimal.
In the local model where each client holds a row of the data matrix $\mdata$, and we wish to output an orthogonal projection in the column space of $\mdata$, \cite{Upadhyay18} gave an algorithm with an additive error of the order $\sqrt{n}$, which matches the lower bound by \cite{BassilyS15} up to lower order terms. This is different from computing a projections in the row space of $\mdata$ considered by \cite{DworkTT014} and indeed our own results.
The only know local algorithm to do likewise in the local model is due to \cite{AroraBU18}, with a substantially larger multiplicative and additive error compared to our results.
To the best of our knowledge, no single round shuffle protocol improving over the local bounds is known.

Low rank approximation and its sister problem PCA, have seen substantial attention in settings more loosely related to our work. 
Much of the work on private low rank approximation~\cite{blum2005,DBLP:conf/nips/ChaudhuriSS12,DworkTT014,hardt2012,HR13,HP14,BDWY16,KT13} considers the data to be fixed, often using Gaussian noise. 
In addition, a strong spectral gap assumption, typically of the form that the first singular value is substantially larger than the second, is crucial to the analysis. 
We note that all algorithms operating with this assumption do not yield worst case bounds.

When there is no spectral gap, we compare to the central model exponential mechanism approach from \cite{DBLP:conf/nips/ChaudhuriSS12}, implemented by~\cite{KT13}. \cite{LKO22} also provide a solution to this problem; however, it is unfortunately computationally intractable.

\paragraph{Ridge Regression}
The previous work most related to ours is due to \cite{Sheffet19}. The author gave a mechanism that preserved the entire spectrum of a data matrix in a private manner and also showed that the returned regression vector $\hat{\x}$ is close to the optimal $\x$ assuming that the matrix is well conditioned. 

Most previous work \cite{KiferST12,CWZ21,BassilyST14,WangFS15,DBLP:conf/nips/WangG018,Wang2018NoninteractiveLP,JMLR:v24:21-0523,wang18,VTJ22,MKFI22,LKO22,zheng2017collect}  study private linear regression in the context of risk minimization, where the algorithm is given i.i.d. samples from some unknown distribution and aims at computing a solution with good out of sample performance. Even without privacy constraints risk minimization yields an additive error that depends on $n$. 
Another line of work studies varies loss functions, including regression in a Bayesian setting~\cite{MASN16,FGWC16,DNMR14}. 
We are not aware of any prior work on linear regression in the local DP model.

\paragraph{DP via Johnson-Lindenstrauss Transforms.}

The Johnson-Lindenstrauss lemma has been previously used in differential privacy. \cite{DBLP:conf/focs/BlockiBDS12} studies the DP guarantee a JL transform itself gives in the central model, in the context of cut queries and directional variance queries. \cite{DBLP:journals/jpc/KenthapadiKMM13} and \cite{DBLP:conf/pods/Stausholm21} use JL transforms together with Gaussian noise in the context of DP Euclidean distance approximation, to decrease the error. This was subsequently improved in \cite{Stausholm21}, using Laplace noise. The difference to our work is that the transform is used to reduce $d$ instead of $n$, by having every client apply it locally before adding noise. In \cite{DBLP:conf/soda/Nikolov23} the authors make use of JL transforms to achieve private query release, by applying it before releasing a bundle of queries. \cite{DBLP:conf/nips/GhaziK0MS23} studies pairwise statistics in the local model of differential privacy and uses JL transforms to reduce client-sided dimensionality.

\paragraph{Differentially Private Non-Oblivious Sketches. }

Non-Oblivious sketches, such as CountSketch, have also been used in combination with either distributed or central DP.  \cite{ZQRAW22,PT22} provide an analysis of differentially private CountSketch in the central model. \cite{MDC16} consider differentially private CountSketch and Count-Min Sketch in a distributed setting, but they reduce the dimensionality of each input, which can be done by users locally, rather than reducing the dependence on the number of users, as we do.

\section{Preliminaries}\label{sec:prelims}

\mypar{Notation. }Column vectors are written in bold lowercase letters $\predictor$ and matrices in bold uppercase letters $\mdata$. The transpose operator over vectors and matrices is $\predictor^T$ and $\mdata^T$. For any vector $\predictor$, we denote by $\lVert\predictor\rVert = \sqrt{\sum_i b_i^2}$ its $\ell_2$-norm. For any matrix $\mdata$, we denote by $\lVert\mdata\rVert_F =\sqrt{\sum_i\sum_j \mdata_{i,j}^2}$ its Frobenius norm. We denote the inner product between vectors $\langle\data,\predictor\rangle=\data^T\predictor$. Two vectors $\data,\predictor$ are orthogonal if their inner product is 0. An orthogonal matrix $\mdata$ is a real square matrix where the columns have unit Euclidean norm and are pairwise orthogonal.

\mypar{Subspace Preserving Sketches. }We consider subspace-preserving sketches in this work, which are sparse Johnson-Lindenstrauss transforms. Of particular interest for us are algorithms that compute a subspace approximation of $A$ without prior knowledge of $A$. Such algorithms are known as oblivious subspace embeddings. We will use a family of such embeddings known as \emph{oblivious sparse norm-approximating projections} originally due to \cite{nelson2013osnap}, but since improved upon in several subsequent works.

\begin{definition}[OSNAP \cite{nelson2013osnap}.]
\label{def:osnap}
    Let $\mathcal{D}_{m,n,\decom}^{\algofont{sketch}}$ be a distribution over random matrices $\{-1,0,1\}^{m\times n}$, 
    where $\decom$ entries of each column, chosen uniformly and independently for each column, are set to $\pm 1$ with equal probability, and all other entries are set to 0. If $\decom=1$, we write $\mathcal{D}_{m,n}^{\algofont{sketch}}$. We say that $S\sim \mathcal{D}_{m,n,\decom}^{\algofont{sketch}}$ is an $(\alpha,\beta,\decom,m)$ OSNAP for a $n\times d$ orthogonal matrix $\mathbf{W}$ if with probability $1-\beta$, for all $a,b\in \mathbb{R}^d$ and some precision parameter $\alpha$
    \[\left\vert a^T\mathbf{W}^T\mS^T\mS\mathbf{W}b - a^T\mathbf{W}^T\mathbf{W}b\right\vert \leq \alpha\cdot \|\mathbf{W}a\|\|\mathbf{W}b\|.\]
\end{definition}

We give bounds on available choices of $(\alpha,\beta,\decom,m)$ for subspaces of rank $k$ in Table~\ref{tab:ose}.

\begin{table*}
    \caption{Trade-off between the target dimension $m$ and sparsity $s$ of sparse OSEs that are generated such that they have $s$ non-zeros entries per column. Here $\alpha$ denotes the accuracy of the OSE and $\beta$ denotes its fail probability. Variable $k$ is the parameter to $k$-rank approximation, which can be replaced by dimension $d$ for linear regression. The paper by \cite{Achlioptas03} was the first to propose the construction, while its application to preserving subspaces was given by \cite{ClarksonW09}.}
    \label{tab:ose}
    \centering
        \begin{tabular}{|c|c|c|c|c|c|}
            \hline
            \textbf{Paper} & \textbf{Target Dimension} $m$ & \textbf{Sparsity} $s$ & \textbf{With Probability} \\
            \hline
            \cite{Achlioptas03} & $O(\frac{k+\log 1/\beta}{\alpha^2})$ & $m$ & $1 - \beta$ \\
            \hline
            \cite{nelson2013osnap} & $O(\frac{k^2}{\alpha^2\beta})$ & 1 & $1 - \beta$ \\
            \hline
            \cite{Cohen16} & $O(\frac{k\log(k/\beta)}{\alpha^2})$ & $O(\frac{\log(k/\beta)}{\alpha})$ & $1 - \beta$  \\
            \hline
            \cite{schwiegelshohn23} & $O(\frac{k}{\alpha^2})$ & $O(\frac{1}{\alpha} \cdot (\frac{k}{\log(1/\alpha)} + k^{2/3}\log^{1/3}(k)))$ & $1 - 2^{-k^{2/3}}$ \\
            \hline
        \end{tabular}
\end{table*}

Oblivious subspace embeddings (OSEs)~\cite{DBLP:conf/focs/Sarlos06}, allow for faster approximation algorithms for problems in linear algebra. Achlioptas \cite{Achlioptas03} provided the first Johnson-Lindenstrauss transform with some amount of sparsity. The first embedding with asymptotically smaller number of non-zeros than dense Johnson-Lindenstrauss transforms was probably due to \cite{DBLP:conf/soda/KaneN12}, who also applied them in the context of subspace embeddings. In a remarkable result, the super-sparse version with only $1$ non-zero entries per column, was first analyzed in~\cite{ClarksonW13} and improved independently in~\cite{MengM13,nelson2013osnap}, by increasing the target dimension $m$. In~\cite{nelson2013osnap}, the sparse Johnson-Lindenstrauss transform~\cite{DBLP:conf/soda/KaneN12} was studied with a wider range of parameters. Subsequent works~\cite{Cohen16, schwiegelshohn23} further analyze and improve the relationship between $m$ and $s$, emphasizing different parameters. Table \ref{tab:ose} provides the exact interplay between $m$ and $s$ in those works. \cite{DBLP:conf/nips/LiangBKW14} proposes an approach to boost the success probability $\beta$ of an OSE, which gives an alternative to increasing the target dimension.

When dealing with rank $k$ approximation, we will condition on 
\begin{align*}
    (1-\alpha_S) \| \mdata - \mdata\mx\mx^T \|_F^2 & \leq \|\sketch(\mdata - \mdata\mx\mx^T)\|_F^2 \\
    & \leq (1+\alpha_S) \| \mdata-\mdata\mx\mx^T \|_F^2
\end{align*}
for all rank $k$ orthogonal matrices $\mx$.
When dealing with regression, we will condition on
\begin{align*}
    (1-\alpha_S) \| \mdata \x'-\predictor \|^2 & \leq \|\sketch(\mdata \x-\predictor)\|^2 \\
    & \leq (1+\alpha_S) \| \mdata \x'-\predictor \|^2
\end{align*}
for all $\x \in \mathbb{R}^d$. 
For the parameters given here, this is true with the probability given in Table \ref{tab:ose} (assuming $k=d$ in the case of regression).
For the case of regression in particular, it is sometimes beneficial to select $\alpha_S$ as large as possible. A sufficiently largest value of $\alpha_S$ such that $\sketch$ still provide a subspace embedding guarantee is $1/3$. Throughout this paper, we will sometimes bound $\alpha_S$ by $1$.

We further will give our utility proofs using the sparsity/target dimension bounds from \cite{Cohen16}. Other tradeoffs are possible, but they have worse bounds for most ranges of parameters. The results in \cite{MengM13,nelson2013osnap,schwiegelshohn23} give different trade-offs depending on which set of parameters are considered the most dominant. Notably, the $s=1$ sketches of \cite{MengM13,nelson2013osnap} result in an additive error of $\Tilde{O}\left(\frac{k^4d^3}{\alpha_S^4 \beta^2} \varepsilon^{-2}\log \frac{1}{\delta}\right)$ for low rank approximation and a term $\Tilde{O}\left(\frac{d^7\varepsilon^{-2}\log \frac{1}{\delta}}{\alpha_S^5 \beta^2 \lambda} + \frac{d^{14}\varepsilon^{-4}\log^2 \frac{1}{\delta}}{\alpha_S^6 \beta^4 \lambda^2}\right)$
in both the multiplicative and additive error. The utility guarantees using the bounds from \cite{schwiegelshohn23} change only by logarithmic factors compared to those in Theorems \ref{thrm:utilitykrank} and \ref{thrm:linreg}.

\mypar{Differential Privacy. } Differential privacy~\cite{DMNS06} offers privacy guarantees to individuals contributing their data to some randomized algorithm. We say that two datasets are neighboring if one can be obtained from the other by the the replacement of a single individual with another individual.

\begin{definition}[Differential Privacy in the central model \cite{DMNS06}]\label{def:DP}
    Let $\epsilon \geq 0$ and $\delta \in [0,1]$. A randomized mechanism $\mathcal{M}:\mathcal{X} \to \mathcal{Y}$ is $(\epsilon,\delta)$-differentially private, if for all neighboring data sets $x, x' \in \mathcal{X}$ and all outputs $S \subseteq \mathcal{Y}$ it holds that
    \begin{equation*}
        \Pr[\mathcal{M}(x) \in S] \leq e^\epsilon \Pr[ \mathcal{M}(x') \in S ] + \delta,
    \end{equation*}
    where the probabilities are  over the randomness of $\mathcal{M}$. If $\delta=0$, then we simply say that the mechanism is $\varepsilon$-differentially private.

\end{definition}

We first define the Gaussian and Laplace mechanisms, which we use later. 

\begin{lemma}[The Gaussian Mechanism \cite{dwork2014}]\label{lemma:gaussian_mech}
    Let $f:\mathcal{X} \to \R^k$ be a function and let $\epsilon \geq 0$ and $\delta \in [0,1]$. The Gaussian mechanism adds to each of the $k$ components of the output, noise sampled from $N(0,\sigma^2)$ with
    \begin{equation*}
        \sigma^2 \geq \frac{2(\Delta_2f)^2\ln(1.25/\delta)}{\epsilon^2},
    \end{equation*}
    where $\Delta_2f = \max_{x \sim x'} \lVert f(x) - f(x') \rVert_2$ denotes the $\ell_2$ sensitivity of function $f$. 
    The Gaussian mechanism is $(\epsilon,\delta)$ differentially private.
\end{lemma}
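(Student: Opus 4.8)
The plan is to follow the classical privacy-loss analysis of the Gaussian mechanism. Write $\mathcal{M}(x) = f(x) + \noise$ with $\noise \sim N(0,\sigma^2 I_k)$, fix a pair of neighbouring inputs $x \sim x'$, and set $v := f(x) - f(x')$, so that $\lnorm{v} \le \Delta := \Delta_2 f$. Since the density of $N(0,\sigma^2 I_k)$ is rotationally invariant, after an orthogonal change of coordinates I may assume $v$ points along the first axis; then the last $k-1$ output coordinates have identical densities under $\mathcal{M}(x)$ and $\mathcal{M}(x')$ and cancel in every likelihood ratio, reducing the analysis to a one-dimensional shift of magnitude $\lnorm{v}$. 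Because the tail probability I will eventually need to bound is monotone in $\lnorm{v}$, it suffices to treat the worst case $\lnorm{v} = \Delta$.

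Next I would write the privacy loss explicitly. For an output $z = f(x) + \noise$ drawn from $\mathcal{M}(x)$,
\[
  \mathcal{L}(z) = \ln\frac{p_{\mathcal{M}(x)}(z)}{p_{\mathcal{M}(x')}(z)} = \frac{\lnorm{z - f(x')}^2 - \lnorm{z-f(x)}^2}{2\sigma^2} = \frac{2\innerprod{\noise}{v} + \lnorm{v}^2}{2\sigma^2}.
\]
Since $\innerprod{\noise}{v} \sim N(0,\sigma^2\lnorm{v}^2)$, the privacy loss is itself Gaussian, $\mathcal{L} \sim N\!\left(\frac{\lnorm{v}^2}{2\sigma^2}, \frac{\lnorm{v}^2}{\sigma^2}\right)$. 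I would then invoke the standard sufficient condition for approximate differential privacy: if $\Pr_{z \sim \mathcal{M}(x)}[\mathcal{L}(z) > \varepsilon] \le \delta$ for every neighbouring pair, then $\mathcal{M}$ is $(\varepsilon,\delta)$-DP, which follows by splitting $\Pr[\mathcal{M}(x) \in S]$ over the events $\{\mathcal{L} \le \varepsilon\}$ and $\{\mathcal{L} > \varepsilon\}$ and bounding the density ratio by $e^\varepsilon$ on the former.

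It then remains to bound the upper tail of $\mathcal{L}$. Standardizing, with $\lnorm{v} = \Delta$ we have $\Pr[\mathcal{L} > \varepsilon] = \Pr[N(0,1) > t]$ for $t = \frac{\sigma\varepsilon}{\Delta} - \frac{\Delta}{2\sigma}$; substituting $\sigma^2 = \frac{2\Delta^2\ln(1.25/\delta)}{\varepsilon^2}$ and using the Gaussian tail estimate $\Pr[N(0,1) > t] \le \frac{1}{t\sqrt{2\pi}}e^{-t^2/2}$ should yield $\Pr[\mathcal{L} > \varepsilon] \le \delta$. The main obstacle is precisely this last computation: it is the only genuinely quantitative step, and pinning down the constant $1.25$ requires expanding $t^2/2$ carefully (the cross term $\frac{\Delta}{2\sigma}\cdot\frac{\sigma\varepsilon}{\Delta}$ contributes exactly $\varepsilon/2$), controlling the $\ln\!\big(1/(t\sqrt{2\pi})\big)$ correction, and appealing to the standard regime $\varepsilon \le 1$ (so that $\sigma \ge \Delta$ and $t$ is bounded away from $0$). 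Everything preceding it — the rotational reduction, the closed form for $\mathcal{L}$, and the generic loss-to-DP implication — is routine.
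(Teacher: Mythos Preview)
The paper does not prove this lemma at all: it is stated in the preliminaries with a citation to \cite{dwork2014} and used as a black box. Your sketch is the standard privacy-loss argument from that reference (rotational reduction to one dimension, Gaussian privacy loss, tail bound yielding the $1.25$ constant under $\varepsilon \le 1$), and it is correct; there is nothing in the paper to compare it against.
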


\begin{lemma}[The Laplace Mechanism \cite{DMNS06}]\label{lemma:lap_mech}
    Let $f:\mathcal{X} \to \R^k$ be a function and let $\epsilon \geq 0$. The Laplace mechanism adds to each of the $k$ components of the output, noise sampled from $Lap(\Delta_1f/\varepsilon)$, where $\Delta_1f = \max_{x \sim x'} \lvert f(x) - f(x') \rvert$ denotes the $\ell_1$ sensitivity of function $f$. 
    The Laplace mechanism is $\epsilon$ differentially private.
\end{lemma}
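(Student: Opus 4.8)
The plan is to prove this by directly comparing the output probability densities of the mechanism on two neighboring inputs. Write $b = \Delta_1 f / \varepsilon$ and let $\mathcal{M}(x) = f(x) + (Y_1,\dots,Y_k)$, where the $Y_i$ are i.i.d.\ $\mathrm{Lap}(b)$. Since the coordinates of the noise are independent, the density of $\mathcal{M}(x)$ evaluated at a point $z = (z_1,\dots,z_k) \in \R^k$ is
\[
p_x(z) \;=\; \prod_{i=1}^k \frac{1}{2b}\exp\!\left(-\frac{\lvert z_i - f(x)_i\rvert}{b}\right).
\]
First I would fix an arbitrary pair of neighboring datasets $x \sim x'$ and form the ratio $p_x(z)/p_{x'}(z)$; the normalizing constants $1/(2b)$ cancel, leaving $\exp\!\big(\tfrac{1}{b}\sum_{i=1}^k(\lvert z_i - f(x')_i\rvert - \lvert z_i - f(x)_i\rvert)\big)$.

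The key step is to bound each summand in the exponent using the triangle inequality: $\lvert z_i - f(x')_i\rvert - \lvert z_i - f(x)_i\rvert \le \lvert f(x)_i - f(x')_i\rvert$. Summing over $i$ gives $\sum_i(\lvert z_i - f(x')_i\rvert - \lvert z_i - f(x)_i\rvert) \le \lVert f(x) - f(x')\rVert_1 \le \Delta_1 f$, where the final inequality is the definition of the $\ell_1$ sensitivity (the maximum being over neighboring datasets). Hence $p_x(z)/p_{x'}(z) \le \exp(\Delta_1 f / b) = e^{\varepsilon}$ for every $z \in \R^k$, and by symmetry the reciprocal bound holds as well.

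Finally I would lift this pointwise density bound to the form required by Definition~\ref{def:DP}: for any measurable set $S \subseteq \R^k$,
\[
\Pr[\mathcal{M}(x) \in S] \;=\; \int_S p_x(z)\,dz \;\le\; e^{\varepsilon}\int_S p_{x'}(z)\,dz \;=\; e^{\varepsilon}\Pr[\mathcal{M}(x') \in S],
\]
which is exactly $(\varepsilon,0)$-differential privacy, so no additive $\delta$ term is needed. There is no substantial obstacle in this argument; the only point requiring a little care is bookkeeping of the sensitivity notation: the quantity written $\Delta_1 f$ in the statement must be interpreted as $\max_{x\sim x'}\lVert f(x)-f(x')\rVert_1$, the $\ell_1$ norm of the difference vector (coinciding with the absolute value in the lemma when $k=1$), since it is precisely the sum of coordinatewise discrepancies produced by the triangle-inequality step that must be controlled by the noise scale $b$.
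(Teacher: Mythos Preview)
Your proof is correct and is the standard textbook argument for the Laplace mechanism. The paper itself does not prove this lemma: it is stated with a citation to~\cite{DMNS06} and used as a black box, so there is no in-paper proof to compare against. Your remark that $\Delta_1 f$ must be read as the $\ell_1$-norm $\max_{x\sim x'}\lVert f(x)-f(x')\rVert_1$ for $k>1$ is well taken and matches how the paper actually uses the lemma (e.g.\ in the proof of Corollary~\ref{thm:privlap}).
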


We will use sequential composition of differentially private mechanisms.
\begin{lemma}[Sequential Composition \cite{DMNS06}]\label{lemma:composition}
    Let $\mathcal{M}_i:\mathcal{X} \to \mathcal{Y}_i$ be an $(\epsilon_i,\delta_i)$-differentially private mechanism for $i \in [k]$. Then mechanism $\mathcal{M}:\mathcal{X} \to \prod_{i=1}^k \mathcal{Y}_i$ defined as $\mathcal{M}(x) = (\mathcal{M}_1(x),\dots , \mathcal{M}_k(x))$, is $(\sum_{i=1}^k \epsilon_i, \sum_{i=1}^k \delta_i)$-differentially private.
\end{lemma}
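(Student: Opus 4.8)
The plan is to verify the definition directly, for an arbitrary fixed pair of neighboring datasets $x \sim x'$ and an arbitrary fixed measurable output set $S \subseteq \mathcal{Y} := \prod_{i=1}^k \mathcal{Y}_i$; since the neighboring relation is symmetric, this suffices. Write $\mu_i,\nu_i$ for the output distributions of $\mathcal{M}_i$ on inputs $x$ and $x'$, and $\mu := \bigotimes_{i=1}^k \mu_i$, $\nu := \bigotimes_{i=1}^k \nu_i$ for those of $\mathcal{M}$; the product form encodes the assumption that the $\mathcal{M}_i$ run on independent randomness. The target is $\mu(S) \le e^{\sum_i \epsilon_i}\,\nu(S) + \sum_i \delta_i$.

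The key device is to replace each $\mathcal{M}_i$ by a ``clean core'': a sub-probability measure $\mu_i''$ enjoying (i) $\mu_i'' \le \mu_i$, (ii) $\mu_i''(T) \le e^{\epsilon_i}\nu_i(T)$ for every measurable $T$, and (iii) $\mu_i''(\mathcal{Y}_i) \ge 1 - \delta_i$. Concretely, put $\rho_i := \mu_i + \nu_i$, let $f_i := d\mu_i/d\rho_i$ and $g_i := d\nu_i/d\rho_i$ (plain pmfs when $\mathcal{Y}_i$ is discrete, Radon--Nikodym derivatives in general, as $\mu_i,\nu_i \ll \rho_i$), and let $\mu_i''$ have density $\min(f_i, e^{\epsilon_i}g_i)$ against $\rho_i$. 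Properties (i) and (ii) are immediate, and (iii) follows from
\begin{align*}
\mu_i''(\mathcal{Y}_i) &= 1 - \int (f_i - e^{\epsilon_i}g_i)_+ \, d\rho_i \\
&= 1 - \sup_{T}\bigl(\mu_i(T) - e^{\epsilon_i}\nu_i(T)\bigr) \;\ge\; 1 - \delta_i ,
\end{align*}
the last inequality being exactly the $(\epsilon_i,\delta_i)$-privacy of $\mathcal{M}_i$ evaluated at its worst-case event.

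Now form $\mu'' := \bigotimes_{i=1}^k \mu_i''$. Property (i) gives $\mu'' \le \mu$ as measures; property (ii), chained through the product by iterating Fubini and the pointwise domination $\mu_i'' \le e^{\epsilon_i}\nu_i$, gives $\mu''(S) \le e^{\sum_i \epsilon_i}\,\nu(S)$ for every $S$; and property (iii) multiplies to $\mu''(\mathcal{Y}) = \prod_i \mu_i''(\mathcal{Y}_i) \ge \prod_i(1-\delta_i) \ge 1 - \sum_i \delta_i$. Since $\mu - \mu''$ is then a non-negative measure, for any measurable $S$,
\begin{align*}
\mu(S) &= \mu''(S) + (\mu-\mu'')(S) \\
&\le e^{\sum_i \epsilon_i}\,\nu(S) + (\mu-\mu'')(\mathcal{Y}) \\
&= e^{\sum_i \epsilon_i}\,\nu(S) + \bigl(1 - \mu''(\mathcal{Y})\bigr) \;\le\; e^{\sum_i \epsilon_i}\,\nu(S) + \sum_i \delta_i ,
\end{align*}
which is the asserted $(\sum_i\epsilon_i,\sum_i\delta_i)$-differential privacy of $\mathcal{M}$.

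The step that requires care is the additive bookkeeping. Chaining the $(\epsilon_i,\delta_i)$ guarantees one mechanism at a time leaks a spurious multiplicative factor onto the $\delta$'s (for $k=2$ it gives $\delta_1 + e^{\epsilon_1}\delta_2$ instead of $\delta_1+\delta_2$); the clean-core construction sidesteps this by carrying out all multiplicative ($\epsilon$) reasoning in the error-free pure-privacy world of the $\mu_i''$, and all additive ($\delta$) reasoning through the masses $\mu_i''(\mathcal{Y}_i)$, where only a union-bound loss arises. The remaining ingredients --- existence of the densities and the interchange of iterated integrals --- are routine. If preferred, the general-$k$ case also follows from the two-mechanism case by induction, collapsing $\mathcal{M}_1,\dots,\mathcal{M}_{k-1}$ into a single mechanism via the inductive hypothesis.
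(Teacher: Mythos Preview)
The paper does not prove this lemma: it is stated in the Preliminaries section as a standard cited result from \cite{DMNS06}, with no accompanying argument. So there is nothing in the paper to compare your proof against.

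That said, your proof is correct and is one of the cleanest ways to establish basic composition with the right additive $\delta$. The ``clean core'' decomposition $\mu_i'' = \min(f_i, e^{\epsilon_i}g_i)\,d\rho_i$ is exactly the device that avoids the spurious $e^{\epsilon_1}\delta_2$ blowup you mention, and your bookkeeping (properties (i)--(iii), the product bound via Fubini, and the final chain $\mu(S) \le \mu''(S) + (\mu-\mu'')(\mathcal{Y})$) is all sound. The one place worth a half-line of extra justification is the claim $\mu'' \le \mu$ as measures on the product space: this follows because $0 \le \mu_i'' \le \mu_i$ implies $\bigotimes_i \mu_i'' \le \bigotimes_i \mu_i$ on rectangles, hence on the full product $\sigma$-algebra by a monotone-class argument. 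Everything else is routine as you say.
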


A probability distribution $F$ is \emph{infinitely divisible} if for any positive integer $n$, there is a random variable $S_n$ which can be written as
$S_n = \sum_{i\in[n]} X_{i}$
such that each of the $X_i$ are random variables, independent and identically distributed, and $S_n$ has the probability distribution $F$.

\section{Privacy Guarantees in the LTM}
\label{sec:LTM}

We consider a model in which the trusted component can perform any public linear transformation of the inputs. 
The model includes three algorithms: (1) $R:\mathcal{X}\rightarrow\mathcal{Y}$ is a randomized encoder that takes a single user's data and outputs a randomized message, (2) $T:\mathcal{Y'}\rightarrow\mathcal{Y^*}$ is the idealized trusted component that performs a public linear transformation of its inputs, and (3) $A:\mathcal{Y^*}\rightarrow\mathcal{Z}$ is an analysis function that takes the results messages and estimates some function from these messages.

Note that standard definitions of differential privacy in the shuffle model only require $T(R(x_1),...,R(x_n))$ to be differentially private. This implicitly assumes that all clients are honest and do not collude with the adversary, in particular they are assumed not to leak the output of their randomizers publicly. This implies that existing definitions of differential privacy in the shuffle model could be satisfied even by (artificial) mechanisms in which a single client adds the whole noise while the others do not randomize their messages at all. This is clearly a very weak privacy guarantee: in a setting in which a large number of clients participate in a differentially private data analysis, it is unrealistic to assume that the adversary does not control even a single client.  Luckily, to the best of our knowledge, no proposed mechanisms in the literature suffer from these vulnerabilities, still this counterexample shows that the existing definition is not robust enough, and we therefore formalize a notion of differential privacy where we explicitly tolerate that a bounded number of clients might collude with the adversary. Similar observations were also made by~\cite{talwar2023samplable}.

\begin{definition}[Trusted Computation Model for Differential Privacy]\label{def:trustDP}
        A tuple of algorithms $P=(R,T,A)$ is $(\epsilon,\delta)$-differentially private given corrupt clients $\corruptc$ if the output $\Pi_R(x_1,...,x_n)=T(R(x_1),...,R(x_n))$, as well as corrupted parties' randomizer output $R(x_i)$ for all $x_i\in \corruptc$ satisfy $(\epsilon,\delta)$-differential privacy. 
\end{definition}

\mypar{Multi-Central Model of Differential Privacy.} 

\cite{steinke} introduce the multi-central model, which exactly defines this split-trust model that instantiates the trusted computation model above. They allow a semi-honest adversary that honestly follows the protocol to corrupt up to all but one servers and all but one client. We also operate in the same semi-honest setting, but define parameters $t$ and $t'$ for the number of tolerated server and client corruptions.

 The view $\algofont{View}^{R,\Pi,A}_{\corruptc,\corrupts}(x)$ of the adversary consists of all the information available to the corrupted clients  $\corruptc$ and servers $\corrupts$ plus the final output of the honest, uncorrupted servers. This view excludes only internal information of the trustworthy servers and clients. The resulting protocol is $(\epsilon,\delta)$-differentially private in the multi-central model if the adversary's view is $(\epsilon,\delta)$-indistinguishable from the output on a neighboring dataset. 

 Figure~\ref{fig:threat} further illustrates what is contained in the adversary's view. 
 If one server and one client are corrupted, the adversary's view contains all incoming and outgoing messages from that server and that client marked in red in the figure.

 \begin{figure}
     \centering
     \includegraphics[width=\linewidth]{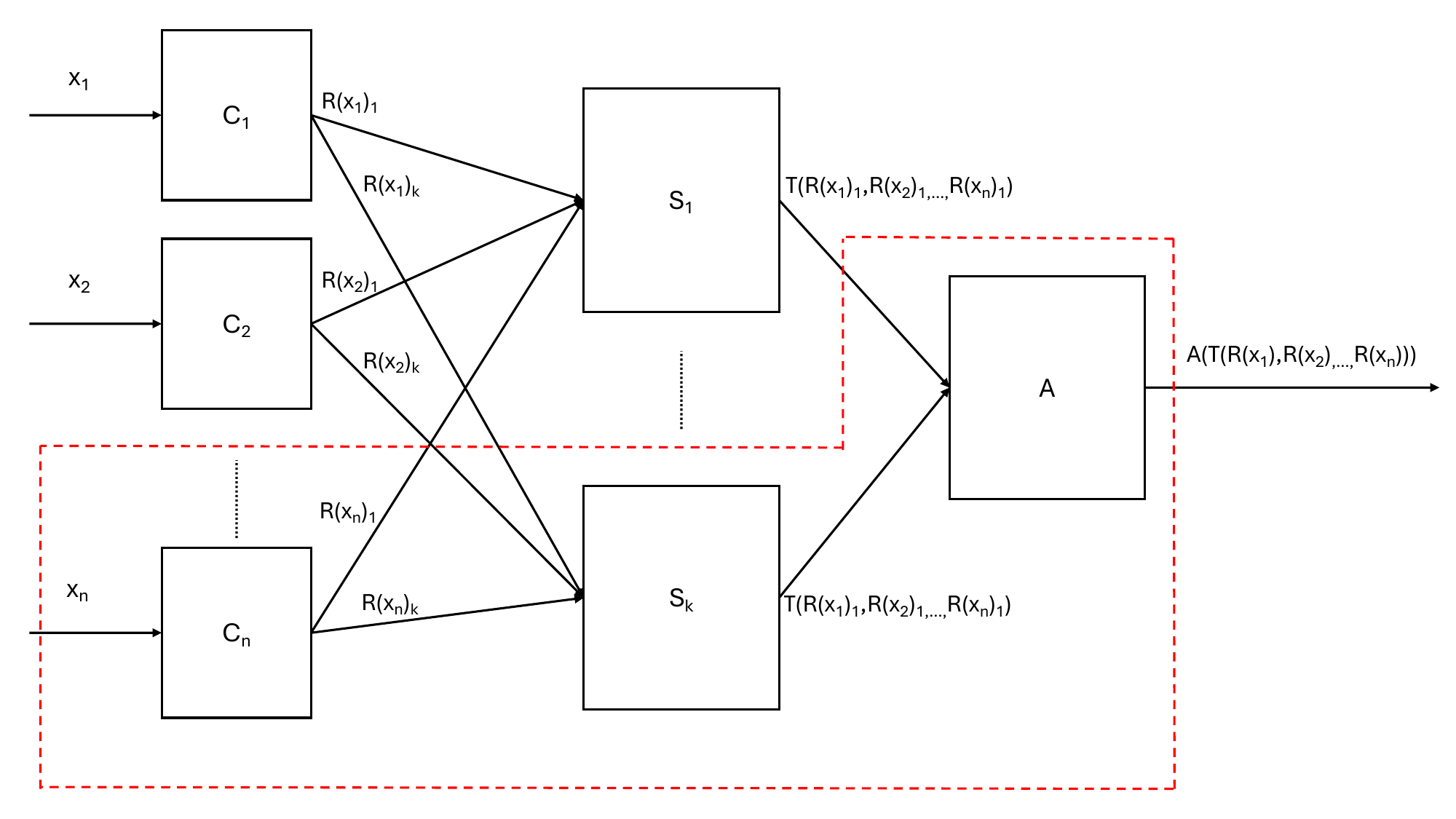}
     \caption{Adversary's View}
     \label{fig:threat}
 \end{figure}

\begin{definition}[Instantiation of Trusted Computation Model for Differential Privacy with MPC]\label{def:MPCDP}
        Let $\Pi$ be a $k$-party MPC protocol that computes $f:\mathbb{R}^n\rightarrow \mathbb{R}$ with perfect security, tolerating $t$ corruptions. 
        A tuple of algorithms $P=(R,\Pi,A)$ is $(\epsilon,\delta)$-differentially private if for all coalitions $\corrupts$ of up to $t<k$ corrupt servers and for all possible $S$, all coalitions $\corruptc$ of $t'<n$ corrupt clients and all neighboring datasets $x$ and $x'$:
        \begin{align*}
                     \Pr[\algofont{View}&^{R,\Pi,A}_{\corrupts,\corruptc}  (x)\in S] \\
        & \leq e^\epsilon\cdot \Pr[\algofont{View}^{R,\Pi,A}_{\corrupts,\corruptc}(x')\in S]+\delta
        \end{align*}

        where the probability is over all the randomness in the algorithms $(R,\Pi,A)$. 

\end{definition}

Algorithm~\ref{alg:ltm} describes how clients and servers jointly compute the linear transformation, allowing the transformed data to be published and used for analysis. 
\begin{algorithm}[tb]
	\caption{\expandafter Linear Transformation Model}
        \begin{flushleft}
        \begin{algorithmic}[1]
		\STATE \textbf{Input:} Individual input vectors $\mathbf{x}_{1},\dots,\mathbf{x}_{n}$ representing values in $\mathbb{R}^d$ 
        \STATE Each client $i\in[n]$ locally computes $R(\mathbf{x}_i)$ on their input $\mathbf{x}_i$.
        \STATE Each client secret shares the resulting value using a linear secret sharing scheme, and sends one share to each server.
        \STATE The servers jointly compute function $T$ on the matrix, which results from concatenating all $n$ vector secret shares.
        \STATE The resulting matrix product is published and then taken as input to any analysis function $A$.
        \STATE  \textbf{Output:} Output $A(T(R(\mathbf{x}_1),\dots,R(\mathbf{x}_n))$
        \end{algorithmic}\label{alg:ltm}
        \end{flushleft}
\end{algorithm}

With this definition, we can give our first result showing that differential privacy is retained given a bounded number of corruptions.

\begin{lemma}\label{lem:mpcpriv}
    If  $P=(R,T,A)$ is $(\epsilon,\delta)$-differentially private with $t'$ corrupt clients
    , and if $\Pi$ is a perfectly secure $k$-party MPC protocol that computes $T$ correctly while tolerating $t$ corrupt servers
    , then $P=(R,\Pi,A)$ is $(\epsilon,\delta)$-differentially private.
\end{lemma}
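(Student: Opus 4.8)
The plan is to reduce the privacy of the real protocol $P=(R,\Pi,A)$ to the privacy of the idealized protocol $P=(R,T,A)$ by invoking the perfect security of the MPC protocol $\Pi$. The starting observation is that Definition~\ref{def:MPCDP} requires us to bound the indistinguishability of $\algofont{View}^{R,\Pi,A}_{\corrupts,\corruptc}(x)$, which contains: (i) the randomizer outputs $R(x_i)$ of the corrupt clients $x_i\in\corruptc$, (ii) the entire transcript of $\Pi$ as seen by the corrupt servers $\corrupts$ together with their internal state, and (iii) the final published output $T(R(x_1),\dots,R(x_n))$ fed into $A$. By contrast, the hypothesis on $P=(R,T,A)$ gives us $(\epsilon,\delta)$-indistinguishability of the pair $\bigl(\Pi_R(x_1,\dots,x_n),\,\{R(x_i)\}_{x_i\in\corruptc}\bigr)$, i.e. exactly components (i) and (iii) but with the idealized trusted box $T$ in place of the protocol transcript.

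The key step is the MPC simulation argument. Since $\Pi$ is a perfectly secure $k$-party protocol computing $T$ tolerating $t$ corrupt servers, there exists a simulator $\mathsf{Sim}$ that, given only the corrupt servers' inputs (here the clients' secret shares destined for $\corrupts$, which reveal nothing about the honest clients' data because the secret-sharing scheme is linear and hence private against $t<k$ shares) and the function output $T(R(x_1),\dots,R(x_n))$, produces a transcript perfectly indistinguishable from the real view of $\corrupts$ in $\Pi$. Consequently, the corrupt servers' view in step (ii) is a (randomized) function of only the published output and data-independent randomness; it carries no additional information about $x$ beyond what is already in component (iii). Formally, I would exhibit a randomized post-processing map $g$ such that $\algofont{View}^{R,\Pi,A}_{\corrupts,\corruptc}(x)$ is distributed identically to $g\bigl(\Pi_R(x_1,\dots,x_n),\{R(x_i)\}_{x_i\in\corruptc}\bigr)$ — the map $g$ runs $\mathsf{Sim}$ on the output to fabricate the transcript, appends the corrupt clients' randomizer outputs verbatim, and includes the published result. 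One also has to note that the corrupt clients' secret shares of their own $R(x_i)$ are computed from $R(x_i)$ itself, so they are already post-processing of component (i) and contribute nothing new.

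With this decomposition in hand, the conclusion follows from the post-processing invariance of differential privacy: for every measurable $S$,
\begin{align*}
\Pr[\algofont{View}^{R,\Pi,A}_{\corrupts,\corruptc}(x)\in S]
&= \Pr\bigl[g\bigl(\Pi_R(x),\{R(x_i)\}_{\corruptc}\bigr)\in S\bigr]\\
&\leq e^\epsilon\,\Pr\bigl[g\bigl(\Pi_R(x'),\{R(x_i)\}_{\corruptc}\bigr)\in S\bigr]+\delta\\
&= e^\epsilon\,\Pr[\algofont{View}^{R,\Pi,A}_{\corrupts,\corruptc}(x')\in S]+\delta,
\end{align*}
where the middle inequality is the assumed $(\epsilon,\delta)$-DP of $P=(R,T,A)$ applied to the neighboring datasets $x,x'$ (noting that $A$ is itself post-processing and can be folded into $g$ as well). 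Quantifying over all coalitions $\corrupts$ with $|\corrupts|\le t<k$ and $\corruptc$ with $|\corruptc|\le t'<n$ gives the statement.

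I expect the main obstacle to be the careful bookkeeping of \emph{what exactly} sits in the adversary's view and showing each piece is legitimate post-processing: in particular, arguing that the simulator's output distribution matches the real transcript \emph{jointly} with the honest servers' published output (not just marginally), and that a corrupt server who is also handed shares from corrupt clients does not thereby learn anything the simulator cannot reproduce. This is where perfect (rather than merely computational or statistical) security of $\Pi$ is used, since it lets us assert an exact distributional identity and thus incur no additive slack in $\delta$; with statistical security one would pick up the simulation error as an extra term. A secondary subtlety is that Definition~\ref{def:MPCDP} demands the bound hold for \emph{all} coalitions simultaneously in the quantifier, so the reduction must be uniform in the choice of corrupted set — which it is, since the simulator exists for every admissible corrupt set by hypothesis.
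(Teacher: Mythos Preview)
Your proposal is correct and follows essentially the same approach as the paper's proof sketch: invoke the simulator guaranteed by perfect security of $\Pi$ to argue that the adversary's real view is (perfectly) reproducible from the ideal-world output $T(R(x_1),\dots,R(x_n))$ together with the corrupt clients' randomizer outputs, and then apply the assumed $(\epsilon,\delta)$-DP of $(R,T,A)$. The paper presents this as a direct chain of inequalities through $\mathsf{Sim}$, while you phrase it via an explicit post-processing map $g$ and the post-processing invariance of DP; these are the same argument, and your version is in fact more carefully worked out (in particular your remarks about joint rather than marginal indistinguishability and about why perfect security avoids slack in $\delta$ go beyond what the paper spells out).
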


\begin{proof}[Proof Sketch.] This proof can be seen as analogous to the security proof of Theorem 4 in~\cite{talwar2023samplable}. Recall that each server learns nothing more from a perfectly secure multi-party computation protocol other than what is implied by their own input and the output of the function being evaluated. Then there is a simulator $\algofont{Sim}^{R,\Pi,A}_{\corrupts,\corrupts=c}(x)$ whose output is $(\epsilon,\delta)$ indistinguishable from the adversary's view $\algofont{View}^{R,\Pi,A}_{\corrupts,\corruptc}(x)$. Note that the secret shares that are part of the view of the corrupted servers in the protocol $\Pi$ are independent of the input values of the corrupted clients, and all can be easily simulated by random sampling.

A bit more formally, let $x$ be an input dataset, where $x_i$ is the input value contributed by client $i$ and $\Tilde{x}_i$ is the vector of input shares for server $i$ that serves as input to $\Pi$. There is a mapping from $x\in\mathcal{X}^n$ to $\Tilde{x}\in\Tilde{\mathcal{X}}^k$. Denote $X=\{x_i\}_{i\in \corruptc}$ and $X'=\{x'_i\}_{i\in \corruptc}$.Then for any two datasets $x$, $x'$ that differ in one entry $x_i$ with $i \notin \corruptc$:

\begin{align*}
    & \Pr[  \algofont{View}^{R,\Pi,A}_{\corrupts,\corruptc}(x)\in S] \\
    & \leq \Pr[ \algofont{Sim}^{R,\Pi,A}_{\corrupts,\corruptc}(f(R(x_1),\dots,R(x_n),X)\in S] \\
    & \leq e^\epsilon \Pr[  \algofont{Sim}^{R,\Pi,A}_{\corrupts,\corruptc}(f(R(x'_1),\dots,R(x'_n),X')\in S] + \delta \\
    & = e^\epsilon  \Pr[  \algofont{View}^{R,\Pi,A}_{\corrupts,\corruptc}(x')\in S] + \delta 
\end{align*}
\end{proof}

\subsection{Frequency Moments}
\label{sec:frequency}
To illustrate the possibilities inherent to the LTM, we consider a simple application by way of estimating frequency moments. For simplicity, assume in this section that all clients are trustworthy, i.e. $t'=0$, but the arguments can be straightforwardly extended to deal with arbitrary values of $t'$.
Here, each client is given a number $[-\Delta,\Delta]$ and our goal is to estimate $F_k = \sum_{i=1}^n |x_i|^k$.
In the local model, even if all entries are either $0$ or $1$, it is not possible to estimate $F_1$ without incurring an additive error of the order $\sqrt{n}$, that is the estimated value $\Tilde{F_1} = F_1 \pm O(\sqrt{n})$ for any privacy preserving local mechanism \cite{BassilyS15}. In contrast, simply computing $F_k$ and adding an appropriate amount of noise 
yields an additive error that only depends on $\Delta$, $\varepsilon$ and $\delta$.

We now consider the LTM.
Let $\mathbf{x}^k$ denote the vector of client entries with $\mathbf{x}^k_i = |x_i^k|$. We observe that $F_k =f_k(\mathbf{x})= 1^T \mathbf{x}^k$. Suppose that every client $i$ samples $g_i$ from a distribution, and we denote $\tilde{\mathbf{x}}$ the vector of noisy client entries with $\tilde{\mathbf{x}}^k_i = |x_i^k|+g_i$. 
To guarantee differential privacy, we can use a differentially private additive noise mechanism $1^T (\tilde{\mathbf{x}}^k ) = f_k(\mathbf{x})+G$, where the noise distribution is infinitely divisible and $G=\sum_{i\in[n]}g_i$.

Many different infinitely divisible distributions can be used at this stage. One example is the geometric distribution, since the sum of negative binomial random variables follows a geometric distribution, which can be used in differentially private mechanisms \cite{Ghazi0MP20}. The Laplace distribution, which is commonly used for $\epsilon$-DP mechanisms, is also infinitely divisible, since a zero-centered Laplace distribution can be formulated as a sum of differences between Gamma distributions. Other possible options are stable distributions, a subset of infinitely divisible distributions for which each $g_i$ and $G$ follow the same distribution, such as Gaussian or Cauchy. 

Since Gaussians are popular in differentially private mechanisms, we will describe the parameter choices for this distribution. Let $g_i$ be Gaussian distributed with zero mean and variance $\sigma^2 \geq \frac{2\Delta^k \ln(1.25/\delta)}{n\cdot \varepsilon^2}$. Then we have $\tilde{F_k} = 1^T (\mathbf{x}^k + g) = F_k + \sum_{i=1}^n g_i$. We now observe that $\sum_{i=1}^n g_i$ is Gaussian distributed with zero mean and variance $n\cdot \sigma^2 \geq \frac{2\Delta^k \ln(1.25/\delta)}{\varepsilon^2}$. Thus, assuming a trusted computation of $1^T g$, $\tilde{F_k}$ is a differentially private estimate of $F_k$ with error $\frac{2\Delta^k \ln(1.25/\delta)}{\varepsilon^2}$.

We also describe the analogous parameter choices for the Laplace mechanism. Let $g_i$ be sample from the distribution defined as the difference between two Gamma distributions $\Gamma(\frac{1}{n},\frac{2\Delta^k}{\epsilon})$. Again, we have $\tilde{F_k} = 1^T (\mathbf{x}^k + g) = F_k + \sum_{i=1}^n g_i$. We now observe that $\sum_{i=1}^n g_i$ is Laplace distributed, centered at 0 and with scale $\frac{2\Delta^k}{\epsilon}$. Thus, assuming a trusted computation of $1^T g$, $\tilde{F_k}$ is a differentially private estimate of $F_k$ with error $\frac{2\Delta^k }{\varepsilon}$.

\subsection{Privacy Preserving Mechanism for Oblivious Johnson Lindenstrauss Transforms}

We instantiate the trusted computation model for differential privacy for specific choices for the functions in tuple $(R,T,A)$. We will show how to guarantee differential privacy for the tuples $(R_D^{\text{dense}},T_\mS^{\text{dense}},A)$ and $(R_D,T_\mS,A)$, which we will refer to as the dense sketching mechanism and the sparse sketching mechanism, respectively.

\mypar{Randomizer function $R_D:\mR^d\rightarrow \mR^{d\decom}$.} 
The randomizer takes as input a $d$-dimensional row vector $\mathbf{x}$, copies it $s$ times and concatenates the result to form $\mathbf{x}^s$. Then noise is sampled independently from distribution $D$ and added to each of the $ds$ entries in the resulting vector.

\begin{equation}\label{eq:Rdense}
    R^{\text{dense}}_D(\mathbf{x}) = 
      \mathbf{x}^\decom+\noise, \noise\sim D^{d\decom} 
\end{equation}
We define a second alternative to this randomizer function, which instead outputs $0^{d\decom}$ if $n<8m\ln(dm/\delta)+t'$:
\begin{equation}\label{eq:R}
    R_{D}(\mathbf{x}) = \begin{cases} 
      0^{d\decom} & n<8m\ln(dm/\delta)+t' \\
      R^{\text{dense}}_D(\mathbf{x}) & else 
   \end{cases}
\end{equation}

\mypar{Transformation function $T_\mS:\mR^{n\times d\decom}\rightarrow\mR^{m\times d}$.} 
This function takes as a parameter a matrix $\mS\in\{-1,0,1\}^{m\times n}$ and takes as input a matrix $\mA\in\mR^{n\times d\decom}$, which corresponds to the vertical concatenation or stacking of $n$ outputs from a randomizer function $R$. We parse $\mA$ as the horizontal concatenation of matrices $\mA_i\in\mR^{n\times d}$ for $i\in[\decom]$. Let $\decom$ be the largest number of non-zero entries in a column of $\sketch$. Then $\sketch$ can be decomposed such that $\sketch = \frac{1}{\sqrt{\decom}}\sum_i^\decom \sketch_i$, where $\sketch_i\in\{-1,0,1\}^{m\times n}$ is a matrix with only one non-zero entry per column, and the assignment of non-zero entries to indeces $i$ is done uniformly at random. $T_\mS$ outputs the sum of matrix products $\frac{1}{\sqrt{\decom}}\sum_{i\in[\decom]}\mS_i\mA_i$.
\begin{equation}\label{eq:Ts}
    T_\mS (\mA) = \frac{1}{\sqrt{\decom}}\sum_{i\in[\decom]}\mS_i\mA_i
\end{equation}
We also define $T_\mS^{\text{dense}}:\mR^{n\times dm}\rightarrow\mR^{m\times d}$, where $\mS$ is a dense matrix rather than a sparse one, so we set $\decom=m$. Then $\sketch$ can be decomposed such that $\sketch = \frac{1}{\sqrt{m}}\sum_i^m \sketch_i$, where $\sketch_i\in\{-1,0,1\}^{m\times n}$ is a matrix with only one non-zero entry per column and exactly $n/m$ non-zero entries per row. 

\mypar{Analysis function $A$.} $A$ takes a matrix in $\mR^{m\times d}$, as well as any public values, and outputs some data analysis. Examples include low rank approximation and linear regression.

We now evaluate the noise distribution necessary to guarantee differential privacy in this model.
We find that for sufficiently large $n$, we can apply the infinite divisibility of distributions to divide the noise into a number of pieces corresponding the number $(n-t')/2m$ of honest clients expected to contribute to any entry of a column in the resulting noisy matrix product. The concentration bound, formulated in Lemma~\ref{lem:nonzero} and used to bound this number of honest clients, contributes to the necessary value of $\delta$. 
Composition theorems can be applied on the columns of the resulting matrix product to yield the final differential privacy guarantee.
We will later see that adding suitable noise to data results in only a small error in the resulting error for regularized linear regression and low rank approximation.

We are now ready to state our privacy guarantees in the LTM.

\begin{theorem}\label{thm:privacynew_general}
Let $\epsilon \geq 0$, $\delta \in (0,1)$, $m\in[n]$, $t'<n$, $\decom\in [m]$. 
            Let $\mS\sim \mathcal{D}_{m,n,\decom}^{\algofont{sketch}}$ with $\decom$ non-zero entries per column.
            Also let noise $g_i$ be sampled from a symmetric distribution $D$ such that $\sum_{i\in[(n-\decom-t')/(2m)]}g_i=G$. $G$ follows a distribution $D'$ that defines an additive noise mechanism that satisfies $(\epsilon/(\decom d), \delta/(\decom d) - m\exp( - (n-\decom -t') / 8m )) )$-DP for function $f(\mdata)=(\sketch\mdata)_j+\mathbf{n}$ for $\mathbf{n}\sim D'^m$ on input $\mdata\in\mathbb{R}^{n\times d}$, where the subscript refers to one column of $\sketch\mdata$. 
            Then as long as input values are bounded above by $\eta$, the sparse sketching mechanism $(R_D,T_\mS,A)$, as defined by equations~\ref{eq:R} and~\ref{eq:Ts} and for any $A$, is $(\epsilon,\delta)$-differentially private in the trusted computation model for differential privacy with $t'$ corrupt clients.
\end{theorem}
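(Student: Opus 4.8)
The plan is to verify the two conditions of Definition~\ref{def:trustDP}: that the published matrix $T_\mS(R_D(\mathbf{x}_1),\dots,R_D(\mathbf{x}_n))$, together with the corrupt clients' randomizer outputs $\{R_D(\mathbf{x}_i)\}_{i\in\corruptc}$, is $(\epsilon,\delta)$-differentially private under replacement of one necessarily honest client $i^\ast\notin\corruptc$ (the analysis function $A$ does not appear in Definition~\ref{def:trustDP}, so ``for any $A$'' is immediate). If $n<8m\ln(dm/\delta)+t'$ then by construction $R_D\equiv 0^{d\decom}$, so the published matrix and all randomizer outputs are constant and the claim is trivial; assume henceforth $n\ge 8m\ln(dm/\delta)+t'$. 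Since neighboring datasets agree on the corrupt clients, $\{R_D(\mathbf{x}_i)\}_{i\in\corruptc}$ has the same distribution on both, is independent of the honest clients' fresh noise, and enters the published matrix only as a data-independent additive shift; conditioning on its realized value and averaging over its common distribution, the statement reduces to showing that $T_\mS$ applied to the (at least $n-t'$) honest clients' noisy rows is $(\epsilon,\delta)$-DP under replacement of one of them.

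Next I would decompose the sketch into atoms. Multiplying the output by $\sqrt{\decom}$ is post-processing, after which it equals $\sum_{i\in[\decom]}\mS_i\mathbf{A}_i$, a deterministic function of the $\decom$ single-hash sketches $\mS_i\mathbf{A}_i=\mS_i\mdata+\mS_i\mathbf{G}_i$ (with $\mathbf{G}_i$ the copy-$i$ noise matrix on the honest rows), each of which is determined by its $d$ columns. By sequential composition (Lemma~\ref{lemma:composition}) it suffices that each of the $\decom d$ atoms --- the $j$-th column $(\mS_i\mdata)_j+(\mS_i\mathbf{G}_i)_j$ of copy $i$ --- is $(\epsilon/(\decom d),\,\delta/(\decom d))$-DP under replacing $i^\ast$. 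Since $\mS_i$ has a single nonzero per column, replacing $i^\ast$ (whose entries are bounded by $\eta$) changes this atom in a single coordinate by at most $2\eta$, so its sensitivity is no larger than that of the hypothesized mechanism $f(\mdata)=(\mS\mdata)_j+\mathbf{n}$; hence the atom inherits the privacy of $f$ provided its noise stochastically dominates $\mathbf{n}\sim D'^m$.

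To get that domination I would invoke Lemma~\ref{lem:nonzero}. The $m$ coordinates of $(\mS_i\mathbf{G}_i)_j$ use disjoint (hence independent) blocks of the i.i.d.\ noise variables, coordinate $k$ being a signed sum of as many of them as there are honest clients hashed to bucket $k$ in copy $i$; and with probability at least $1-m\exp(-(n-\decom-t')/(8m))$ --- using $\decom\le m$ for the union bound over the $\decom$ copies --- every such bucket holds at least $(n-\decom-t')/(2m)$ honest clients. Conditioned on this event $E$ and on the data-independent hash structure, infinite divisibility writes each coordinate as $G$ plus an independent remainder (the random $\pm1$ signs absorbed by the symmetry of $D$), so the atom is a post-processing of $f$ and is $(\epsilon/(\decom d),\,\delta/(\decom d)-m\exp(-(n-\decom-t')/(8m)))$-DP on $E$. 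Composing the $\decom d$ atoms yields $(\epsilon,\,\delta-\decom d\cdot m\exp(-(n-\decom-t')/(8m)))$-DP on $E$, and since $\Pr[\neg E]$ is identical on neighboring datasets, charging it once adds back at most $m\exp(-(n-\decom-t')/(8m))$ to $\delta$, giving $(\epsilon,\delta)$-DP for $T_\mS$ on the honest rows; with the reductions above and Definition~\ref{def:trustDP}, the theorem follows.

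The step I expect to be the main obstacle is this last comparison: the ``every bucket is heavy'' event lives on the randomness of $\mS$, so one cannot merely condition on $\mS$ and quote the hypothesis, but must, for every heavy realization of $\mS$, realize the true per-bucket noise as $G$ plus fresh independent noise --- carefully handling the random signs and the disjointness of noise blocks across buckets --- and then reassemble the conditional guarantees over the distribution of $\mS$, charging the bad event only once rather than once per atom while keeping the $\epsilon/(\decom d)$, $\delta/(\decom d)$ accounting exact.
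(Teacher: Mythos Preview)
Your proposal is correct and follows essentially the same route as the paper: handle the trivial branch of $R_D$, decompose $\mS$ into $s$ single-hash copies, compose over the $sd$ column-copy atoms via Lemma~\ref{lemma:composition}, and for each atom combine the Chernoff bound of Lemma~\ref{lem:nonzero} with the symmetry and infinite divisibility of $D$ to show that, on the ``all buckets are heavy'' event, the atom's noise dominates a fresh $D'$ sample and hence inherits the hypothesized $(\epsilon/(sd),\,\delta/(sd)-m\exp(-(n-s-t')/(8m)))$-DP, then absorb the bad event into $\delta$. The paper packages the $s=1$ case as a standalone lemma (Lemma~\ref{thm:privacy_inf_div}) before composing over the $s$ copies, and it charges the bad event once per atom rather than once globally as you propose, but the underlying argument is identical.
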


\begin{proof}
    We begin by proving a helper lemma below.
    \begin{lemma}\label{lem:nonzero}
    Let $\gamma \in (0,1)$, and let $\sketch \sim \mathcal{D}_{m,n}^{\algofont{sketch}}$. Then the probability that $\sketch$ has at least $(1-\gamma)\frac{n}{m}$ and at most $(1+\gamma)\frac{n}{m}$ non-zero entries in every row is bounded by; 
    
    \begin{equation*}
        \Pr\Bigl(\exists i \in [m]:\left\vert X_i - \frac{n}{m}\right\vert > \gamma\cdot \frac{n}{m}\Bigr) < 2m \exp\Bigl(-\frac{\gamma^2 n}{2m}\Bigr),
    \end{equation*}
    where $X_i$ denotes the number of non-zero entries in row $i$.
    \end{lemma}
    \begin{proof}
    We only give a proof of the lower bound, as a proof of the upper bound is completely analogous.
    Let $X_{ij}$ denote the indicator random variable that is $1$ if $\sketch_{ij}$ is non-zero and $0$ otherwise. Note that each entry of $\sketch$ can be thought of as an independent Bernoulli random variable $X_{ij}$, and the number of non-zero entries in a row is the sum of $n$ of these random variables. Fixing a row $k \in [m]$, we can use a Chernoff bound for sums of Bernoulli random variables to get a concentration bound around the expected number of non-zero entries $X_k = \sum_{j=1}^{n} X_{kj}$ in row $k$. In order to do this, we first need the expected value of $X_k$;
    \begin{equation*}
        \E[X_k] = \sum_{j=1}^{n} \Pr(X_{kj} = 1) = \frac{n}{m}.
    \end{equation*}
    Thus a Chernoff bound gives us that
    \begin{equation*}
        \Pr\Bigl(X_k < (1-\gamma) \frac{n}{m}\Bigr) < \exp\Bigl(-\frac{\gamma^2 n}{2m}\Bigr).
    \end{equation*}
    Applying a union bound over all rows of $\sketch$ then gives us exactly what we were to prove;
    \begin{align*}
        \Pr\Bigl( & \exists i \in [m]: X_i < (1-\gamma)\frac{n}{m}\Bigr) \\
        &\leq \sum_{k = 1}^{m} \Pr\Bigl(X_k < (1-\gamma) \frac{n}{m}\Bigr) \\
        & < m \exp\Bigl(-\frac{\gamma^2 n}{2m}\Bigr).
    \end{align*}
    \end{proof}
    
    To prove Theorem~\ref{thm:privacynew_general}, we first prove a simpler version of the theorem below.
    
    \begin{lemma}\label{thm:privacy_inf_div}
        Let $\epsilon \geq 0$, $\delta \in (0,1)$, $m\in[n]$, $t'<n$. 
    
        Let $\mS\sim \mathcal{D}_{m,n}^{\algofont{sketch}}$ with one non-zero entry per column.
        Also let $g_i\sim D$ be samples from a distribution such that $\sum_{i\in[(n-t')/(2m)]}g_i=G$. 
        $G$ follows a distribution $D'$ that defines an additive noise mechanism that satisfies $(\epsilon/d, \delta/d - m\exp( - (n-t') / 8m )) )$-DP for function $f(\mdata)=(\sketch\mdata)_j+\mathbf{n}$ with $\mathbf{n}\sim D'^m$ on input $\mdata\in\mathbb{R}^{n\times d}$, where the subscript refers to one column of $\mS\mdata$. 
        Then as long as input values are bounded above by $\eta$, tuple of algorithms $(R_D,T_\mS,A)$, as defined by equations~\ref{eq:R} and~\ref{eq:Ts} and for any $A$, is $(\epsilon,\delta)$-differentially private in the trusted computation model for differential privacy with $t'$ corrupt clients.
    \end{lemma}
    
    \begin{proof}[Proof of Lemma~\ref{thm:privacy_inf_div}]
        Let $\epsilon$, $\delta$, $R_D$ and $T_\mS$ be given as described in the theorem. 
        If $\delta/d \leq m\exp( - (n-t') / 8m )$, then the $R_D$ outputs $0^d$, which is entirely independent of the input and is therefore $(\epsilon, \delta)$ differentially private. 
        
        Otherwise, $\mathcal{M}(\mdata) = T_\mS(R_D(\data_1), \dots , R_D(\data_n))=\mS(\mdata+\mnoise)$ where $\data_i$ denotes row $i$ of $\mdata\in\mR^{n\times d}$ and $\mnoise\leftarrow D^{n\times d}$, and we now prove that this is $(\epsilon,\delta)$ differentially private.
    
        Note that $\sketch_i(\mdata+\mnoise)$ is a d-dimensional vector, when $\sketch_i$ refers to the $i$-th row of $\sketch$.
        If $(\sketch_i(\mdata+\mnoise))_j$ is $(\epsilon/d,\delta/d)$-DP, then sequential composition (Lemma \ref{lemma:composition}) gives that $\mathcal{M}$ is $(\epsilon,\delta)$ differentially private.
    
        Now consider $d=1$ and let mechanism $\mathcal{M}_1:\mathbb{R}^n \to \mathbb{R}^m$ be defined as $\mathcal{M}_1(\mathbf{a}) = \mathbf{S} (\mathbf{a} + \mathbf{g})$ for $\mathbf{a}\in\mathbb{R}^n$, with $\mathbf{g}\leftarrow D^n$. 
        
        $\mathbf{S}$ is sampled such that it has only one non-zero entry per column; thus, the columns of $\mS$ are orthogonal. This means that the entries of $(\mathbf{S}(\mathbf{a} + \mathbf{g}))_j$ and $(\mathbf{S}(\mathbf{a} + \mathbf{g}))_{j'}$ for $j\neq j'$, have disjoint support. Therefore, the privacy guarantee can be analyzed independently for each entry of $\mathbf{S} (\mathbf{a} + \mathbf{g})$.

        Denote by $E$ the event that the support of $\sketch(a+g)_j$ has at least $\frac{n-t'}{2m}$ uncorrupted clients, whose set of indices we denote $N$. 
         We can formulate $\mathbf{S} (\mathbf{a} + \mathbf{g})_j=(\mathbf{S} \mathbf{a})_j + (\mathbf{S}\mathbf{g})_j = (\mathbf{S} \mathbf{a})_j + \sum_{k\in[n]}\mathbf{S}_{j,k}\mathbf{g}_k = (\mathbf{S} \mathbf{a})_j + \sum_{k\in N}\mathbf{S}_{j,k}\mathbf{g}_k + \sum_{k\in [n]\setminus N}\mathbf{S}_{j,k}\mathbf{g}_k$. Also note that $\sum_{k\in N}\mathbf{S}_{j,k}\mathbf{g}_k$ contains at least $\frac{n-t'}{2m}$ terms in event $E$, the sum of which suffice for guaranteeing DP.
        
        Due to the fact that $D$ is symmetric, $-1\cdot\mathbf{g}_k$ follows the same distribution as $\mathbf{g}_k$, and $\sum_{k\in N}\mathbf{S}_{j,k}\mathbf{g}_k$ follows the same distribution as $\sum_{k\in N}\mathbf{g}_k$. Due to the infinite divisibility of $D'$, the sum of the first $\frac{n-t'}{2m}$ terms of $\sum_{k\in N}\mathbf{g}_k$ are samples from $D'$.
        
        Then in event $E$, $\mathcal{M}_1$ is $(\epsilon, \delta - m\exp( - (n-t') / 8m )) )$ differentially private using the additive noise mechanism that adds noise from $D'$. Setting $\gamma=1/2$, Lemma \ref{lem:nonzero} gives that $E$ occurs with probability at least $1 - m\exp(-\frac{n-t'}{8m})$. Since event $E$ does not occur with probability at most $m\exp( -(n-t') / 8m )$, $\mathcal{M}_1$ is $(\epsilon, \delta- m\exp( - (n-t') / 8m )))$ differentially private except with probability $m\exp( - (n-t') / 8m )$. Then $\mathcal{M}_1$ is $(\epsilon, \delta)$-differentially private.
    
    \end{proof}

    Let $\epsilon$, $\delta$, $R_D$ and $T_\mS$ be given as described in the theorem. 
    If $\delta/d \leq m\exp( - (n-\decom-t') / 8m )$, then the $R_D$ outputs $0^d$, which is entirely independent of the input and is therefore $(\epsilon, \delta)$ differentially private.

    Otherwise, $\mathcal{M}(\mdata) = T_\mS(R_D(\data_1), \dots , R_D(\data_n))=\frac{1}{\sqrt{\decom}}\sum_{i\in[\decom]}\mS_i(\mdata+\mnoise_i)$ where $\data_i$ denotes row $i$ of $\mdata\in\mR^{n\times d}$ and $\mnoise_i\leftarrow D^{n\times d}$. Here, $\sketch$ has been decomposed such that $\sketch = \frac{1}{\sqrt{\decom}}\sum_i^\decom \sketch_i$, where \\ $\sketch_i\in\{-1,0,1\}^{m\times n}$ is a matrix with only one non-zero entry per column. 
    
    We argue $(\epsilon/\decom,\delta/\decom)$-DP based on Lemma~\ref{thm:privacy_inf_div}. Note that the non-zero entry per column in each $\sketch_i$ is not sampled uniformly at random from all possible rows; rather, for $\sketch_i$ is sampled uniformly at random from the remaining $n-i$ rows (removing the one that was already chosen to be non-zero, sampling without replacement). Therefore, we replace $n$ with $n-\decom$ 
    when applying Lemma~\ref{thm:privacy_inf_div}.

    By sequential composition, releasing $\sketch_i(\mdata+\mnoise_i)$ for all $i\in[\decom]$ satisfies $(\epsilon,\delta)$-DP, and thus by post-processing, the scaled sum of these is also $(\epsilon,\delta)$-DP.

\end{proof}

One additive mechanism to which this can easily be applied is the Gaussian mechanism. To do so, $D$ is a normal distribution with variance described in the following corollary.

\begin{corollary}\label{thm:privacynew}
    Let $\epsilon \geq 0$, $\delta \in (0,1)$, $t'<n$, $m\in[n]$, $\decom\in [n]$ and $D$ be the Gaussian distribution $\mathcal{N}(0,\sigma^2)$ with:  
    \begin{align*}
        \sigma^2 = \frac{4 \decom^3\eta^2 \ln(1.25\decom / (\delta/d - m\exp( \frac{- (n-\decom-t')}{8m} ))) m d^2}{\epsilon^2(n-\decom-t') }
    \end{align*}
    Let $\mS\sim \mathcal{D}_{m,n,\decom}^{\algofont{sketch}}$.
    Then as long as input values are bounded above by $\eta$, tuple of algorithms $(R_D,T_\mS,A)$, as defined by equations~\ref{eq:R} and~\ref{eq:Ts} and for any $A$, is $(\epsilon,\delta)$-differentially private in the trusted computation model for differential privacy with $t'$ corrupt clients.
    
\end{corollary}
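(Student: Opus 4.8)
The plan is to obtain Corollary~\ref{thm:privacynew} as a direct instantiation of Theorem~\ref{thm:privacynew_general}, taking the per-client noise distribution $D$ to be $\mathcal{N}(0,\sigma^2)$. Theorem~\ref{thm:privacynew_general} already absorbs essentially all of the structure of the sparse sketching mechanism --- the per-client secret sharing, the decomposition $\mS=\tfrac1{\sqrt{\decom}}\sum_i\mS_i$, the sequential composition across the $d$ columns and the $\decom$ sparsity terms, and the fall-back to the all-zero output of Eq.~\ref{eq:R} --- so the only thing that remains is to verify the one hypothesis of that theorem for this choice of $D$: that the aggregated noise $G=\sum_{i\in[(n-\decom-t')/(2m)]}g_i$ defines an additive-noise mechanism for a single column of $\mS\mdata$ that is $\big(\epsilon/(\decom d),\,\delta/(\decom d)-m\exp(-(n-\decom-t')/(8m))\big)$-differentially private. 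Thus the corollary reduces to an $\ell_2$-sensitivity computation together with a Gaussian-mechanism calculation.

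First, $D=\mathcal{N}(0,\sigma^2)$ is symmetric, and since independent Gaussians add, $G$ is distributed as $D'=\mathcal{N}\!\big(0,\tfrac{n-\decom-t'}{2m}\sigma^2\big)$; this is the infinite-divisibility step of Theorem~\ref{thm:privacynew_general}, here immediate because the Gaussian family is stable. Next I would bound the $\ell_2$ sensitivity of the map $\mdata\mapsto(\mS\mdata)_j$, the $j$-th column of $\mS\mdata$: replacing one client changes a single entry of the $j$-th column of $\mdata$ by at most $\eta$ in magnitude, and since that column of $\mS\in\{-1,0,1\}^{m\times n}$ carries $\decom$ nonzero $\pm1$ entries (Definition~\ref{def:osnap}), this perturbs at most $\decom$ coordinates of $(\mS\mdata)_j$, each by at most $\eta$, so the $\ell_2$ sensitivity is at most $\eta\sqrt{\decom}$.

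Finally, applying the Gaussian-mechanism bound of Lemma~\ref{lemma:gaussian_mech} to this function with target parameters $\epsilon'=\epsilon/(\decom d)$ and $\delta'=\delta/(\decom d)-m\exp(-(n-\decom-t')/(8m))$, the required per-coordinate noise variance is $2(\eta\sqrt{\decom})^2\ln(1.25/\delta')/(\epsilon')^2=2\eta^2\decom^3 d^2\ln(1.25/\delta')/\epsilon^2$; one then checks that the variance $\tfrac{n-\decom-t'}{2m}\sigma^2$ of $G$, with the $\sigma^2$ stated in the corollary, simplifies to $2\eta^2\decom^3 d^2\ln\!\big(1.25\decom/(\delta/d-m\exp(-(n-\decom-t')/(8m)))\big)/\epsilon^2$, which meets the requirement since the two logarithms agree up to the $m\exp(-(n-\decom-t')/(8m))$ correction --- precisely the slack already budgeted inside Theorem~\ref{thm:privacynew_general}, and negligible outside the trivial regime in which Eq.~\ref{eq:R} outputs $0^{d\decom}$ and privacy is immediate. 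Invoking Theorem~\ref{thm:privacynew_general} then yields that $(R_D,T_\mS,A)$ is $(\epsilon,\delta)$-differentially private with $t'$ corrupt clients. I do not expect a conceptual obstacle here, since Theorem~\ref{thm:privacynew_general} does the real work; the only care needed is in the bookkeeping --- in particular correctly picking up the $\sqrt{\decom}$ factor in the sensitivity (from the $\decom$ nonzeros per column) and combining it with the $(\decom d)^2$ coming from composition to land on the $\decom^3 d^2$ in the statement, and matching the argument of the logarithm to the additive slack of the theorem.
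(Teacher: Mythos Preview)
Your proposal is correct and follows essentially the same route as the paper's own proof: invoke Theorem~\ref{thm:privacynew_general}, use symmetry and infinite divisibility of Gaussians to identify $D'=\mathcal{N}\big(0,\tfrac{n-\decom-t'}{2m}\sigma^2\big)$, compute the $\ell_2$ sensitivity of one column of $\mS\mdata$, and apply Lemma~\ref{lemma:gaussian_mech}. The only discrepancy is the sensitivity constant: the paper takes $\Delta_2=2\eta\sqrt{\decom}$ (replacement of a row whose entries lie in $[-\eta,\eta]$ can change each affected coordinate by $2\eta$), whereas you write $\eta\sqrt{\decom}$; your value happens to match the constant $4$ in the stated $\sigma^2$ more cleanly, so this is just bookkeeping and does not affect the argument.
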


\begin{proof}

    We apply Theorem~\ref{thm:privacynew_general}, setting $D$ to a Gaussian distribution with the variance specified in the corollary. First note that zero-centered Gaussians are symmetric around the origin. Due to infinite divisibility of Gaussians, $\sum_{i\in[(n-s-t')/(2m)]}g_i=G$ when $\mathbf{g}_i\sim D$ is Gaussian distributed with variance \\ $\sum_{i\in[(n-s-t')/(2m)]}\sigma^2 = (n-s-t')\sigma^2/(2m)$

    Notice that the $\ell_2$-sensitivity of algorithm $T_\mS$ is $\Delta_2T_\mS=\max_{\mdata,\mdata'}\lVert\mS\mdata-\mS\mdata'\rVert_2=2\eta\sqrt{\decom}$, where $\mdata,\mdata'\in\mR^{n\times d}$ differ in a single row, and every entry of $\mdata,\mdata'$ is bounded above by $\eta$.
        
    Based on the Gaussian mechanism (Lemma \ref{lemma:gaussian_mech}), the additive noise mechanism adding Gaussian noise satisfies $(\epsilon/(\decom d), \delta/(\decom d) - m\exp( - (n-\decom -t') / 8m )) )$-DP for function $f(\mdata)=(\sketch\mdata)_j+\mathbf{n}$ with $\mathbf{n}\sim \text{Lap}(0,2\eta m^2 d/\epsilon)^m$ on input $\mdata\in\mathbb{R}^{n\times d}$, where the subscript refers to one column of $\mS\mdata$.

\end{proof}

If we would like to obtain a result for mechanisms with $\delta=0$, we can instead use $R^{\text{dense}}_D$ and use a dense sketching matrix $\sketch\in\{-1,1\}^{m\times n}$.

\begin{theorem}\label{thm:privacynew_general_dense}
Let $\epsilon \geq 0$, $\delta \in [0,1)$, $m\in[n]$, $t'<n$, $\decom\in [m]$. 
            Let $\mS\sim \{-1,1\}^{m
            \times n}$ be a dense, uniformly sampled matrix.
            Also let $R$ add noise $g_i$ sampled from a symmetric distribution $D$ such that $\sum_{i\in[n/m-t']}g_i=G$. $G$ follows a distribution $D'$ that defines an additive noise mechanism that satisfies $(\epsilon/(m d), \delta/(m d) ) $-DP for function $f(\mdata)=(\sketch\mdata)_j+\mathbf{n}$ for $\mathbf{n}\sim D'^m$ on input $\mdata\in\mathbb{R}^{n\times d}$, where the subscript refers to one column of $\sketch\mdata$. 
            Then as long as input values are bounded above by $\eta$, dense sketching mechanism $(R_D^{\text{dense}},T_\mS^{\text{dense}},A)$, as defined by equations~\ref{eq:Rdense} and~\ref{eq:Ts} and for any $A$, is $(\epsilon,\delta)$-differentially private in the LTM for differential privacy with $t'$ corrupt clients.
\end{theorem}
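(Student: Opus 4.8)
The plan is to follow essentially the same route as the proof of Theorem~\ref{thm:privacynew_general}, replacing the randomized sparse decomposition by the \emph{exactly balanced} decomposition available for a dense sketch matrix. This is precisely what removes the Chernoff/union-bound slack term $m\exp(-(n-\decom-t')/8m)$, so that here the hypothesis only needs $(\epsilon/(md),\delta/(md))$-differential privacy of the additive mechanism rather than $(\epsilon/(\decom d),\delta/(\decom d)-m\exp(\cdot))$-differential privacy.

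First I would unpack the mechanism. By equations~\ref{eq:Rdense} and~\ref{eq:Ts}, $\mathcal{M}(\mdata)=T^{\text{dense}}_\mS(R^{\text{dense}}_D(\data_1),\dots,R^{\text{dense}}_D(\data_n))=\frac{1}{\sqrt m}\sum_{i\in[m]}\mS_i(\mdata+\mnoise_i)$, where $\data_k$ is the $k$-th row of $\mdata$, the matrices $\mnoise_i\leftarrow D^{n\times d}$ are independent, and $\mS=\frac{1}{\sqrt m}\sum_{i\in[m]}\mS_i$ is the decomposition from the definition of $T^{\text{dense}}_\mS$ into $m$ matrices, each with exactly one non-zero entry per column and exactly $n/m$ non-zero entries per row. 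Such a decomposition exists, e.g. by placing the non-zero of $\mS_i$ in column $k$ at row $((i+k)\bmod m)+1$; since the mechanism assumes $m\mid n$, the per-row counts are then deterministically $n/m$, with no randomness to concentrate. As in Lemma~\ref{lem:mpcpriv}, the corrupt clients' randomizer outputs $R^{\text{dense}}_D(\data_k)$ for $k\in\corruptc$ depend only on the fixed corrupt inputs and on fresh independent noise, so it suffices to prove $(\epsilon,\delta)$-differential privacy of $\mathcal{M}(\mdata)$ for neighboring $\mdata,\mdata'$ differing in an honest row; the full view is then a post-processing of this.

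The core step is a dense analogue of Lemma~\ref{thm:privacy_inf_div}: each summand $\mS_i(\mdata+\mnoise_i)$ is $(\epsilon/m,\delta/m)$-differentially private. As there, I reduce to $d=1$ by sequential composition over the $d$ columns of $\mdata$-space, so it remains to show $(\epsilon/(md),\delta/(md))$-differential privacy of $\mathbf{a}\mapsto\mS_i(\mathbf{a}+\mathbf{g})$, $\mathbf{a}\in\mR^n$, $\mathbf{g}\leftarrow D^n$. Since $\mS_i$ has a single non-zero per column its columns are orthogonal, so the $m$ output coordinates depend on pairwise-disjoint sets of clients, the privacy of each coordinate can be analyzed separately, and neighbors differing in one honest client affect only the single output coordinate whose support contains that client. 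The only place the argument differs from the sparse case: this support contains \emph{exactly} $n/m$ clients, hence at least $n/m-t'$ honest ones, with no failure event. Using that $D$ is symmetric (so the $\pm1$ entries of $\mS_i$ leave the distribution of the noise sum unchanged) and that $D'$ is infinitely divisible (so the sum of $n/m-t'$ of the honest noise terms is distributed as $G\sim D'$), the noise on the affected coordinate equals a $G$-sample plus an independent remainder. Since, by the hypothesis on $G$, adding $G\sim D'$ already gives an $(\epsilon/(md),\delta/(md))$-differentially private additive-noise mechanism and the affected coordinate has no larger sensitivity, post-processing closure of differential privacy yields the $d=1$ claim, and composition over the $d$ columns yields the claim for $\mS_i$.

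Finally, sequential composition (Lemma~\ref{lemma:composition}) over $i\in[m]$ shows that $(\mS_i(\mdata+\mnoise_i))_{i\in[m]}$ is $(\epsilon,\delta)$-differentially private, and since $\mathcal{M}(\mdata)$ is a deterministic function of this tuple, post-processing gives $(\epsilon,\delta)$-differential privacy of $\mathcal{M}$, hence of the dense sketching mechanism in the trusted computation model with $t'$ corrupt clients. I expect the only real obstacle to be bookkeeping: writing the balanced decomposition precisely and checking that the exact-$n/m$-per-row property is exactly what eliminates the $m\exp(\cdot)$ term; the probabilistic content (symmetry, infinite divisibility, disjointness of supports) is identical to Theorem~\ref{thm:privacynew_general}.
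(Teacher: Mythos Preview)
Your proposal is correct and follows essentially the same approach as the paper: prove a dense analogue of Lemma~\ref{thm:privacy_inf_div} for a single $\mS_i$ with one non-zero per column and exactly $n/m$ per row (so the $n/m-t'$ honest-client count is deterministic, eliminating the Chernoff slack), then compose over the $d$ columns and the $m$ summands, and finish by post-processing. The paper structures this identically, first stating and proving the single-$\mS_i$ lemma and then applying sequential composition and post-processing; your explicit construction of a balanced decomposition is a minor addition the paper leaves implicit.
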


\begin{proof}
We again begin by proving a simplified version of the theorem first.    
\begin{lemma}\label{thm:privacy_inf_div_dense}
            Let $\epsilon \geq 0$, $\delta \in [0,1)$, $m\in[n]$, $t'<n$. 
            Let $\mS\in\{-1,0,1\}^{m\times n}$ with one non-zero entry per column and $n/m$ non-zero entries per row.
            Also let noise $g_i$ be sampled from a symmetric distribution $D$ such that $\sum_{i\in[(n)/m-t']}g_i=G$. $G$ follows a distribution $D'$ that defines an additive noise mechanism that satisfies $(\epsilon/d, \delta/d )$-DP for function $f(\mdata)=(\sketch\mdata)_j+\mathbf{n}$ with $\mathbf{n}\sim D'^m$ on input $\mdata\in\mathbb{R}^{n\times d}$, where the subscript refers to one column of $\mS\mdata$. 
            Then as long as input values are bounded above by $\eta$, tuple of algorithms $(R^{\text{dense}}_{D},T_\mS,A)$, as defined by equations~\ref{eq:Rdense} and~\ref{eq:Ts} and for any $A$, is $(\epsilon,\delta)$-differentially private in the trusted computation model for differential privacy with $t'$.
        \end{lemma}

\begin{proof}[Proof of Lemma~\ref{thm:privacy_inf_div_dense}]
    Let $\epsilon$, $\delta$, $R^{\text{dense}}_D$ and $T_\mS$ be given as described in the theorem. 
    
    We show that $\mathcal{M}(\mdata) = T_\mS(R(\data_1), \dots , R(\data_n))=\mS(\mdata+\mnoise)$ where $\data_i$ denotes row $i$ of $\mdata\in\mR^{n\times d}$ and $\mnoise\leftarrow D^{n\times d}$, is $(\epsilon,\delta)$ differentially private.

    Note that $\sketch_i(\mdata+\mnoise)$ is a d-dimensional vector, when $\sketch_i$ refers to the $i$-th row of $\sketch$.
    If $(\sketch_i(\mdata+\mnoise))_j$ is $(\epsilon/d,\delta/d)$-DP, then sequential composition (Lemma \ref{lemma:composition}) gives that $\mathcal{M}$ is $(\epsilon,\delta)$ differentially private.

    Consider $d=1$ and let mechanism $\mathcal{M}_1:\mathbb{R}^n \to \mathbb{R}^m$ be defined as $\mathcal{M}_1(\mathbf{a}) = \mathbf{S} (\mathbf{a} + \mathbf{g})$ for $\mathbf{a}\in\mathbb{R}^n$, with $\mathbf{g}\leftarrow D^n$. 
    
    $\mathbf{S}$ is sampled such that it has only one non-zero entry per column; thus, the columns of $\mS$ are orthogonal. This means that the entries of $(\mathbf{S}(\mathbf{a} + \mathbf{g}))_j$ and $(\mathbf{S}(\mathbf{a} + \mathbf{g}))_{j'}$ for $j\neq j'$, have disjoint support. Therefore, the privacy guarantee can be analyzed independently for each entry of $\mathbf{S} (\mathbf{a} + \mathbf{g})$. 

    Note that the support of $\sketch(a+g)_j$ has at least $\frac{n}{m}-t'$ uncorrupted clients, whose set of indices we denote $N$. 
    We can formulate $\mathbf{S} (\mathbf{a} + \mathbf{g})_j=(\mathbf{S} \mathbf{a})_j + (\mathbf{S}\mathbf{g})_j = (\mathbf{S} \mathbf{a})_j + \sum_{k\in[n]}\mathbf{S}_{j,k}\mathbf{g}_k = (\mathbf{S} \mathbf{a})_j + \sum_{k\in N}\mathbf{S}_{j,k}\mathbf{g}_k + \sum_{k\in [n]\setminus N}\mathbf{S}_{j,k}\mathbf{g}_k$.

     Due to the fact that $D$ is symmetric, $-1\cdot\mathbf{g}_k$ follows the same distribution as $\mathbf{g}_k$, and $\sum_{k\in N}\mathbf{S}_{j,k}\mathbf{g}_k$ follows the same distribution as $\sum_{k\in N}\mathbf{g}_k$. Due to the infinite divisibility of $D'$, the sum of the first $\frac{n}{m}-t'$ terms of $\sum_{k\in N}\mathbf{g}_k$ are samples from $D'$.
     
    Then $\mathcal{M}_1$ is $(\epsilon, \delta)$-differentially private.
\end{proof}

    Let $\epsilon$, $\delta$, $R_D^{\text{dense}}$ and $T_\mS^{\text{dense}}$ be given as described in the theorem. 
      
    We know $\mathcal{M}(\mdata) = T_\mS(R_D(\data_1), \dots , R_D(\data_n))=\frac{1}{\sqrt{m}}\sum_{i\in[m]}\mS_i(\mdata+\mnoise_i)$ where $\mnoise_i\leftarrow D^{n\times d}$. Recall from Equation~\ref{eq:Ts} that $\sketch$ has been decomposed such that $\sketch = \frac{1}{\sqrt{m}}\sum_i^m \sketch_i$, where $\sketch_i\in\{-1,0,1\}^{m\times n}$ is a matrix with only one non-zero entry per column and $n/m$ non-zero entries per row. 
    
    We argue $(\epsilon/\decom,\delta/\decom)$-DP based on Lemma~\ref{thm:privacy_inf_div_dense}.

    By sequential composition, releasing $\sketch_i(\mdata+\mnoise_i)$ for all $i\in[\decom]$ satisfies $(\epsilon,\delta)$-DP, and thus by post-processing, the scaled sum of these is also $(\epsilon,\delta)$-DP.

\end{proof}

To instantiate this second theorem, $D$ can be chosen as the distribution formulated as the difference of two samples of the gamma distribution.

\begin{corollary}\label{thm:privlap}
    Let $\epsilon \geq 0$, $t'<n$, $m\in[n]$, $\decom\in [n]$, and let $D$ be the distribution formulated as the difference of two samples of gamma distribution $\Gamma(\frac{1}{n/m-t'},b)$ with parameter  
    \begin{align*}
        b = 2\eta m^2 d/\epsilon
    \end{align*}

    Let $\mS\sim \{-1,1\}^{m\times n}$ sampled uniformly.
    Then as long as input values are bounded above by $\eta$, $(R_D^{\text{dense}},T_\mS,A)$ is $\epsilon$-differentially private in the trusted computation model for differential privacy with $t'$ corrupt clients $C'$.
    
\end{corollary}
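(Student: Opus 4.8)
The plan is to obtain the corollary as a direct instantiation of Theorem~\ref{thm:privacynew_general_dense}, taking $\delta = 0$ and letting $D$ be the law of $X - Y$ for independent $X, Y \sim \Gamma(\tfrac{1}{n/m-t'}, b)$ with $b = 2\eta m^2 d/\epsilon$. Concretely, I would check the three hypotheses of that theorem for this choice of $D$: that $D$ is symmetric, that the aggregated noise $G = \sum_{i\in[n/m-t']} g_i$ (for $g_i \sim D$ independent) has a distribution $D'$ of the required infinitely divisible form, and that the additive-noise mechanism induced by $D'$ is $(\epsilon/(md), 0)$-differentially private for the per-column map $f(\mdata) = (\sketch\mdata)_j + \mathbf{n}$. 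The concentration bookkeeping (the $n/m - t'$ honest contributors per released coordinate, with no failure probability, which is why $\delta$ can be $0$ in the dense case) is already internal to Theorem~\ref{thm:privacynew_general_dense}, so nothing further is needed there.

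Symmetry is immediate, since the difference of two i.i.d.\ random variables has the same law as its negation. For the aggregation step I would use the convolution property of the Gamma family: summing $n/m - t'$ independent copies of $\Gamma(\tfrac{1}{n/m-t'}, b)$ adds the shape parameters while leaving the common scale unchanged, giving $\Gamma(1, b)$, i.e.\ an exponential of scale $b$. Writing each $g_i$ as a difference of two independent $\Gamma(\tfrac{1}{n/m-t'}, b)$ variables, it follows that $G$ is the difference of two independent scale-$b$ exponentials, which is exactly the zero-centred Laplace distribution $\mathrm{Lap}(0, b)$; hence $D' = \mathrm{Lap}(0, b)$. (As in Theorem~\ref{thm:privacynew_general_dense}, this presupposes that $n/m - t'$ is a positive integer, i.e.\ $n/m > t'$.)

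It then remains to verify that the Laplace additive-noise mechanism with scale $b$ satisfies $(\epsilon/(md), 0)$-DP for $f$. A neighbouring change to $\mdata$ alters a single row, hence a single coordinate of the $j$-th column of $\mdata$, by magnitude at most $2\eta$; since every entry of the dense sketch $\sketch \in \{-1,1\}^{m\times n}$ has magnitude $1$, this perturbs each of the $m$ coordinates of $(\sketch\mdata)_j$ by at most $2\eta$, so the $\ell_1$-sensitivity of $f$ is at most $2\eta m$. By the Laplace mechanism (Lemma~\ref{lemma:lap_mech}) applied with per-column budget $\epsilon/(md)$, adding independent $\mathrm{Lap}\bigl(0,\, 2\eta m/(\epsilon/(md))\bigr) = \mathrm{Lap}(0, 2\eta m^2 d/\epsilon) = D'$ noise to each of the $m$ coordinates is $(\epsilon/(md), 0)$-DP, which is precisely the hypothesis of Theorem~\ref{thm:privacynew_general_dense} with $\delta = 0$. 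Invoking that theorem yields that $(R_D^{\mathrm{dense}}, T_\mS, A)$ is $(\epsilon, 0)$-, i.e.\ $\epsilon$-, differentially private in the trusted computation model with $t'$ corrupt clients.

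The only genuinely non-routine parts are two bookkeeping points. First, getting the infinite-divisibility chain exactly right, so that $n/m - t'$ copies of $\Gamma(\tfrac{1}{n/m-t'}, b)$ reassemble into an exponential and thus a Laplace of scale \emph{exactly} $b$ (the precise scale is what makes the privacy budget come out to $\epsilon/(md)$ per column). Second, correctly tracking the extra factor of $m$ in the $\ell_1$-sensitivity: because the dense $\pm 1$ sketch spreads a single client's change across all $m$ released coordinates of the column rather than a single one, the sensitivity is $2\eta m$ and not $2\eta$, which is exactly why the scale in the statement is $2\eta m^2 d/\epsilon$ rather than $2\eta m d/\epsilon$.
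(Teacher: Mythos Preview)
Your proposal is correct and follows essentially the same route as the paper: both instantiate Theorem~\ref{thm:privacynew_general_dense} with $\delta=0$, verify symmetry of $D$, use the Gamma convolution/infinite-divisibility argument to identify $D'$ as $\mathrm{Lap}(0,b)$ with $b=2\eta m^2 d/\epsilon$, compute the $\ell_1$-sensitivity of the per-column map as $2\eta m$, and invoke the Laplace mechanism (Lemma~\ref{lemma:lap_mech}) at budget $\epsilon/(md)$. Your write-up is in fact a bit more explicit than the paper's about why the aggregated noise is exactly Laplace and why the dense sketch contributes the extra factor of $m$ in the sensitivity.
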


\begin{proof}

    We apply Theorem~\ref{thm:privacynew_general_dense}, setting $D$ to the difference between two samples of the gamma distribution $\Gamma(\frac{1}{n/m-t'},2\eta m^2 d/\epsilon)$. Note that the resulting distribution is centered at 0 and symmetric. Due to infinite divisibility of Laplace distributions, $\sum_{i\in[n/m-t']}g_i=G$ when $\mathbf{g}_i\sim D$ is Laplace distributed with mean zero and scale parameter $b = 2\eta m^2 d/\epsilon$.

    Notice that the $\ell_1$-sensitivity of algorithm $T_\mS$ is \\ $\Delta_1T_\mS=\max_{\mdata,\mdata'}\lVert\mS\mdata_j-\mS\mdata'_j\rVert_1=2\eta m $, where $\mdata,\mdata'\in\mR^{n\times d}$ differ in a single row, and every entry of $\mdata,\mdata'$ is bounded above by $\eta$.

    Based on the Laplace mechanism (Lemma \ref{lemma:lap_mech}),   the additive noise mechanism adding Laplace noise satisfies $\epsilon/(dm)$-DP for function $f(\mdata)=(\sketch\mdata)_j+\mathbf{n}$ with $\mathbf{n}\sim \text{Lap}(0,2\eta m^2 d/\epsilon)^m$ on input $\mdata\in\mathbb{R}^{n\times d}$, where the subscript refers to one column of $\mS\mdata$.

\end{proof}

Corollary~\ref{cor:mpcpriv} below follows directly from Theorem~\ref{thm:privacynew_general} and Lemma~\ref{lem:mpcpriv}.
\begin{corollary}\label{cor:mpcpriv}
    Let $(R_D,T_\mS,A)$ be the tuple of algorithms above. If $T_\mS$ is computed correctly using an MPC protocol $\Pi_\mS$ with perfect security tolerating $t$ semi-honest corruptions
    , then the tuple of algorithms $(R_D,\Pi_\mS,A)$ is $(\epsilon,\delta)$-differentially private with $t'$ corrupt clients.
\end{corollary}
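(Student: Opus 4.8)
The plan is to obtain the statement by directly composing Theorem~\ref{thm:privacynew_general} with Lemma~\ref{lem:mpcpriv}, with no new calculation required. First I would invoke Theorem~\ref{thm:privacynew_general}, instantiated with the symmetric, infinitely divisible noise distribution $D$ that is fixed for the tuple $(R_D,T_\mS,A)$ (concretely the Gaussian choice of Corollary~\ref{thm:privacynew} for $(\epsilon,\delta)$-DP, or the Laplace/difference-of-Gammas choice of Corollary~\ref{thm:privlap} for the $\delta=0$ variant). That theorem already establishes that the \emph{idealized} tuple $(R_D,T_\mS,A)$ — in which the public linear transformation $T_\mS$ is carried out by a trusted component — is $(\epsilon,\delta)$-differentially private in the trusted computation model of Definition~\ref{def:trustDP} against any coalition of $t'$ corrupt clients. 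This is precisely the first hypothesis ``$P=(R,T,A)$ is $(\epsilon,\delta)$-differentially private with $t'$ corrupt clients'' required by Lemma~\ref{lem:mpcpriv}.

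Second, by assumption $\Pi_\mS$ is a perfectly secure $k$-party MPC protocol that computes $T_\mS$ correctly while tolerating $t$ semi-honest server corruptions, which is exactly the second hypothesis of Lemma~\ref{lem:mpcpriv}. Applying that lemma to $(R_D,T_\mS,A)$ and $\Pi_\mS$ yields that $(R_D,\Pi_\mS,A)$ is $(\epsilon,\delta)$-differentially private with $t'$ corrupt clients, which is the claim. So the corollary is a pure bookkeeping composition of two previously established results.

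The only point that genuinely requires care — and the step I would expect to be the main obstacle if one proved the statement from first principles rather than citing Lemma~\ref{lem:mpcpriv} — is showing that replacing the trusted evaluation of $T_\mS$ by its MPC realization $\Pi_\mS$ does not degrade privacy even when the adversary simultaneously controls up to $t$ servers \emph{and} up to $t'$ clients. This is handled by the simulation paradigm already used in the proof of Lemma~\ref{lem:mpcpriv}: perfect security of $\Pi_\mS$ gives a simulator $\algofont{Sim}$ that reconstructs the joint view of the corrupt servers and clients from only the corrupted servers' inputs (their shares) together with the function output $T_\mS(R_D(x_1),\dots,R_D(x_n))$; crucially, under the linear secret-sharing scheme the shares held by fewer than $k$ servers are distributed independently of all clients' inputs, so they can be sampled uniformly and contribute no extra leakage. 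Chaining the simulator's indistinguishability with the $(\epsilon,\delta)$-DP of the idealized view on neighboring datasets (the displayed inequality chain in the proof of Lemma~\ref{lem:mpcpriv}) then closes the argument. Beyond verifying that the corruption parameters line up — $t'$ clients for the DP side, $t$ servers for the MPC side — nothing further is needed.
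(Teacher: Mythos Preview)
Your proposal is correct and matches the paper's own justification exactly: the paper states that the corollary ``follows directly from Theorem~\ref{thm:privacynew_general} and Lemma~\ref{lem:mpcpriv},'' which is precisely the two-step composition you describe. Your additional paragraph unpacking the simulation argument behind Lemma~\ref{lem:mpcpriv} is accurate supplementary detail but not required for the corollary itself.
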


\subsection{Cryptographic Assumptions and Relations to the Shuffle Model}

The LTM can be securely distributed using simple cryptographic techniques for \emph{secure multiparty computation} (MPC), such as \emph{linear secret-sharing} (LSS). MPC ensures that no adversary corrupting all but one server learns more than the computation’s output. \cite{EC:DKMMN06} first explored combining differential privacy and MPC, focusing on distributed noise generation requiring server interaction. \cite{cheu_et_al:LIPIcs.ITCS.2023.36} establish lower bounds for non-interactive multi-server mechanisms, while \cite{WWW:keller} propose an interactive MPC protocol for selection in distributed trust models.

The linear transforms can be non-interactively computed locally by servers and are also known to all participating parties, that is, once initiated, the output of the protocol is fully deterministic.
Linear transforms can be computed non-interactively by servers and are publicly known, making the protocol fully deterministic once initiated. This enables implementing the LTM with multiple central servers under the minimal assumption that at least one server remains honest. Using LSS has key advantages: clients send a single message per server proportional to the input size and sparsity $s$ of the transformation matrix $\sketch$, servers communicate only with the data analyst, and the total workload scales only with the number of servers and matrix sparsity. Since LSS supports information-theoretic security, our system remains secure even against quantum adversaries.

We require this secret sharing scheme to be information-theoretically secure. We use an additive secret sharing scheme, such that clients $C_1,\dots,C_n$ generate $k$ shares of their noisy inputs $x^1,\dots,x^n$, where $k$ is the number of servers that compute the linear transformation. 
We use $[x^i]$ to denote a secret sharing of some client $C_i$'s noisy input $x^i$ for some $i\in[n]$. For some field $\mathbb{F}$ of size $p$ and some prime number $p$, $[x^i]$ consists of shares $x^i_1,\dots,x^i_k\in\mathbb
F$ such that $\sum_{j=1}^{k}x_j^i=x^i$. To split a secret into $k$ shares, a client can sample $k-1$ random field elements $x^i_1,\dots,x^i_{k-1}$ and compute the last secret share as $x^i-\sum_{j=1}^{k-1}x_j^i$. Therefore, it also seems intuitive that an adversary that sees all but one of the shares knows nothing about the input. For every $j\in[k]$, server $S_j$ receives $\{x_j^i\}_{i\in n}$ from the respective parties. We define these shares such that the security of our distributed protocol does not rely on any computational assumption.

Since the linear transformation is public, servers can apply the transformation locally to their secret shares without any need to communicate with each other. Each server then reveals their shares of the resulting linear transformation, such that the linear sketch can be revealed in the clear. For more details on MPC based on secret sharing, see e.g.,~\cite{Cramer_Damgard_Nielsen_2015}.

Secure aggregation and the shuffle model are two instantiations of intermediate trust models for differential privacy. Secure aggregation~\cite{goryczka_comprehensive_2017,CCS:BIKMMP17,mugunthan2019smpai,apple_exposure_2021,talwar2023samplable} is a special case of the LTM, where the linear transformation applied is a sum, which is useful in federated learning. 
The shuffle model~\cite{bittau_prochlo_2017,erlingsson_amplification_2019,Cheu_2019} can be distributed among multiple servers using cryptographic protocols such as mixnets. 

However, implementing shuffles is more complex than linear transformations, requiring servers to communicate over a chain (increasing latency), computationally intensive public-key cryptography, which in turns requires computational assumptions. In contrast, LSS enables efficient, unconditional security and deterministic computation.\footnote{The linear transformations we use are random but public and sampled before the computation.}

\section{Numerical Linear Algebra in LTM}
\label{sec:NumLin}

In this section we give the utility guarantees of our mechanism. 
We start with giving generic distortion bounds relating the spectrum of the noisy matrix to the spectrum of the original non-private matrix. 
Parameterizations of this mechanism depend on the underlying class of subspace embeddings. In this utility analysis, we limit outselves to OSNAPs with $m$ and $\decom$ chosen according to~\cite{Cohen16}; other trade-offs are possible and can be found in Section~\ref{sec:prelims}. 
Applications to specific problems such as regression and low rank approximation are given towards the end of the section.

Here we present the formal utility statements for ridge regression and low rank approximation.
We begin by introducing a useful helper lemma.

\begin{lemma}
\label{lem:nicejakob}
Let $\mnoise \in \R^{n \times d}$ such that all entries in $\mnoise$ are independently sampled from a normal distribution with mean $0$ and variance at most $\sigma^2$. Further, let $V$ be a set of $d$-dimensional vectors lying in a $k$-dimensional subspace. Then with probability at least $1-\beta$ for some absolute constant $\eta$
\begin{align*}
  &\sup_{\x\in V}\|\mnoise\x\|_2^2
  \leq n\cdot \sigma^2\cdot \|\x\|^2 + \eta\cdot \sigma^2\cdot \|\x\|^2 \\
  & \cdot (\sqrt{(k+\log 1/\beta)\cdot n}+(k+\log 1/\beta)).
\end{align*}
Moreover, with probability at least $1-\beta$
\begin{align*}
  &\|\mnoise\|_F^2 \leq \eta\cdot \sigma^2\cdot n\cdot d \cdot \log 1/\beta. 
\end{align*}
\end{lemma}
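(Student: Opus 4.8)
## Proof Strategy

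The plan is to treat the two claims separately, since the second (the Frobenius norm bound) is a special case of the kind of concentration used for the first but for a single fixed subspace (the whole space spanned by all $d$ coordinates restricted appropriately) and is actually easier. I would begin with the second statement: $\|\mnoise\|_F^2 = \sum_{i=1}^{n}\sum_{j=1}^{d} \mnoise_{i,j}^2$ is a sum of $nd$ independent sub-exponential random variables, each the square of a centered Gaussian of variance at most $\sigma^2$. Its expectation is at most $\sigma^2 n d$, and by a standard Bernstein-type / sub-exponential tail bound (or simply a Chernoff bound on a chi-squared-like sum, exactly in the spirit of Lemma~\ref{lem:nonzero}), we get $\|\mnoise\|_F^2 \le \sigma^2 n d + O(\sigma^2\sqrt{nd\log(1/\beta)} + \sigma^2\log(1/\beta))$ with probability $1-\beta$; absorbing the lower-order terms into a constant $\eta$ and using $nd \ge 1$ gives the stated $\eta\sigma^2 n d\log(1/\beta)$ form (somewhat lossily, but that is fine for a utility bound).

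For the first statement I would reduce to a bound on the operator norm of $\mnoise$ restricted to the $k$-dimensional subspace. Let $\Pi \in \R^{d\times d}$ be the orthogonal projector onto the $k$-dimensional subspace containing $V$, and write $\x = \Pi\x$ for $\x \in V$. Then $\sup_{\x\in V}\|\mnoise\x\|_2^2/\|\x\|^2 \le \|\mnoise\Pi\|_{\mathrm{op}}^2$, and $\mnoise\Pi$ has the same distribution as $\mnoise' U^T$ where $U\in\R^{d\times k}$ is an orthonormal basis of the subspace and $\mnoise'\in\R^{n\times k}$ has independent entries of variance at most $\sigma^2$ — more carefully, since the entries of $\mnoise$ need not be identically distributed, I would instead bound $\|\mnoise\Pi\|_{\mathrm{op}} \le \sigma\|\mnoise_{\mathrm{std}}\Pi\|_{\mathrm{op}}$ in distribution by a coupling/domination argument where $\mnoise_{\mathrm{std}}$ has i.i.d.\ $N(0,1)$ entries (rescaling each entry up to variance exactly $\sigma^2$ only increases the norm stochastically). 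Then $\mnoise_{\mathrm{std}}\Pi$ restricted to the column space is a Gaussian matrix of size effectively $n\times k$, and the standard non-asymptotic bound on the largest singular value of a Gaussian matrix (Vershynin-style: with probability $1-\beta$, $s_{\max}(G) \le \sqrt{n} + \sqrt{k} + \sqrt{2\log(1/\beta)}$ for $G$ an $n\times k$ standard Gaussian matrix) gives $\|\mnoise_{\mathrm{std}}\Pi\|_{\mathrm{op}}^2 \le n + O(\sqrt{(k+\log(1/\beta))n} + (k+\log(1/\beta)))$ after expanding the square. Multiplying by $\sigma^2\|\x\|^2$ and collecting terms into the constant $\eta$ yields exactly the claimed inequality.

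The main obstacle is handling the non-identically-distributed entries cleanly: the off-the-shelf singular value concentration results assume i.i.d.\ Gaussian entries, so I need the stochastic-domination step to be rigorous. The cleanest route is to write $\mnoise_{i,j} = \sigma_{i,j} Z_{i,j}$ with $Z_{i,j}\sim N(0,1)$ i.i.d.\ and $\sigma_{i,j}\le\sigma$, and observe that for any fixed unit vector $\x$ in the subspace, $\|\mnoise\x\|_2^2 = \sum_i (\sum_j \sigma_{i,j} Z_{i,j} \x_j)^2$ is coordinatewise dominated — but the supremum over $\x$ does not factor through a single coupling, so instead I would upper bound $\|\mnoise\Pi\|_{\mathrm{op}}^2 \le \|\mnoise\Pi\|_F^2$ only as a fallback and prefer the net argument: discretize the unit sphere of the $k$-dimensional subspace with a $1/2$-net of size $5^k$, for each net point apply a one-dimensional Gaussian tail bound to $\|\mnoise\x\|_2^2$ (which, by $\sigma_{i,j}\le\sigma$ and rotation invariance, is stochastically dominated by $\sigma^2\chi^2_n$), union bound over the net, and pay the usual factor-$2$ loss to pass from the net to the full supremum. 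This net-plus-Bernstein approach is self-contained, avoids citing a black-box random-matrix theorem, mirrors the Chernoff/union-bound style already used in the paper, and naturally produces the $\sqrt{(k+\log(1/\beta))n} + (k+\log(1/\beta))$ shape; I would flag this as the step requiring the most care and present the remaining algebra as routine.
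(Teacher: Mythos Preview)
Your proposal is correct and follows essentially the same approach as the paper: for the first claim, the paper also reduces to unit vectors, takes an $\varepsilon$-net of the $k$-dimensional sphere (they use $\gamma=1/4$), applies Laurent--Massart chi-squared concentration at each net point with a union bound, and then passes from the net to the full supremum via an operator-norm argument (they cite the Arora--Hazan--Kale trick rather than the ``factor-$2$ loss'' phrasing, but it is the same standard step). For the second claim the paper likewise appeals directly to chi-squared-type concentration. If anything, you are slightly more careful than the paper about the ``variance at most $\sigma^2$'' hypothesis: the paper's proof silently treats the entries as having variance exactly $\sigma^2$, whereas your stochastic-domination remark addresses this explicitly.
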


\begin{proof}
    Denote by $\mnoise_i$ the $i$th row of $\mnoise$. We start by observing that $\mnoise_i^T \x$ is Gaussian distributed with mean $0$ and variance $\sigma^2\cdot \|\x\|^2$. Thus, we focus our attention on controlling $\frac{\|\mnoise\x\|^2}{\sigma^2\cdot \|\x\|^2}$, that is, we assume that $\x$ is a unit vector and that the entries of $\mnoise$ are standard normal Gaussian random variables. The overall claim then follows by rescaling.
    Consider an $\varepsilon$-net $N_{\varepsilon}$ of $V$, that is, for every $\x\in V$ with unit norm there exists a vector $\x'\in N_{\gamma}$ with $\|\x-\x'\|\leq \varepsilon$. Such nets exist with $|N_{\gamma}|\leq \exp(\eta\cdot k\log 1/\gamma)$ for some absolute constant $\eta$, see \cite{Pis99}. Suppose $\gamma = 1/4$. 
    Using concentration bounds for sums of Gaussians (see for example Lemma 1 of \cite{LaurentM00}), we then have for any $\|x'\|$ with probability $1-\beta$
    \begin{align*}
      &\mathbb{P}\big[\exists \x'\in N_{\gamma}:\|\mnoise \x'\|^2 -\mathbb{E}[\|\mnoise \x'\|^2] \\
      & \geq 2 (\sqrt{(\log N_{\gamma}+\log 1/\beta)\cdot n}+(\log N_{\gamma}+\log 1/\beta))\big] \\
      \leq & |N_{\gamma}|\cdot \exp(-(\eta\cdot k+\log 1/\beta))  \leq \beta.
    \end{align*}    
    for some absolute constant $\eta$.

    We now extend this argument to all vectors, using an argument from \cite{AroraHK06} (Lemma 4 of that reference).
    Let $\mathbf{U}$ be an orthogonal basis of $V$. Then our goal is to control 
    $\|\mnoise\x\|^2 = \x^T\mathbf{U}^T\mnoise^T\mnoise\mathbf{U} x$. Define the matrix $\mathbf{B}:=\mathbf{U}^T\mnoise^T\mnoise\mathbf{U}-n\cdot\mathbf{I}$, where $\mathbf{I}$ is the identity matrix. Note that $\mathbb{E}[\mathbf{U}^T\mnoise^T\mnoise\mathbf{U}] = n\cdot \mathbf{I}$ and that $\|\mathbf{B}\x\|$ is the deviation of $\|\mnoise\x\|$ around its expectation.
    Let $\|B\|_{op}:=\sup_{\x\in V}\|Bx\|$.
    \begin{align*}
        \|\mathbf{B}\|_{op}&=\langle \mathbf{B}\x,\x\rangle = \langle \mathbf{B}\x',\x'\rangle + \langle \mathbf{B}(\x'+\x),\x'-\x\rangle \\
        \leq& 2(\sqrt{(\log N_{\gamma}+\log 1/\beta)\cdot n}+(\log N_{\gamma}+\log 1/\beta)) \\
        &+ \|\mathbf{B}\|_{op} \cdot \|\x'+\x\|\|\x'-\x\| \\
        \leq& 2(\sqrt{(\log N_{\gamma}+\log 1/\beta)\cdot n}+(\log N_{\gamma}+\log 1/\beta)) \\
        &+ 2\gamma \cdot \|\mathbf{B}\|_{op} \\
        \leq& 2(\sqrt{(\log N_{\gamma}+\log 1/\beta)\cdot n}+(\log N_{\gamma}+\log 1/\beta)) \\
        &+ 1/2 \cdot \|\mathbf{B}\|_{op}
    \end{align*}
Rearranging implies that $\|\mathbf{B}\|_{op} \leq 4(\sqrt{(\log N_{\gamma}+\log 1/\beta)\cdot n} + (\log N_{\gamma}+\log 1/\beta))$. The first claim now follows by rescaling $\eta$.

For the second claim, we observe that $\|\mnoise\|_F^2$ has a Gaussian distribution with mean $0$ and variance $n\cdot d\cdot\sigma^2$. The same concentration inequality we applied above also implies that the probability that $\|\mnoise\|_F$ exceeds $2\sqrt{\sigma^2\cdot n\cdot d \cdot \log 1/\beta}$ is at most $1-\beta$.
\end{proof}

We will use the following spectral bounds for both linear regression and low rank approximation. We believe that there may be further applications and that the bounds themselves are therefore of independent interest.

\begin{lemma}
\label{lem:spectral_gen}
    Let $\mS\in \frac{1}{\sqrt{\decom}}\cdot \{-1,0,1\}^{m\times n}$ be an $(\alpha,\beta,m,\decom)$ OSNAP with $\mS=\frac{1}{\sqrt{s}}\sum_{i\in [\decom]}\mS_i$, where $\mS_i\in\{-1,0,1\}^{n\times m}$ has exactly one non-zero entry per row. 
    Let $\mnoise=\sum_{i\in [\decom]} \mS_i\mnoise_i$ where every matrix $\mnoise_i\in \mathbb{R}^{n\times d}$ has independent Gaussian entries $\mathcal{N}(0,\sigma^2)$. Further, let $V$ be a set of $d$-dimensional vectors lying in a $k$-dimensional subspace. Then with probability at least $1-\beta$ for some absolute constant $\eta$
\begin{align*}
  &\sup_{\x\in V}\frac{1}{s}\|\mnoise\x\|_2^2
  \leq \eta\cdot 2\frac{n}{m}\sigma^2\cdot \|\x\|^2 \cdot (\sqrt{(k+\log 1/\beta)\cdot m}\\
  & \hspace{4cm} +(k+\log 1/\beta)).
\end{align*}
and
\begin{align*}
  &\frac{1}{s}\|\mnoise\|_F^2 \leq \eta\cdot 2 n\sigma^2\cdot d \cdot \log 1/\beta
\end{align*}
\end{lemma}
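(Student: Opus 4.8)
The plan is to reduce Lemma~\ref{lem:spectral_gen} to Lemma~\ref{lem:nicejakob}: I will exhibit $\mnoise$ as a row-rescaled $m\times d$ matrix with i.i.d.\ Gaussian entries, where the rescaling of row $j$ is the number of non-zeros in row $j$ of the sketch, and then control those row counts by a Chernoff bound in the spirit of Lemma~\ref{lem:nonzero}.

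First I stack the pieces. With each $\mS_i\in\{-1,0,1\}^{m\times n}$ having one non-zero per column (as in the definition of $T_\mS$, Eq.~\ref{eq:Ts}), set $\tilde\mS:=[\mS_1\mid\dots\mid\mS_{\decom}]\in\{-1,0,1\}^{m\times \decom n}$ and stack the noise matrices $\mnoise_i$ vertically into $\mathbf{N}\in\R^{\decom n\times d}$, so that $\mnoise=\sum_{i\in[\decom]}\mS_i\mnoise_i=\tilde\mS\mathbf{N}$ with $\mathbf{N}$ having i.i.d.\ $\mathcal{N}(0,\sigma^2)$ entries. Since each $\mS_i$ has a single non-zero per column, every column of $\tilde\mS$ has exactly one non-zero, so the sets $S_j:=\{k:\tilde\mS_{jk}\neq 0\}$ partition the column index set $[\decom n]$; write $X_j:=|S_j|$. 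Row $j$ of $\mnoise$ is then a signed sum of the $X_j$ i.i.d.\ rows of $\mathbf{N}$ indexed by $S_j$, hence has law $\mathcal{N}(0,X_j\sigma^2\mathbf{I})$, and distinct rows of $\mnoise$ are independent because the $S_j$ are disjoint. Thus $\mnoise\stackrel{d}{=}\mathrm{diag}(\sqrt{X_1},\dots,\sqrt{X_m})\,\mathbf{H}$ for an $m\times d$ matrix $\mathbf{H}$ with i.i.d.\ $\mathcal{N}(0,\sigma^2)$ entries, and $\mathbf{H}$ is independent of $\tilde\mS$ since its conditional law given $\tilde\mS$ does not depend on $\tilde\mS$. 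In particular $\|\mnoise\x\|_2^2=\sum_{j\in[m]}X_j(\mathbf{H}\x)_j^2$ and $\|\mnoise\|_F^2=\sum_{j\in[m]}X_j\|(\text{row }j\text{ of }\mathbf{H})\|_2^2$.

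Next I bound $\max_j X_j$. Each of the $\decom n$ non-zeros of $\tilde\mS$ lands in a uniformly random row of $[m]$ (non-zeros originating from a common OSNAP column may be forced into distinct rows, which only helps concentration), so $\E[X_j]=\decom n/m$; a multiplicative Chernoff bound plus a union bound over $j\in[m]$ --- the computation of Lemma~\ref{lem:nonzero} with $n$ replaced by $\decom n$ --- shows the event $E:=\{\max_{j\in[m]}X_j\le 2\decom n/m\}$ fails with probability at most $m\exp(-\decom n/(8m))$, which is below $\beta$ in the non-degenerate regime (the one in which $R_D$ does not short-circuit to the all-zero output) and can otherwise be folded into the failure budget. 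On $E$ we get the clean domination $\tfrac1{\decom}\|\mnoise\x\|_2^2\le\tfrac{2n}{m}\|\mathbf{H}\x\|_2^2$ for every $\x$ and $\tfrac1{\decom}\|\mnoise\|_F^2\le\tfrac{2n}{m}\|\mathbf{H}\|_F^2$.

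Finally I apply Lemma~\ref{lem:nicejakob} to $\mathbf{H}$ --- an $m\times d$ i.i.d.\ $\mathcal{N}(0,\sigma^2)$ matrix, independent of $E$ --- taking its ``$n$'' to be $m$: with probability $\ge 1-\beta$, $\sup_{\x\in V}\|\mathbf{H}\x\|_2^2\le \eta\,\sigma^2\|\x\|^2\big(\sqrt{(k+\log1/\beta)m}+(k+\log1/\beta)\big)$ (carrying along, or absorbing into $\eta$, the $m\sigma^2\|\x\|^2$ expectation term, which after the rescaling below contributes only an additive $2n\sigma^2\|\x\|^2$), and simultaneously $\|\mathbf{H}\|_F^2\le\eta\,\sigma^2md\log1/\beta$. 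Multiplying through by $2n/m$, combining with the domination on $E$, and union-bounding over $E^{c}$ and the failure of Lemma~\ref{lem:nicejakob} (rescaling $\beta,\eta$) gives the two claimed inequalities. The step to handle most carefully is the identity $\mnoise\stackrel{d}{=}\mathrm{diag}(\sqrt{X_j})\,\mathbf{H}$ with $\mathbf{H}$ independent of $\tilde\mS$: it rests exactly on the columns of $\tilde\mS$ having disjoint singleton supports, which makes the rows of $\mnoise$ conditionally independent Gaussians with covariances $X_j\sigma^2\mathbf{I}$; after that the argument is just the row-count concentration and a black-box invocation of Lemma~\ref{lem:nicejakob}.
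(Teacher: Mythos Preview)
Your proposal is correct and follows essentially the same approach as the paper: both arguments observe that $\mnoise=\sum_i \mS_i\mnoise_i$ is an $m\times d$ matrix with independent Gaussian entries whose row-wise variances are controlled by a Chernoff bound (Lemma~\ref{lem:nonzero}), and then invoke Lemma~\ref{lem:nicejakob}. Your $\mathrm{diag}(\sqrt{X_j})\,\mathbf{H}$ representation makes the independence and variance structure more explicit than the paper's terse ``entries of $\mS_i\mnoise_i$ are Gaussian with variance at most $2\tfrac{n}{m}\sigma^2$'', but the underlying strategy is identical.
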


\begin{proof}
We first argue that $\mnoise$ is Gaussian distributed. Each $G_i$ has independent Gaussian entries and multiplying a Gaussian with a random Rademacher does not change the distribution. Therefore, the entries of $S_iG_i$ are likewise Gaussian distributed, with mean $0$ and variance at most $(1+\gamma)\frac{n}{m}\sigma^2\leq 2\frac{n}{m}\sigma^2$ due to Lemma \ref{lem:nonzero}. Concluding, the variance of the entries of $\mnoise =\sum_{i\in [s]} S_iG_i$ is therefore at most $2s\frac{n}{m}\sigma^2$.
Applying Lemma \ref{lem:nicejakob}, we then have with probability $1-\beta$
\begin{align*}
  & \sup_{\x\in V}\|\mnoise\x\|_2^2
   \leq 2\frac{n}{m}\sigma^2\cdot \|\x\|^2  + \eta\cdot 2s\frac{n}{m}\sigma^2\cdot \|\x\|^2 \\
  & \cdot (\sqrt{(k+\log 1/\beta)\cdot m}+(k+\log 1/\beta)).
\end{align*}
and
\begin{align*}
  &\|\mnoise\|_F^2 \leq \eta\cdot 2s n\sigma^2\cdot d \cdot \log 1/\beta
\end{align*}
as desired
\end{proof}

Related, though slightly weaker bounds can also be derived for the Gamma/Laplace mechanism. While it is possible to derive bounds for sparse Rademacher sketches, we focus on the dense constructions as these yield $\varepsilon$ differentially private mechanisms (as opposed to $(\varepsilon,\delta)$ differentially private mechanisms).

\begin{lemma}
\label{lem:spectral_Lap}
    Let $\mS\in \frac{1}{\sqrt{\decom}}\cdot \{-1,0,1\}^{m\times n}$ be an $(\alpha,\beta,m,m)$ OSNAP with $\mS=\frac{1}{\sqrt{m}}\sum_{i\in [\decom]}\mS_i$, where $\mS_i\in\{-1,0,1\}^{m\times n}$ has exactly one non-zero entry per row and $n/m$ non-zero entries per column. 
    Let $\mathbf{L}=\frac{1}{\sqrt{m}}\sum_{i\in [m]} \mS_i\mathbf{\Gamma}_i$ where every matrix $\mathbf{\Gamma}_i\in \mathbb{R}^{n\times d}$ has independent entries $X_i-Y_i$, where $X_i,Y_i\sim\Gamma(m/n,b)$. Then with probability at least $1-\beta$ for some absolute constant $\eta$
and
\begin{align*}
  &\|\mathbf{L}\|_F^2 \leq \eta d m^2 b^2 \log^2 \frac{m}{\beta}
\end{align*}
\end{lemma}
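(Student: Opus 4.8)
The plan is to follow the same skeleton as the proof of Lemma~\ref{lem:spectral_gen}: reduce to controlling the individual entries of the noise matrix $\mathbf{L}$ and then union bound over them. The difference from that proof is that $\mathbf{L}$ no longer has Gaussian entries, so instead of invoking Lemma~\ref{lem:nicejakob} I would identify the exact marginal law of each entry of $\mathbf{L}$ using the infinite divisibility of the Gamma family, exactly as in the privacy proof (Corollary~\ref{thm:privlap}). Note also that, unlike the sparse case, we do not need Lemma~\ref{lem:nonzero} here: each $\mathbf{S}_i$ has \emph{exactly} one non-zero per column and exactly $n/m$ non-zeros per row, so the relevant term counts are deterministic.

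First I would fix an entry $(\mathbf{L})_{j,\ell}$ and trace which noise variables feed into it. Since each $\mathbf{S}_i$ has one non-zero per column and $n/m$ non-zeros per row, the $j$-th row of $\mathbf{S}_i\mathbf{\Gamma}_i$ is a signed sum of exactly $n/m$ distinct rows of $\mathbf{\Gamma}_i$; summing over $i\in[m]$ and dividing by $\sqrt m$, the scalar $(\mathbf{L})_{j,\ell}$ is $\tfrac{1}{\sqrt m}$ times a signed sum of exactly $n$ mutually independent copies of $X-Y$ with $X,Y\sim\Gamma(m/n,b)$ (independent across the $m$ fresh matrices $\mathbf{\Gamma}_i$ produced by $R^{\text{dense}}_D$, and within a given $\mathbf{\Gamma}_i$ they are $n/m$ distinct entries of one column). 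Because this distribution is symmetric the signs are irrelevant, and because Gamma distributions of common scale are closed under convolution (shapes add), $\sum_{p=1}^n (X_p-Y_p)$ is distributed as $\tilde X-\tilde Y$ with $\tilde X,\tilde Y\sim\Gamma(m,b)$ independent — equivalently, as the sum of $m$ independent $\mathrm{Laplace}(b)$ variables, which is how it relates to the privacy analysis. (One also checks, using that each $\mathbf{S}_i$ has one non-zero per column, that distinct entries of $\mathbf{L}$ are built from disjoint blocks of the $\mathbf{\Gamma}_i$, so they are in fact independent; but only the marginal is needed for the union bound.)

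Next I would apply a standard tail bound for $\Gamma(m,b)$. Writing $\tilde X$ as a sum of $m$ i.i.d.\ $\mathrm{Exp}(b)$ variables and union bounding over those $m$ summands (or, more sharply, a sub-exponential Chernoff bound), we get $\tilde X \le O(m b\log(md/\beta))$ with probability at least $1-\beta/(2md)$, and likewise for $\tilde Y$. By the triangle inequality $|(\mathbf{L})_{j,\ell}| \le \tfrac{1}{\sqrt m}(\tilde X+\tilde Y) = O(\sqrt m\, b\log(md/\beta))$ with probability at least $1-\beta/(md)$, hence $(\mathbf{L})_{j,\ell}^2 = O(m\, b^2\log^2(md/\beta))$. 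A union bound over all $md$ entries and summing then yields $\|\mathbf{L}\|_F^2 = \sum_{j,\ell}(\mathbf{L})_{j,\ell}^2 = O\!\big(m^2 d\, b^2\log^2(md/\beta)\big)$ with probability at least $1-\beta$, which is the claimed bound after absorbing lower-order logarithmic factors and constants into $\eta$.

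The main obstacle I anticipate is bookkeeping rather than conceptual: getting the count of contributing noise terms right — the interaction between the $m$ copies produced by $R^{\text{dense}}_D$ and the $n/m$ non-zeros per row of each $\mathbf{S}_i$ — so that the per-entry marginal comes out as a difference of two $\Gamma(m,b)$ variables, and then being careful that the resulting bound depends on $m$ and $d$ but \emph{not} on $n$. Using the crude "sum of $m$ Laplaces is at most $m$ times the maximum" bound (rather than Bernstein-type concentration of that sum) is exactly what produces the stated $m^2$ factor; a sharper concentration argument would improve it to $\tilde O(mdb^2)$, but the weaker statement already suffices for the downstream ridge-regression utility analysis.
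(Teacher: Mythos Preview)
Your argument is correct, but it follows a different decomposition from the paper's proof. The paper does not work entry-wise on $\mathbf{L}$; instead it uses the triangle inequality $\|\mathbf{L}\|_F \le \sqrt{m}\,\max_{i}\|\mathbf{S}_i\mathbf{\Gamma}_i\|_F$ and then bounds each $\|\mathbf{S}_i\mathbf{\Gamma}_i\|_F^2$ directly. Since each row of $\mathbf{S}_i$ has exactly $n/m$ non-zeros, each entry of $\mathbf{S}_i\mathbf{\Gamma}_i$ is a \emph{single} $\mathrm{Laplace}(0,b)$ variable (rather than a sum of $m$ of them), so $\|\mathbf{S}_i\mathbf{\Gamma}_i\|_F^2$ is a sum of $md$ squared Laplacians; the paper controls this via a moment computation, H\"older's inequality, and Markov's inequality, and finishes with a union bound over the $m$ indices $i$. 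Your route instead pushes the sum over $i$ inside first, identifies each entry of $\mathbf{L}$ as $m^{-1/2}$ times a difference of two $\Gamma(m,b)$ variables, and bounds all $md$ entries separately via exponential tails. Both arguments incur the same ``extra'' factor of $m$ from a crude triangle-inequality step (the paper over the $m$ summands $\mathbf{S}_i\mathbf{\Gamma}_i$, you over the $m$ Laplace contributions per entry), which is precisely the looseness you flag as improvable to $\tilde O(mdb^2)$. One small caveat: your entry-wise union bound over $md$ entries produces $\log^2(md/\beta)$ rather than the stated $\log^2(m/\beta)$, and the spare $\log d$ cannot literally be absorbed into an absolute constant~$\eta$; the paper's approach avoids this because its moment bound handles the sum over all $d$ columns at once and only union-bounds over $m$. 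This is cosmetic for the downstream application (Theorem~\ref{thrm:utilitykrank_lap} already suppresses logarithmic factors), but it is the one place where your write-up does not quite recover the lemma as stated.
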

\begin{proof}
    We have $\|\mathbf{L}\|_F\leq \sqrt{m}\cdot \|\max_{i\in [m]} \mS_i\mathbf{\Gamma}_i\|_F$, thus it is sufficient to give a high probability bound on $\|\mS_i\mathbf{\Gamma}_i\|_F$.
    
    The entries of $\mS_i\mathbf{\Gamma}_i$ are Laplace distributed entries with mean $0$ and scale parameter $b$ by infinite divisibility of a Laplace random variable into $\Gamma$ distributed random variables. Thus the expected value of $\|\mS_i\mathbf{\Gamma}_i\|_F^2$ is $md\cdot 2b^2$. 

    To prove high concentration, let us consider the moments of the Laplace distribution. For a Laplace random variable $L$ with mean $0$ and scale $b$, we have $\mathbb{E}[L^{2\ell}] = (2\ell)! b^{2\ell}$. Then due to Hoelder's inequality
    \begin{align*}
      \mathbb{E}\left[\|\mS_i\mathbf{\Gamma}_i\|_F^{2\ell}\right] =& \mathbb{E}\left[\left(\sum_{j=1}^{md} L_j^2\right)^\ell \right]\\
      \leq& \mathbb{E}\left[\left(\sum_{j=1}^{md} L_i^{2\ell}\right)\right]\cdot \left(\sum_{j=1}^{md} 1^{\frac{\ell}{\ell-1}}\right)^{\ell-1}\\
    = &(2\ell)! b^{2\ell} \cdot (md)^{\ell}
    \end{align*}
    Applying Markov's inequality, we then have for any $\gamma>1$
    \begin{align*}
        &\mathbb{P}\left[\|\mS_i\mathbf{\Gamma}_i\|_F^{2} > \gamma\cdot md\cdot 2b^2\right] \\
        =& \mathbb{P}\left[\|\mS_i\mathbf{\Gamma}_i\|_F^{2\ell} > \left(\gamma\cdot md\cdot 2b^2\right)^{\ell}\right]\\
        \leq & \frac{(2\ell)!b^{2\ell}(md)^\ell}{\left(\gamma\cdot md\cdot 2b^2\right)^{\ell}} = \frac{(2\ell)!}{(2\gamma)^\ell}\\
        \leq& \frac{(2\ell)^{2\ell}}{(2\gamma e^2)^\ell}\sqrt{6\pi \ell}
    \end{align*}
    For $\gamma >4\ell^2$ and $\ell = \log \frac{m}{\beta}$, this term is at most $\beta/m$. We can therefore conclude that with probability $1-\beta$, $\max_{i\in [m]} \mS_i\mathbf{\Gamma}_i\leq 4\log \frac{m}{\beta}$ and thus $\|\mathbf{L}\|_F^2\leq \eta m^2 d b^2 \log^2 \frac{m}{\beta}$, for some absolute constant $\eta$.
\end{proof}

We first begin with our utility results for low rank approximation. Using the Gaussian mechanism, we obtain the following guarantee.
    
\begin{theorem}\label{thrm:utilitykrank}
    Let $\epsilon \geq 0$, $\delta \in (0,1)$, $t'<n/2$.
    Let $\mS\sim \mathcal{D}_{m,n}^{\algofont{sketch}}$. Define $A_{k}$ as an algorithm that computes $\text{argmin}_{\mx \text{ rank } k,\in \mathbb{R}^{n\times d}}\lVert\mdata - \mdata \mx \mx^T \rVert_F^2$. 
    An instantiation of the trusted computation model $(R_D,T_\sketch,A_{k})$ specified in Equations~\ref{eq:R} and \ref{eq:Ts}, using noise distribution $D \sim \mathcal{N}(0,\sigma^2)$ according to Corollary \ref{thm:privacynew}, computes an orthogonal projection $\mx'$ that with probability $1-\beta$ achieves a multiplicative error $(1+O(\smult))$ and additive error $O\left(\frac{kd^{3/2}}{\smult^{3}} \log^{3/2} (1/\beta)\varepsilon^{-1}\log^{1/2}(1/\delta)\right)$ for the low rank approximation problem.
\end{theorem}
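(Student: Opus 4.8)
The plan is to combine the privacy guarantee from Corollary~\ref{thm:privacynew} with the structural distortion bounds from Lemma~\ref{lem:spectral_gen} and a standard sketch-and-solve argument for low rank approximation. Concretely, conditioning on the OSNAP event (Definition~\ref{def:osnap}), the released matrix is $\mS(\mdata+\mnoise)$, where $\mnoise = \sum_{i\in[\decom]}\mS_i\mnoise_i$ has i.i.d. Gaussian entries of variance $\sigma^2$ with $\sigma^2$ set by Corollary~\ref{thm:privacynew}. The algorithm $A_k$ computes the best rank-$k$ projection $\mx'$ for $\mS(\mdata+\mnoise)$; we must bound $\|\mdata-\mdata\mx'\mx'^T\|_F^2$ in terms of $OPT^2 = \min_{\mx}\|\mdata-\mdata\mx\mx^T\|_F^2$. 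First I would write $\mS(\mdata+\mnoise)(\mathbf{I}-\mx'\mx'^T) = \mS\mdata(\mathbf{I}-\mx'\mx'^T) + \mS\mnoise(\mathbf{I}-\mx'\mx'^T)$ and use the triangle inequality together with the OSNAP subspace-embedding property (which transfers Frobenius norms on the relevant low-dimensional subspaces up to $(1\pm\alpha_\mS)$) to move between $\|\mdata(\mathbf{I}-\mx'\mx'^T)\|_F$ and $\|\mS\mdata(\mathbf{I}-\mx'\mx'^T)\|_F$.

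The key steps, in order: (1) invoke Corollary~\ref{thm:privacynew} to fix $\sigma^2 = \tilde O\!\left(\frac{\decom^3\eta^2 m d^2 \log(1/\delta)}{\varepsilon^2 (n-\decom-t')}\right)$, and note that for the \cite{Cohen16} parameters $m = O(k\log(k/\beta)/\alpha_\mS^2)$ and $\decom = O(\log(k/\beta)/\alpha_\mS)$, together with $t'<n/2$ so that $n-\decom-t' = \Theta(n)$, this gives $\sigma^2 = \tilde O\!\left(\frac{k\,\eta^2 d^2 \log(1/\delta)}{\alpha_\mS^5 n}\cdot \mathrm{polylog}\right)$; (2) apply Lemma~\ref{lem:spectral_gen} with $V$ the relevant $k$-dimensional column space (the row span of a candidate projection), yielding $\frac{1}{\decom}\|\mnoise\x\|_2^2 \le \tilde O(\frac{n}{m}\sigma^2)\cdot\|\x\|^2\cdot(k+\log(1/\beta))$ uniformly, and hence $\|\mnoise(\mathbf{I}-\mx\mx^T)\|_F^2 \le \tilde O(\frac{n}{m}\sigma^2 \decom\, k(k+\log(1/\beta)))$ — wait, more carefully, since $\mathbf{I}-\mx\mx^T$ projects onto a $(d-k)$-dimensional space, one instead controls $\|\mdata(\mathbf{I}-\mx'\mx'^T) - \mdata\mx_{\mathrm{OPT}}\mx_{\mathrm{OPT}}^T(\ldots)\|$ via the fact that both the optimal and returned projections lie in a space of rank $O(k)$, so the noise only interacts with an $O(k)$-dimensional subspace; (3) chain the inequalities: $\|\mdata(\mathbf{I}-\mx'\mx'^T)\|_F \le \frac{1}{\sqrt{1-\alpha_\mS}}\|\mS\mdata(\mathbf{I}-\mx'\mx'^T)\|_F \le \frac{1}{\sqrt{1-\alpha_\mS}}\left(\|\mS(\mdata+\mnoise)(\mathbf{I}-\mx'\mx'^T)\|_F + \|\mS\mnoise(\mathbf{I}-\mx'\mx'^T)\|_F\right)$, then use optimality of $\mx'$ to replace the first term by the value at $\mx_{\mathrm{OPT}}$, then reverse the OSNAP bound and triangle-inequality back to get $OPT$, picking up a matching $+\|\mS\mnoise(\mathbf{I}-\mx_{\mathrm{OPT}}\mx_{\mathrm{OPT}}^T)\|_F$ term; (4) finally bound each stray noise term using Lemma~\ref{lem:spectral_gen} (restricted to the $O(k)$-dimensional subspace spanned by the columns of the two projections, so $\|\mS\mnoise(\mathbf{I}-\mx\mx^T)\|_F^2 \le \tilde O(k\cdot \frac{n}{m}\sigma^2 s)$), substitute $\sigma^2$, and simplify; converting the resulting squared-norm bound to a bound on $\|\mdata-\mdata\mx'\mx'^T\|_F$ (not squared) via $\sqrt{a^2+b} \le a + \sqrt b$ gives multiplicative error $(1+O(\alpha_\mS))$ and additive error of the claimed order $\tilde O\!\left(\frac{kd^{3/2}}{\alpha_\mS^3}\log^{3/2}(1/\beta)\,\varepsilon^{-1}\log^{1/2}(1/\delta)\right)$ after the $\sqrt{\cdot}$ halves the exponents on $d$, $\log(1/\delta)$ and appropriately on $\alpha_\mS$ and $\log(1/\beta)$.

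The main obstacle I expect is the sketch-and-solve chaining in step (3): carefully tracking how the noise term appears on \emph{both} sides of the optimality comparison and ensuring the subspace used in Lemma~\ref{lem:spectral_gen} genuinely is $O(k)$-dimensional (it is the span of the columns of $\mx'$ together with those of $\mx_{\mathrm{OPT}}$, hence rank $\le 2k$) rather than the full $d$-dimensional complement — this is what keeps the additive error scaling with $k$ rather than $d$ in the leading factor, with the $d^{3/2}$ coming through $\sigma^2 \propto d^2$ then square-rooted. A secondary subtlety is bookkeeping the exact powers of $\alpha_\mS$: the \cite{Cohen16} construction contributes $\alpha_\mS^{-2}$ through $m$, $\alpha_\mS^{-1}$ through $\decom$ (cubed inside $\sigma^2$), and an additional $\alpha_\mS^{-1}$ from converting multiplicative slack into additive error, which after the square root lands at $\alpha_\mS^{-3}$; I would do this accounting explicitly but defer the routine logarithmic factors to $\tilde O(\cdot)$.
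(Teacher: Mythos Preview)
Your overall architecture is right and matches the paper: you set $\sigma^2$ via Corollary~\ref{thm:privacynew}, use the OSNAP subspace-embedding property to pass between $\|\mdata(\mathbf{I}-\mx\mx^T)\|_F$ and $\|\mS\mdata(\mathbf{I}-\mx\mx^T)\|_F$, insert the noise via the triangle inequality, use optimality of $\mx'$ on the noisy sketched instance, and reverse. Steps (1)--(3) are essentially the paper's argument.

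The gap is in step~(4). You correctly notice that $\mathbf{I}-\mx\mx^T$ projects onto a $(d-k)$-dimensional subspace, then try to rescue the $O(k)$-dimensional bound by arguing that ``the noise only interacts with the span of the columns of $\mx'$ and $\mx_{\mathrm{OPT}}$.'' That is not how the chaining you set up works: after the triangle inequality you are left with the terms $\frac{1}{\sqrt{s}}\bigl\lVert\sum_i \mS_i\mnoise_i(\mathbf{I}-\mx'\mx'^T)\bigr\rVert_F$ and $\frac{1}{\sqrt{s}}\bigl\lVert\sum_i \mS_i\mnoise_i(\mathbf{I}-\mx_{\mathrm{OPT}}\mx_{\mathrm{OPT}}^T)\bigr\rVert_F$ separately, and each of these genuinely lives in a $(d-k)$-dimensional space. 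The $2k$-dimensional span of the two projections only controls the \emph{difference} $\|\mnoise\mx_{\mathrm{OPT}}\|_F^2 - \|\mnoise\mx'\|_F^2$, which is not the quantity appearing in your chain. So the first bound of Lemma~\ref{lem:spectral_gen} cannot be invoked here, and your stated estimate $\tilde O(k\cdot\frac{n}{m}\sigma^2 s)$ is unjustified.

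The paper's fix is much simpler and you should adopt it: since $\mathbf{I}-\mx\mx^T$ is an orthogonal projection, $\bigl\lVert\sum_i \mS_i\mnoise_i(\mathbf{I}-\mx\mx^T)\bigr\rVert_F \le \bigl\lVert\sum_i \mS_i\mnoise_i\bigr\rVert_F$, and then the \emph{second} bound of Lemma~\ref{lem:spectral_gen} gives $\frac{1}{s}\bigl\lVert\sum_i \mS_i\mnoise_i\bigr\rVert_F^2 \le O(n\sigma^2 d\log(1/\beta))$. This costs an extra factor of $d$ (not $k$) in the squared noise term, and that extra $\sqrt d$ is precisely what completes the $d^{3/2}$ in the additive error: $\sqrt{n\sigma^2 d}$ with $n\sigma^2 \propto d^2$ gives $d^{3/2}$. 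Your own accounting in the last paragraph (``$\sigma^2\propto d^2$ then square-rooted'') only produces $d$, which is the symptom of the same gap.
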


\begin{proof}
    We will use the following inequality. The applications to vector and matrix norms are straightforward corollaries.
    \begin{lemma}[Generalized Triangle Inequality \cite{BecchettiBC0S19}]\label{lemma:magic}
    For any two real numbers $a,b$ and any $\alpha>0$
    $$|a^2 - b^2|\leq \alpha\cdot a^2 + \left(1+\frac{1}{\alpha}\right)\cdot (a-b)^2.$$
    \end{lemma}
    
    Finally, we require the following consequence of subspace embedding properties. 
    \begin{lemma}[Paraphrased from Definition 1, Lemma 11, and Theorem 12 \cite{CohenEMMP15}]
    \label{lem:subspaceembedding}
        $\mS$ be drawn from an OSNAP distribution. Then for any matrix $A\in \mathbb{R}^{n\times d}$ and any rank $k$ orthogonal matrix $\mx\in \mathbb{R}^{d\times k}$, we have
        $$\|\mS\mdata-\mS\mdata\mx\mx^T\|_F = (1\pm \alpha)\cdot \|\mdata-\mdata\mx\mx^T\|_F.$$
    \end{lemma}
    
    Without noise, the low rank approximation outputs $\mxopt=\text{argmin}_{\mx \text{ rank } k,\in \mathbb{R}^{n\times d}}\lVert\mdata - \mdata \mx \mx^T \rVert_F^2$, and after sketching with Gaussian noise of variance $\sigma^2$ outputs $\mxest=\text{argmin}_{\mx\text{ rank } k,\in \mathbb{R}^{n\times d}}\lVert (\mdata+\mnoise) -(\mdata+\mnoise) \mx\mx^T \rVert_F^2$. 
    Let $\mxest$ be the matrix returned by the mechanism. We have
        \begin{align*}
            \lVert &  \mdata-\mdata\mxest\mxest^T \rVert_F \\
            & \leq (1+\smult) \sqrt{\frac{1}{s}}\left\lVert \sum_{i\in[\decom]}\mS_i\mdata-\mS_i\mdata\mxest\mxest^T \right\rVert_F \\
            & \leq (1+\smult)\sqrt{\frac{1}{s}}\\
            & \left\lVert\sum_{i\in[\decom]}\mS_i(\mdata+\mnoise_i) - \mS_i(\mdata+\mnoise_i)\mxest\mxest^T \right\rVert_F \\
            & + (1+\smult)\sqrt{\frac{1}{s}}\left\lVert \sum_{i\in[\decom]}\mS_i\mnoise_i-\mS_i\mnoise_i \mxest\mxest^T\right\rVert_F
        \end{align*}
        where the first inequality follows from the subspace embedding property. By optimality of $\mxest$ for the low rank approximation problem on $ \sum_{i\in[\decom]}\mS_i(\mdata+\mnoise_i) $, we then have
        \begin{align*}
             \lVert &  \mdata-\mdata\mxest\mxest^T \rVert_F  \\
             & \leq (1+\smult)\sqrt{\frac{1}{s}} \\
             & ~~\cdot\left\lVert \sum_{i\in[\decom]}\mS_i(\mdata+\mnoise_i) - \mS_i(\mdata+\mnoise_i)\mxopt\mxopt^T \right\rVert_F \\
            & ~~+ (1+\smult)\sqrt{\frac{1}{s}}\left\lVert \sum_{i\in[\decom]}\mS_i\mnoise_i-\mS_i\mnoise_i \mxest\mxest^T\right\rVert_F  \\
            & \leq (1+\smult)^2\left\lVert \mdata - \mdata\mxopt\mxopt^T\right \rVert_F \\
            &~~~~~~~ + (1+\smult)\sqrt{\frac{1}{s}}\left\lVert \sum_{i\in[\decom]}\mS_i\mnoise_i-\mS_i\mnoise_i \mxopt\mxopt^T\right\rVert_F  \\
            &~~+ (1+\smult)\sqrt{\frac{1}{s}}\left\lVert \sum_{i\in[\decom]}\mS_i\mnoise_i-\mS_i\mnoise_i \mxest\mxest^T\right\rVert_F \\
        \end{align*}
    where the inequality follows from Lemma \ref{lem:subspaceembedding}.
        Now notice that $\mathbb{I}-\mxopt\mxopt^T$ and $\mathbb{I}-\mxest\mxest^T$ are orthogonal projections, multiplying by which cannot increase the Frobenius norm of a matrix. Therefore, for $\mx=\mxest$ or $\mx=\mxopt$:
        \begin{align*}
            & \sqrt{\frac{1}{s}}\left\lVert \sum_{i\in[\decom]} \mS_i\mnoise_i-\mS_i\mnoise_i \mx\mx^T\right\rVert_F \\
            & =\sqrt{\frac{1}{s}}\left\lVert \sum_{i\in[\decom]} \mS_i\mnoise_i(\mathbb{I}- \mx\mx^T)\right\rVert_F  \leq\sqrt{\frac{1}{s}}\left\lVert \sum_{i\in[\decom]} \mS_i\mnoise_i\right\rVert_F
        \end{align*}
    
        Then:
        \begin{align}
        \nonumber
            &\lVert   \mdata-\mdata\mxest\mxest^T \rVert_F  \\
             \nonumber
            &\leq (1+\smult)^2\lVert \mdata - \mdata\mxopt\mxopt^T \rVert_F \\
            \label{eq:lowrank}
            & + (1+\smult)2\frac{1}{s}\left\lVert \sum_{i\in[\decom]} \mS_i\mnoise_i\right\rVert_F
        \end{align}
    Using Lemma \ref{lem:spectral_gen}, we have $\frac{1}{s}\left\lVert \sum_{i\in[\decom]} \mS_i\mnoise_i\right\rVert_F^2\leq 2n\cdot \sigma^2\cdot d\log(1/\beta)$. To simplify the error, observe that for the sketching matrix by \cite{Cohen16} (see Table \ref{tab:ose}), we can set $\decom=\frac{\log k}{\smult}$ and $m=\frac{k\log k}{\smult^2}\log(1/\beta)$. Using the bound on the variance from Theorem \ref{thm:privacynew} and plugging this in above, we can simplify the additive and multiplicative errors.
    
        \begin{align*}
             \lVert\mdata & - \mdata \mxest \mxest^T \rVert_F^2 \\
            & \leq (1+\smult)^2\lVert \mdata - \mdata\mxopt\mxopt^T \rVert_F^2 \\
            &\phantom{abddddd}+ \eta(1+\smult)\sqrt{\sigma^2\cdot n\cdot d\cdot \log \frac{1}{\Tilde{\beta}}}   \\
            & \leq (1+3\smult)\lVert \mdata - \mdata\mxopt\mxopt^T \rVert_F^2 \\
            &\phantom{abddddd}+ \Tilde{O}\left(\frac{kd^{3/2}}{\smult^3} \log^{3/2} (1/\beta)\varepsilon^{-1}\log^{1/2}(1/\delta)\right)
        \end{align*}
    The stated bound follows by rescaling $\alpha_\mS$ by a factor of $3$.
\end{proof}

Using dense Rademacher sketches and the Gamma/Laplace mechanism, we instead obtain the following guarantee.
\begin{theorem}\label{thrm:utilitykrank_lap}
    Let $\epsilon \geq 0$, $\delta \in (0,1)$, $t'<n/2$.
    Let $\mS\sim \mathcal{D}_{m,n}^{\algofont{sketch}}$. Define $A_{k}$ as an algorithm that computes $\text{argmin}_{\mx \text{ rank } k,\in \mathbb{R}^{n\times d}}\lVert\mdata - \mdata \mx \mx^T \rVert_F^2$. 
    An instantiation of the trusted computation model \\$(R_D^{\text{dense}},T_\sketch^{\text{dense}},A_{k})$ specified in Equations~\ref{eq:Rdense} and \ref{eq:Ts}, using noise distribution $D$ as the difference between two independent $\Gamma$ random variables parameterized by Corollary \ref{thm:privlap}, computes an orthogonal projection $\mx'$ that with probability $1-\beta$ with multiplicative error $(1+O(\smult))$ and additive error $O(\frac{k^{3}d^{3/2}}{\alpha_\mS^6} \varepsilon^{-1} \log \frac{m}{\beta})$.
\end{theorem}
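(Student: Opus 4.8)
The plan is to run the proof of Theorem~\ref{thrm:utilitykrank} almost verbatim, substituting the Gamma/Laplace ingredients for the Gaussian ones: Lemma~\ref{lem:spectral_Lap} in place of Lemma~\ref{lem:spectral_gen}, and the Gamma scale parameter of Corollary~\ref{thm:privlap} in place of the Gaussian variance of Corollary~\ref{thm:privacynew}. Let $\mxopt$ denote the noiseless rank-$k$ optimum; let $\mS\in\{-1,1\}^{m\times n}$ be the dense Rademacher sketch with $m=O((k+\log(1/\beta))/\alpha_\mS^2)$ rows, so that by the first line of Table~\ref{tab:ose} it is an $(\alpha_\mS,\beta,m,m)$ OSNAP; and let $\mxest$ be the rank-$k$ projection that $A_k$ returns when run on the published matrix $\mS\mdata+\mathbf{L}$, where $\mathbf{L}$ is the dense Gamma noise after sketching, exactly as in Lemma~\ref{lem:spectral_Lap}. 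Invoking the subspace-embedding property (Lemma~\ref{lem:subspaceembedding}), the optimality of $\mxest$ on the sketched-and-noised instance, and the fact that multiplying by the orthogonal projections $\mathbb{I}-\mxopt\mxopt^T$ and $\mathbb{I}-\mxest\mxest^T$ cannot increase a Frobenius norm --- that is, repeating the computation leading to Equation~\ref{eq:lowrank} --- yields, with probability $1-\beta$,
\begin{align*}
  \|\mdata-\mdata\mxest\mxest^T\|_F \leq{}& (1+\alpha_\mS)^2\,\|\mdata-\mdata\mxopt\mxopt^T\|_F \\
  &{}+ O(1)\cdot\|\mathbf{L}\|_F .
\end{align*}
The key observation is that only a bound on $\|\mathbf{L}\|_F$, and not a per-subspace supremum, is needed here, which is exactly what Lemma~\ref{lem:spectral_Lap} provides.

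It then remains to insert the two quantitative facts. Lemma~\ref{lem:spectral_Lap} bounds $\|\mathbf{L}\|_F^2\leq \eta\, d\, m^2 b^2 \log^2(m/\beta)$, and Corollary~\ref{thm:privlap} fixes the Gamma scale at $b=2\eta m^2 d/\varepsilon$ to guarantee $\varepsilon$-differential privacy; hence $\|\mathbf{L}\|_F=O(d^{3/2} m^3 \varepsilon^{-1}\log(m/\beta))$. Substituting $m=O(k/\alpha_\mS^2)$ (absorbing the $\log(1/\beta)$ term in $m$) turns this into $O(\frac{k^3 d^{3/2}}{\alpha_\mS^6}\varepsilon^{-1}\log(m/\beta))$, the claimed additive error. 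To present the guarantee in the $(1+O(\alpha_\mS))$-multiplicative plus additive form, I would finish exactly as in the Gaussian proof: $(1+\alpha_\mS)^2\leq 1+3\alpha_\mS$ for $\alpha_\mS\leq 1$, the cross term produced when passing to squared objectives is split by AM--GM (equivalently, by the generalized triangle inequality of Lemma~\ref{lemma:magic}) against $\alpha_\mS\|\mdata-\mdata\mxopt\mxopt^T\|_F^2$, and $\alpha_\mS$ is rescaled by a constant.

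The only step that is not a direct transcription of Theorem~\ref{thrm:utilitykrank}, and thus the main obstacle, is the noise bound itself. Laplace (equivalently, difference-of-Gamma) noise is not sub-Gaussian, so the net-based concentration argument underlying Lemma~\ref{lem:nicejakob} is unavailable; one must instead control $\|\mathbf{L}\|_F$ through the moment identity $\mathbb{E}[L^{2\ell}]=(2\ell)!\,b^{2\ell}$ for a mean-zero scale-$b$ Laplace variable, combined with Hoelder's and Markov's inequalities and the choice $\ell=\log(m/\beta)$ --- precisely the content of Lemma~\ref{lem:spectral_Lap}. The heavier tails, together with the fact that the $\ell_1$-sensitivity of $T_\mS^{\text{dense}}$ forces the Gamma scale $b$ to grow like $m^2$ (rather than the $\sqrt m$-type growth exhibited by the Gaussian variance in Corollary~\ref{thm:privacynew}), is what inflates the $(\alpha_\mS^{-3},k)$ dependence of Theorem~\ref{thrm:utilitykrank} to the $(\alpha_\mS^{-6},k^3)$ dependence stated here. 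The bookkeeping for corrupt clients --- the admissible range of $t'$ and the count of honest contributors per sketch row --- is inherited unchanged from the privacy analysis behind Corollary~\ref{thm:privlap} and introduces nothing new.
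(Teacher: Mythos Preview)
Your proposal is correct and follows essentially the same approach as the paper: reuse the chain of inequalities from Theorem~\ref{thrm:utilitykrank} up to Equation~\ref{eq:lowrank}, then replace the Gaussian noise bound of Lemma~\ref{lem:spectral_gen} by the Laplace bound of Lemma~\ref{lem:spectral_Lap}, plug in $b=2\eta m^2 d/\varepsilon$ from Corollary~\ref{thm:privlap} and $m=O((k+\log(1/\beta))/\alpha_\mS^2)$ for the dense Rademacher sketch, and read off the $k^3 d^{3/2}\alpha_\mS^{-6}\varepsilon^{-1}$ additive error. Your commentary on why the heavier Laplace tails and the $m^2$-scaling of $b$ inflate the $(k,\alpha_\mS^{-1})$-dependence relative to the Gaussian case is accurate and more explicit than the paper's own remark.
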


\begin{proof}
    The proof follows the same line of reasoning as Theorem \ref{thrm:utilitykrank}. The main difference is that upon reaching Equation \ref{eq:lowrank}, we must bound $\frac{1}{m}\left\lVert \sum_{i\in[m]} \mS_i\mnoise_i\right\rVert_F^2$ via Lemma \ref{lem:spectral_Lap} rather than using Lemma \ref{lem:spectral_gen}. Assuming the semi-trusted corruption model with at most $t$ corruptions, each entry of $\mnoise_i$ consists of $X_j-Y_j$ where $X_j,Y_i\sim\Gamma(1/(n-t),b)$. Consequently, the entries of $\sum_{i\in[m]} \mS_i\mnoise_i$ are a sum of $m\cdot \lceil\frac{n}{n-t}\rceil$ matrices where each entry is either $0$ or a Laplacian distributed random variable with mean $0$ and scale $b$. Using Lemma \ref{lem:spectral_Lap}, we therefore have with probability $1-\beta$.
    \begin{align*}
      &\frac{1}{m}\left\lVert \sum_{i\in[m]} \mS_i\mnoise_i\right\rVert_F^2 \\
      &\in O(1)\cdot m^2 \cdot \frac{n}{n-t} d b^2 \log^2 \frac{m \cdot \frac{n}{n-t}}{\beta} \\
      &\in O(1)\cdot m^2 \cdot d b^2 \log^2 \frac{m }{\beta},
    \end{align*}
    where we used $\frac{n}{n-t}\leq 2$ by assumption.
    Since $\mS$ is a dense Rademacher matrix, we have $m = \frac{k+\log \beta^{-1}}{\alpha_\mS^2}$ \cite{Achlioptas03,Woodruff14} and $s= m$, and $b=2\eta m^2 d/\epsilon$ following Corollary \ref{thm:privlap}.

    Thus we obtain, with probability $1-\beta$, a $(1+\alpha_\mS)$ multiplicative approximation using the same calculation as in Theorem \ref{thrm:utilitykrank} and a suitable rescaling, and an additive $O(\frac{k^{3}d^{3/2}}{\alpha_\mS^6} \varepsilon^{-1} \log^2 \frac{k}{\alpha_\mS\beta})$ approximation.
\end{proof}

For ridge regression, we achieve the following bound.

\begin{theorem}\label{thrm:linreg}
    Let $\epsilon \geq 0$, $\delta \in (0,1)$, $t'<n$ and $\sigma$ be chosen as described in Theorem~\ref{thm:privacynew}. 
    Let $\mS\sim \mathcal{D}_{m,n}^{\algofont{sketch}}$. Define $A_{\lambda}$ that performs linear regression on the sketched noisy inputs, outputting $\text{argmin}_{\x}\lVert \mdata \x -\predictor \rVert^2 +\lambda \lVert \x \rVert^2$, where the error is measured by $\text{min}_{\x}\lVert \mdata \x -\predictor \rVert^2 +\lambda \lVert \x \rVert^2$. An instantiation of the LTM $(R_D,T_\sketch,A_{\lambda})$ specified in Equations~\ref{eq:R} and \ref{eq:Ts}, using noise distribution $D \sim \mathcal{N}(0,\sigma^2)$ according to Corollary \ref{thm:privacynew}, which tolerates $t'$ corrupted clients, computes linear regression parameters for any $\smult>0$ with probability $1-\beta$ with multiplicative error $1+\Tilde{O}\left(\alpha_S + p+ p^2\right)$ and additive error $\tilde{O}\left(p+p^2\right)$, where $p=\frac{d^3\alpha_S^{-5} \log^5 \frac{1}{\beta}\cdot \varepsilon^{-2}\log \frac{1}{\delta}}{\lambda}$.
\end{theorem}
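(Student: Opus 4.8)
The plan is to analyze the mechanism's output $\hat{\x}$ as the minimizer of a \emph{perturbed} ridge objective, bound the perturbation using the OSNAP subspace‑embedding guarantee (Definition~\ref{def:osnap}, as conditioned on in Section~\ref{sec:prelims}) and the Gaussian spectral bounds (Lemma~\ref{lem:spectral_gen}), and then run the usual sketch‑and‑solve sandwiching argument. Write $f(\x)=\lVert\mdata\x-\predictor\rVert^2+\lambda\lVert\x\rVert^2$, let $\x_{opt}=\text{argmin}_\x f(\x)$ so that $OPT=f(\x_{opt})$, and record the elementary facts $\lVert\x_{opt}\rVert^2\le OPT/\lambda$ and, more generally, that any $\x$ with $f(\x)\le C$ satisfies $\lVert\x\rVert^2\le C/\lambda$. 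Unfolding the sparse decomposition exactly as in the proof of Theorem~\ref{thrm:utilitykrank}, the quantity handed to $A_\lambda$ is $\sketch(\mdata\x-\predictor)+\frac{1}{\sqrt{\decom}}(\tilde{\mnoise}\x-\tilde{\noise}_b)$, where $\tilde{\mnoise}=\sum_{i\in[\decom]}\sketch_i\mnoise_i$ and $\tilde{\noise}_b=\sum_{i\in[\decom]}\sketch_i\noise_{b,i}$ aggregate the independent variance‑at‑most‑$\sigma^2$ Gaussian noise added to the data rows and to the target entries; thus $\hat{\x}=\text{argmin}_\x \hat f(\x)$ with $\hat f(\x)=\lVert\sketch(\mdata\x-\predictor)+\frac{1}{\sqrt{\decom}}(\tilde{\mnoise}\x-\tilde{\noise}_b)\rVert^2+\lambda\lVert\x\rVert^2$. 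Corrupt clients only reduce the number of honest noise contributors, which is already folded into the choice of $\sigma^2$ in Corollary~\ref{thm:privacynew}; for utility we bound against the worst case where every honest client injects the full variance, which is precisely the hypothesis of Lemma~\ref{lem:spectral_gen}.

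First I would condition, with total failure probability $O(\beta)$, on two events: the subspace‑embedding event, giving $\lVert\sketch(\mdata\x-\predictor)\rVert^2=(1\pm\alpha_S)\lVert\mdata\x-\predictor\rVert^2$ for all $\x$; and the spectral event of Lemma~\ref{lem:spectral_gen} with $k=O(d)$, giving $\frac{1}{\decom}\lVert\tilde{\mnoise}\x\rVert^2\le\tau^2\lVert\x\rVert^2$ for all $\x$ and $\frac{1}{\decom}\lVert\tilde{\noise}_b\rVert^2\le\rho^2$, for explicit $\tau^2,\rho^2$. Applying the generalized triangle inequality (Lemma~\ref{lemma:magic}) with a parameter of order $\alpha_S$ to the squared norm of the sum, followed by these two events, yields for every $\x$ the two‑sided estimate
\begin{align*}
  \hat f(\x) &\le (1+O(\alpha_S))\lVert\mdata\x-\predictor\rVert^2+\bigl(\lambda+O(\tau^2/\alpha_S)\bigr)\lVert\x\rVert^2+O(\rho^2/\alpha_S),\\
  \hat f(\x) &\ge (1-O(\alpha_S))\lVert\mdata\x-\predictor\rVert^2+\bigl(\lambda-O(\tau^2/\alpha_S)\bigr)\lVert\x\rVert^2-O(\rho^2/\alpha_S).
\end{align*}
The key point is that the $\lVert\x\rVert^2$‑dependent part of the noise perturbs \emph{only the regularizer, and multiplicatively}: once $\lambda$ dominates $\tau^2/\alpha_S$ --- which is exactly where the hypothesis $\lambda\ge\text{poly}(\varepsilon^{-1},d,\log1/\delta)$ is used --- we have $\lambda\pm O(\tau^2/\alpha_S)=(1\pm O(p))\lambda$, so $\hat f(\x)=(1\pm O(\alpha_S+p))f(\x)\pm O(\rho^2/\alpha_S)$.

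The bound then follows by sandwiching. From $\hat f(\hat{\x})\le\hat f(\x_{opt})$, the upper estimate applied at $\x_{opt}$, and $\lVert\x_{opt}\rVert^2\le OPT/\lambda$ (which converts the $\tau^2\lVert\x_{opt}\rVert^2$ contribution into an $O(p)\cdot OPT$ \emph{multiplicative} term rather than an additive one), one obtains $\hat f(\hat{\x})\le(1+O(\alpha_S+p))OPT+O(\rho^2/\alpha_S)$. It remains to recover $f(\hat{\x})$ from $\hat f(\hat{\x})$; here the noise is no longer a subspace embedding, so the generalized triangle inequality leaves an $O(\tau^2/\alpha_S)\lVert\hat{\x}\rVert^2$ term that must be absorbed into the genuine $\lambda\lVert\hat{\x}\rVert^2$ inside $\hat f$ --- again using $\lambda\gtrsim\tau^2/\alpha_S$ so that the $\lVert\hat{\x}\rVert^2$ coefficients cancel --- and, after inserting the a priori bound $\lVert\hat{\x}\rVert^2\le\hat f(\hat{\x})/\lambda$, this self‑referential step is what produces the second‑order $p^2$ terms in both the multiplicative and the additive error. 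This reverse step, together with pinning down the precise condition on $\lambda$ under which those coefficients cancel, is the main technical obstacle; everything else is bookkeeping. Finally, substituting the Cohen sketch parameters $\decom=O(\log d/\alpha_S)$ and $m=O(d\log(d/\beta)/\alpha_S^2)$ into $\tau^2$ and $\rho^2$, together with the variance $\sigma^2$ from Corollary~\ref{thm:privacynew}, and collapsing logarithmic factors into $\tilde{O}(\cdot)$, turns the abstract bound into the claimed multiplicative error $1+\tilde{O}(\alpha_S+p+p^2)$ and additive error $\tilde{O}(p+p^2)$.
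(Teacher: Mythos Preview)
Your proposal is correct and follows essentially the same route as the paper's proof: condition on the subspace‑embedding event and on the spectral bounds of Lemma~\ref{lem:spectral_gen}, separate the noise from the sketched data via the generalized triangle inequality (Lemma~\ref{lemma:magic}), use optimality of $\hat{\x}$ to swap to $\x_{opt}$, control $\lVert\hat{\x}\rVert^2$ by the a~priori bound $\lVert\hat{\x}\rVert^2\le\hat f(\hat{\x})/\lambda\le\hat f(\x_{opt})/\lambda$, and finally substitute the Cohen sketch parameters and the variance from Corollary~\ref{thm:privacynew}. One small clarification: there is no hypothesis $\lambda\ge\text{poly}(\varepsilon^{-1},d,\log 1/\delta)$ in the theorem---the bound holds for all $\lambda>0$, and the paper (like your argument) simply carries the $\tau^2/(\alpha_S\lambda)$ ratio through as $p$, so the ``absorption'' step does not need $\lambda$ to dominate; it just records the $p$ and $p^2$ contributions.
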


\begin{proof}
            Let $\epsilon$, $\delta$, $R_D$ and $T_\mS$ be given as in the theorem. We consider the output $A_{reg}(T_\mS(R_D(\data_1), \dots , R_D(\data_n)))$. 
            
            If $ n\leq 8m \ln(dm/\delta)+t'$, then the $R_D$ outputs $0^d$, then the optimal $\xest=0^d$, leading to error $\lVert \mdata\xest-\predictor \rVert_2^2 + \lambda\lVert\xest\rVert_2^2=\lVert\predictor\rVert^2\leq \eta' n\leq \eta' m \ln(dm/\delta)+t'$.
            
            Otherwise, $\mathcal{M}(\mdata) = T_\mS(R_D(\data_1), \dots , R_D(\data_n))=\frac{1}{\sqrt{\decom}}\sum_{i\in[\decom]}\mS_i(\mdata+\mnoise_i)$ where $\data_i$ denotes row $i$ of $\mdata\in\mR^{n\times d}$ and $\mnoise\leftarrow\mathcal{N}(0,\sigma^2)^{n\times d}$, is $(\epsilon,\delta)$ differentially private.
    
    To streamline the presentation, we give the analysis without corrupted clients. Adding corrupted clients merely changes the analysis along the same lines as Theorem \ref{thrm:utilitykrank}.
    
    We first the control the terms $ \|\sum_{i\in [s]}\sketch_i (\mnoise_i \x'-\noise_i)\|^2$ and \\ $ \|\sum_{i\in [s]}\sketch_i (\mnoise_i \xopt-\noise_i)\|^2$.
    Using Lemma \ref{lem:spectral_gen} with an added coordinate of $-1$ to both $\x'$ and $\xopt$, we get
    \begin{equation}
    \frac{1}{s}\|\sum_{i\in [s]}\sketch_i (\mnoise_i \x-\noise_i)\|^2 \leq \eta\cdot \sigma^2nd\log \frac{1}{\beta} (\|\x\|^2+1)
        \label{eq:specbound}
    \end{equation}
    for either $\x = \xopt$ or $\x=\x'$ and for a sufficiently large constant $\eta.$
    Then we have using Lemma \ref{lem:subspaceembedding}
            \begin{align}
                &\|\mdata \x'-\predictor\|^2 + \lambda \|x'\|^2  \nonumber \\
                & \leq (1+\alpha_S)\|\sketch(\mdata \x'-\predictor)\|^2 + \lambda \|x'\|^2 \nonumber \\
               = & (1+\alpha_S)\frac{1}{s} \|\sum_{i\in [s]}\sketch_i(\mdata \x'-\predictor)\|^2 + \lambda \|x'\|^2 \nonumber \\
               =& \lambda \|x'\|^2 + (1+\alpha_S)\frac{1}{s} \nonumber \\
               & \cdot \|\sum_{i\in [s]}\sketch_i((\mdata  +\mnoise_i- \mnoise_i)\x'-\predictor +\noise_i -\noise_i)\|^2 \nonumber \\
               \leq &(1+\alpha_S)\frac{1}{s}  \bigg((1+\alpha_S)\|\sum_{i\in [s]}\sketch_i((\mdata  +\mnoise_i)\x'-(\predictor +\noise_i))\|^2 \nonumber\\
               &+ \left(1 +\frac{1}{\alpha_S}\right) \|\sum_{i\in [s]}\sketch_i (\mnoise_i \x'-\noise_i)\|^2\bigg)  + \lambda \|\x'\|^2 \nonumber\\
               \leq & (1+\alpha_S)^2\cdot \Bigg(\frac{1}{s}\|\sum_{i\in [s]} S_i((\mdata +\mnoise_i)\x' - (\predictor+\noise_i)\|^2 \\
               & + \lambda\cdot \|\x'\|^2\Bigg)  + 2\cdot \left(1 +\frac{1}{\alpha_S}\right)\eta \sigma^2 nd\log \frac{1}{\beta}(\|\x'\|^2 +1)
               \label{eq:temp1}
            \end{align}
    where the second to last inequality follows by applying Lemma \ref{lemma:magic} and the final inequality follows from Equation \ref{eq:specbound}. By optimality of $\x'$ for the instance 
            $\frac{1}{s}\|\sum_{i\in [s]}\sketch_i(\mdata \x' +\mnoise_i-(\predictor +\noise_i))\|^2 + \lambda \|x'\|^2$, we then have
    \begin{align*}
        &\frac{1}{s}\|\sum_{i\in [s]} S_i((\mdata + \mnoise_i)\x' - (\predictor+\noise_i)\|^2 + \lambda\cdot \|\x'\| \\ & \leq \frac{1}{s}\|\sum_{i\in [s]} S_i((\mdata + \mnoise_i)\xopt - (\predictor+\noise_i)\|^2 + \lambda\cdot \|\xopt\|^2
    \end{align*}
    which likewise implies
    \begin{align*}
         \|\x'\| \leq \frac{1}{\lambda}\cdot\Bigg( & \frac{1}{s}\|\sum_{i\in [s]} S_i((\mdata + \mnoise_i)\xopt - (\predictor+\noise_i))\|^2  + \lambda\cdot \|\xopt\|^2\Bigg).
    \end{align*}
    
    Insertion both bounds back into Equation \ref{eq:temp1}, we obtain
    
            \begin{align}
                &\|\mdata \x'-\predictor\|^2 + \lambda \|x'\|^2 \nonumber\\
               & \leq (1+\alpha_S)^2 \nonumber\\
               & \left(\frac{1}{s}\|\sum_{i\in [s]}\sketch_i((\mdata  +\mnoise_i)\xopt-(\predictor +\noise_i))\|^2 + \lambda \|\xopt\|^2\right) \nonumber \\
               & + 2\left(1 +\frac{1}{\alpha_S}\right)\eta \sigma^2 nd\log \frac{1}{\beta} + 2\left(1 +\frac{1}{\alpha_S}\right)\frac{\eta \sigma^2 nd\log \frac{1}{\beta}}{\lambda}\\
               & \cdot\Bigg(\frac{1}{s}\|\sum_{i\in [s]} S_i((\mdata +\mnoise_i)\xopt - (\predictor+\noise_i))\|^2  + \lambda\cdot \|\xopt\|^2\Bigg) \label{eq:temp2}
            \end{align}
    We now turn our attention to $\|\sum_{i\in [s]}\sketch_i((\mdata  +\mnoise_i)\xopt-(\predictor +\noise_i))\|^2$. Using Lemma \ref{lemma:magic} and Equation \ref{eq:specbound}, we have
    \begin{align*}
        \frac{1}{s}&\|\sum_{i\in [s]}\sketch_i((\mdata  +\mnoise_i)\xopt-(\predictor +\noise_i))\|^2 + \lambda\cdot \|\xopt\|^2\\
        \leq & (1+\alpha_S)\frac{1}{s}\|\sum_{i\in [s]}\sketch_i(\mdata \xopt -\predictor)\|^2 \\
        &+  \left(1+\frac{1}{\alpha_S}\right)\frac{1}{s}\|\sum_{i\in [s]}\sketch_i(\mnoise_i \xopt -\noise_i)\|^2  + \lambda\cdot \|\xopt\|^2\\
        \leq & (1+\alpha_S)^2\|\mdata \xopt -\predictor\|^2 \\
        &+  \left(1+\frac{1}{\alpha_S}\right) \eta \sigma^2 nd\log \frac{1}{\beta}(\|\xopt\|^2 +1)  + \lambda \|\xopt\|^2\\
        \leq & (1+\alpha_S)^2\left(\|\mdata \xopt -\predictor\|^2 + \lambda\|\xopt\|^2\right) \\
        & +\left(1+\frac{1}{\alpha_S}\right) \frac{\eta \sigma^2 nd\log \frac{1}{\beta}}{\lambda}(\lambda\|\xopt\|^2 +\lambda) \\
        \leq & \left((1+\alpha_S)^2 + \left(1+\frac{1}{\alpha_S}\right) \frac{\eta \sigma^2 nd\log \frac{1}{\beta}}{\lambda}\right) \\
        & \left(\|\mdata \xopt -\predictor\|^2 + \lambda \cdot \|\xopt\|^2\right) \\
        & +\left(1+\frac{1}{\alpha_S}\right) \eta \sigma^2 nd\log \frac{1}{\beta}
    \end{align*}
    
    Inserting this into Equation \ref{eq:temp2} and collecting all the terms, we obtain
            \begin{align}
                &\|\mdata \x'-\predictor\|^2 + \lambda \|x'\|^2 \nonumber\\
               \leq & \Bigg( \frac{32}{\alpha_S}\left(\frac{\eta \sigma^2 nd\log \frac{1}{\beta}}{\lambda} + \frac{2}{\alpha_S}\left(\frac{\eta \sigma^2 nd\log \frac{1}{\beta}}{\lambda}\right)^2\right) \\
               & + (1+\alpha_S)^4\Bigg) \left(\|\mdata\xopt-\predictor\|^2 + \lambda \|\xopt\|^2\right) \nonumber \\
               & + \frac{32}{\alpha_S}\left(\frac{\eta \sigma^2 nd\log \frac{1}{\beta}}{\lambda} + \frac{2}{\alpha_S}\left(\frac{\eta \sigma^2 nd\log \frac{1}{\beta}}{\lambda}\right)^2\right) \nonumber
            \end{align}        
    
    By our choice of $\sigma^2$ from Theorem \ref{thm:privacynew} and using the sketching matrix of \cite{Cohen16} (see Table \ref{tab:ose}), we have
    $\sigma^2\cdot n \in O(d^2\log^4 d  \cdot \alpha_S^{-4} \cdot \log^4 \frac{1}{\beta} \varepsilon^{-2} \cdot \log\frac{1}{\delta})$. Thus, we have
    
         \begin{align*}
            &\|\mdata \x'-\predictor\|^2 + \lambda \|x'\|^2 
           \leq  (\|\mdata  \xopt-\predictor \|^2 + \lambda \|\xopt\|^2) \cdot\Bigg(1+15\alpha_S \\
           & + \tilde{O}\Bigg(\frac{d^3\alpha_S^{-5} \log^5 \frac{1}{\beta}\cdot \varepsilon^{-2}\log \frac{1}{\delta}}{\lambda}  + \frac{d^6\alpha_S^{-10} \log^{10} \frac{1}{\beta}\cdot \varepsilon^{-4}\log^2 \frac{1}{\delta}}{\lambda^2}\Bigg)\Bigg)\\
           &+ \tilde{O}\Bigg(\frac{d^3\alpha_S^{-5} \log^5 \frac{1}{\beta}\cdot \varepsilon^{-2}\log \frac{1}{\delta}}{\lambda}  + \frac{d^6\alpha_S^{-10} \log^{10} \frac{1}{\beta}\cdot \varepsilon^{-4}\log^2 \frac{1}{\delta}}{\lambda^2}\Bigg)
        \end{align*}

\end{proof}

Note that we can get a constant multiplicative error as long as the regularization factor $\lambda$ depends on a sufficiently large polynomial in $d$, while being independent of $n$. Generally such a relationship still has good generalization properties when training a regression model. If $\lambda$ is sufficiently large, one should choose $\alpha_S$ to be as small as possible, minimizing the tradeoff between $\alpha_S$ and $\frac{\alpha_S^{-5}}{\lambda}$. If $\lambda$ is not sufficiently large, one should choose $\alpha_S = 1/3$ such that $\sketch$ is an oblivious subspace embedding, but the target dimension does not have a prohibitively large dependency on the sketch distortion.

\section{Experimental Evaluation}
\label{sec:experiments}

Section \ref{experiments_appendix_runtime} provides the running time experiments and in Section \ref{experiments_appendix_interpolation} we discuss our utility investigation both for ridge regression and low-rank approximation.

\subsection{Running Time of LTM} \label{experiments_appendix_runtime}

The only overhead of the LTM over the local model is the execution of the linear transform in MPC, and comes from distributing $\mdata + \mnoise_i \in \R^{n \times d}$ among the servers for $i \in [s]$ and then performing matrix multiplications $\sketch_i(\mdata + \mnoise_i)$ in MPC, where $\sketch_1,\dots,\sketch_s$ is a decomposition of some $\sketch \in \R^{m \times n}$ chosen according to \cite{nelson2013osnap,Cohen16} with sparsity $s$.
We implemented our approach in MP-SPDZ~\cite{DBLP:conf/ccs/Keller20}, a popular and easy to use MPC framework, for multiple combinations of $n$ and $s$, using $S$ servers from Amazon Web Services t3.large instances.
We use additive secret sharing over the ring of integers modulo $2^{64}$, which results in no communication between servers, and semi-honest security against an adversary that corrupts all but one servers.

We fix the dimensionality $d=10$ and $m=50$, and vary the number of clients $n \in \{ 100000,250000,500000,750000,1000000 \}$, the sparsity $s \in \{ 1, 10, 20, 30, 40, 50 \}$ of the sketch (which dictates how many linear transformations we need to apply) and the number of servers $S \in \{ 2,3 \}$. The data matrix $\mdata$ is generated at random such that all entries are smaller than $2^{32}$ and the decomposition $\sketch_1,\dots,\sketch_s$ of some sketch $\sketch$ is generated according to our mechanism. Every parameter setting is then evaluated by running the protocol $10$ times and averaging over the running times. Table~\ref{tab:mp-spdz-3S} shows the runtime for each server and the total communication load for one particular combination of parameters, which includes three servers: Even with one million clients, the computation on each servers lasts less than $2$ seconds. We conclude that the use of MPC in the LTM does not hinder the computational efficiency of our proposed mechanisms, showing that they can easily be used in practice. Table~\ref{tab:mp-spdz-2S} reports communication and time required when running the protocol with two servers, and Table~\ref{tab:mp-spdz-vary-s} reports time and communication requirements when varying $\decom$.

\begin{table}[h]
    \caption{Computation cost $T_{\text{MPC}}$ and communication cost $C_{\text{MPC}}$ of the LTM using MPC for varying number of clients $n$. Here $m=100$, $d=10$, $s=1$ and we are working with $S=3$ servers.}
    \label{tab:mp-spdz-3S}
    \centering
        \begin{tabular}{|c|c|c|}
            \hline
            $n$ & $T_{\text{MPC}}$ (sec) & $C_{\text{MPC}}$ (MB) \\
            \hline
            $100000$ & $0.172 \pm 0.019$ & $24.024$  \\
            $250000$ & $0.457 \pm 0.058$ & $60.024$  \\
            $500000$ & $0.861 \pm 0.102$ & $120.024$  \\
            $750000$ & $1.219 \pm 0.128$ & $180.024$  \\
            $1000000$ & $1.641 \pm 0.165$ & $240.024$  \\
            \hline
        \end{tabular}
\end{table}

\begin{table}[h]
    \caption{Computation cost $T_{\text{MPC}}$ and communication cost $C_{\text{MPC}}$ of the LTM using MPC 
    for varying number of clients $n$. Here $m=100$, $d=10$, $s=1$ and we are working with $S=2$ servers.}
    \label{tab:mp-spdz-2S}
    \centering
        \begin{tabular}{|c|c|c|}
            \hline
            $n$ & $T_{\text{MPC}}$ (sec) & $C_{\text{MPC}}$ (MB)  
            \\
            \hline
            $100000$ & $0.177 \pm 0.028$ & $16.016$ 
            \\
            $250000$ & $0.453 \pm 0.036$ & $40.016$ 
            \\
            $500000$ & $0.824 \pm 0.101$ & $80.016$ 
            \\
            $750000$ & $1.240 \pm 0.157$ & $120.016$ 
            \\
            $1000000$ & $1.683 \pm 0.130$ & $160.016$ 
            \\
            \hline
        \end{tabular}
\end{table}

\begin{table}[h]
    \caption{Computation cost $T_{\text{MPC}}$ and communication cost $C_{\text{MPC}}$ of the LTM using MPC 
    for varying sparsity $s$. Here $m=100$, $d=10$, $n=500000$ and we are working with $S=3$ servers.}
    \label{tab:mp-spdz-vary-s}
    \centering
        \begin{tabular}{|c|c|c|}
            \hline
            $s$ & $T_{\text{MPC}}$ (sec) & $C_{\text{MPC}}$ (MB) 
            \\
            \hline
            $1$ & $0.851 \pm 0.099$ & $80.016$ 
            \\
            $10$ & $6.863 \pm 0.388$ & $800.016$ 
            \\
            $20$ & $13.002 \pm 0.445$ & $1600.016$ 
            \\
            $30$ & $20.356 \pm 1.171$ & $2400.016$ 
            \\
            $40$ & $28.023 \pm 0.660$ & $3200.016$ 
            \\
            $50$ & $33.708 \pm 1.384$ & $4000.016$ 
            \\
            \hline
        \end{tabular}
\end{table}

\subsection{Interpolation between Local and Central DP}\label{experiments_appendix_interpolation}

\begin{figure*}
	\centering
	\begin{subfigure}
        \centering
            \includegraphics[width=0.4\linewidth]{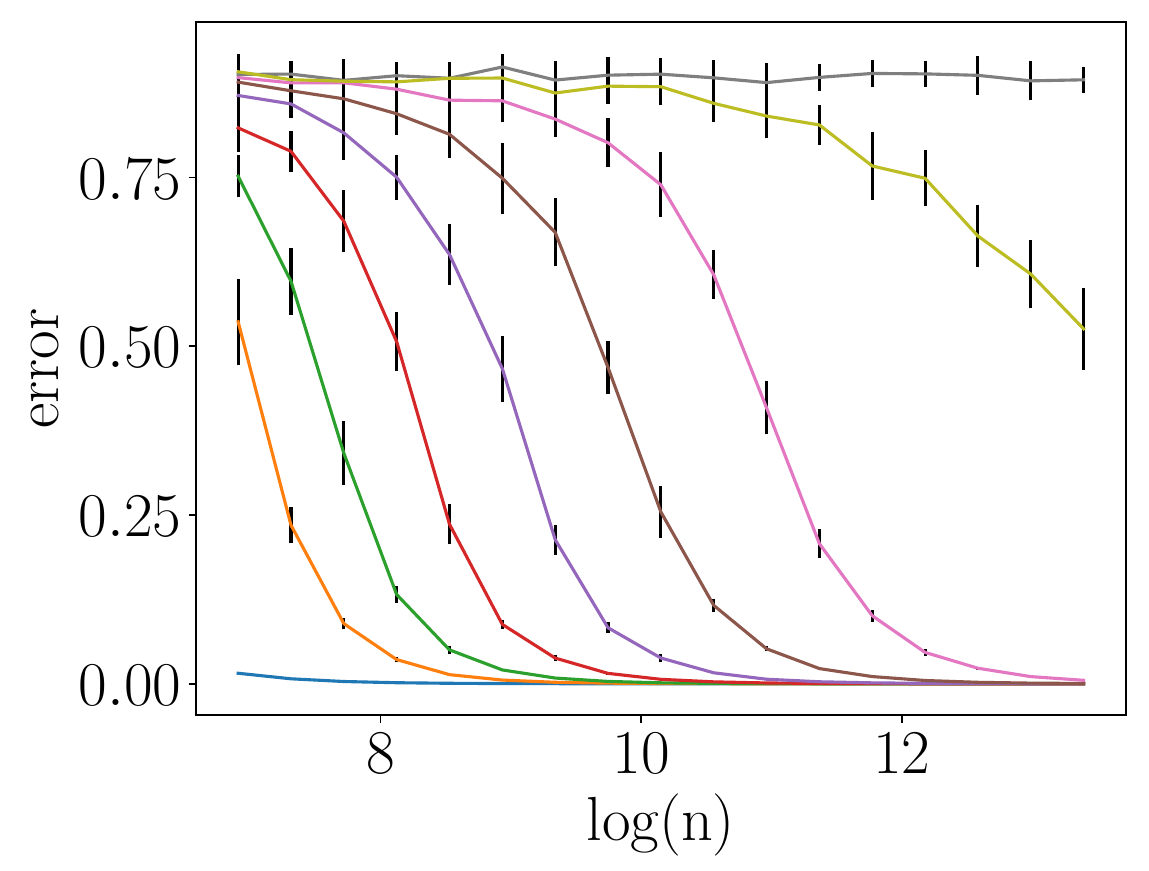}
            \includegraphics[width=0.4\linewidth]{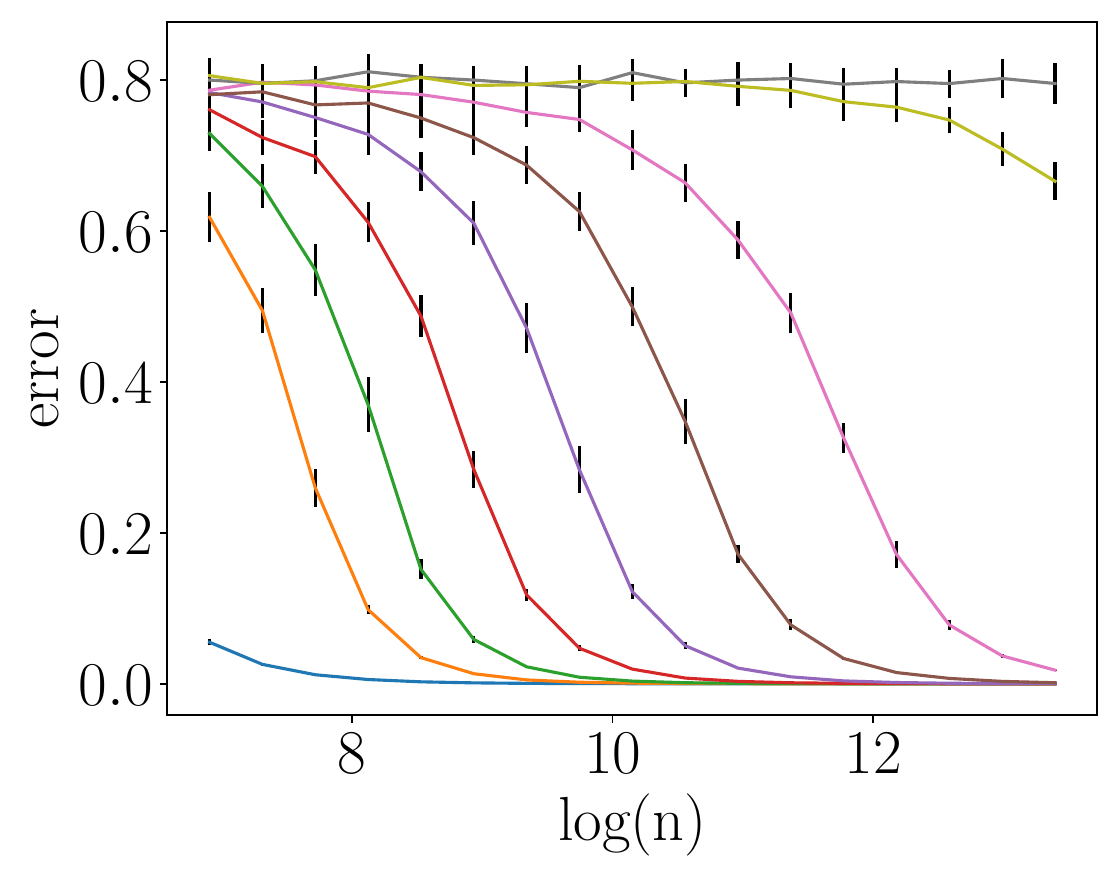}\\
            \includegraphics[width=0.4\linewidth]{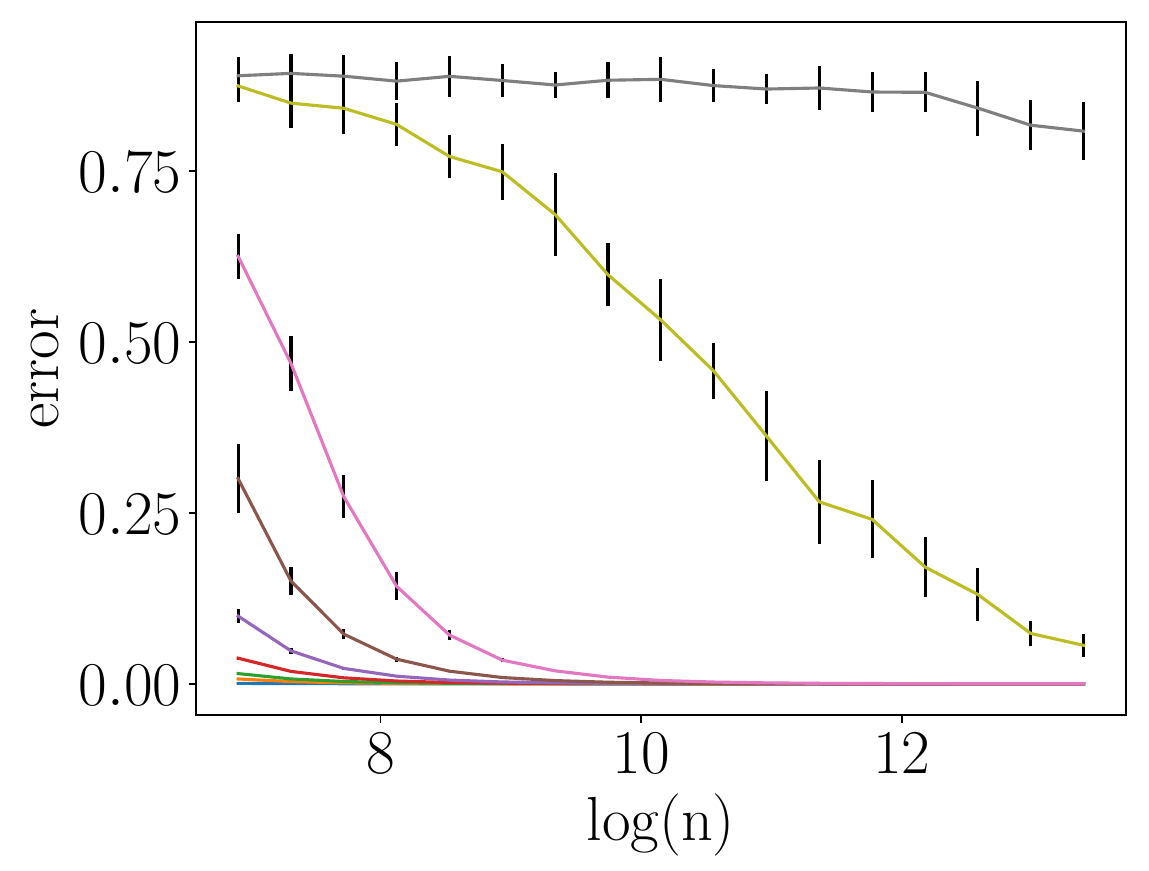}
            \includegraphics[width=0.4\linewidth]{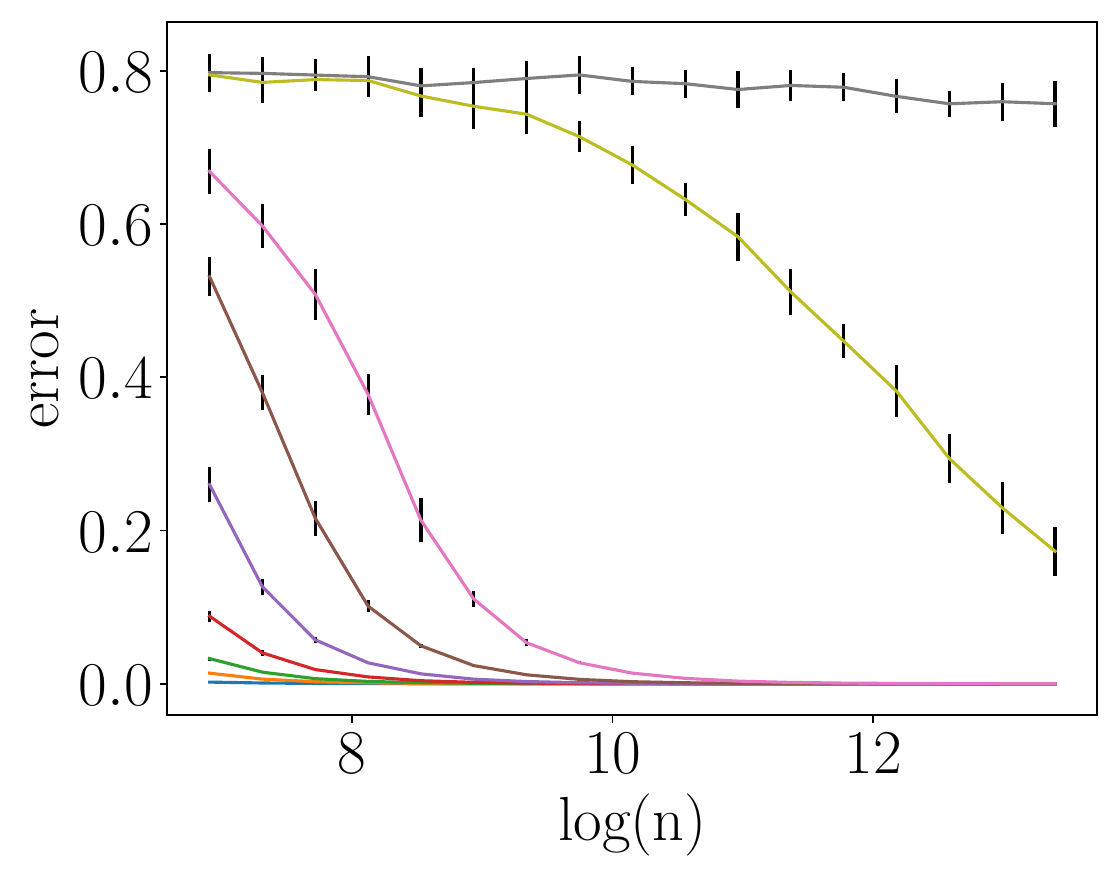}\\
            \includegraphics[width=0.6\linewidth]{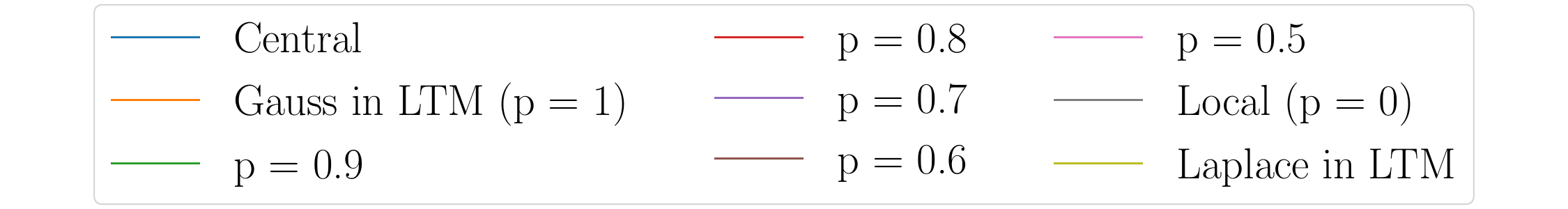}
    \end{subfigure}
    \caption{Plots depicting the asymptotic behavior of error $\psi$ for $\epsilon \in \{ 0.1,0.5 \}$ (top, bottom) and $k \in \{ 5,10 \}$ (left, right), with $d=50$. The gray line depicts the error of the local mechanism, and the orange and lime lines depict our approach using Gaussian and Laplacian noise respectively. The other lines resemble different values of $p$ when using Gaussian noise. The standard deviations are depicted by the vertical black lines and the $x$-axis is logarithmic in the number of clients $n$.}
    \label{fig:low-rank-experiment-main}
\end{figure*}
  \begin{table}[h]
    \caption{Parameters of real-world datasets.}
    \label{tab:real-world-data-low-rank-params}
    \begin{center}
        \begin{tabular}{ | l | c | c | c  | }
        \hline
        \textbf{Dataset} & $n$ & $d$ & $k$  \\ 
        \hline
        Power \cite{misc_individual_household_electric_power_consumption_235}  & $2049280$ & $6$ & $3$\\ 
        Elevation \cite{misc_3d_road_network_north_jutland_denmark_246}  & $434874$ & $2$ & $1$ \\
        Ethylene \cite{misc_gas_sensor_array_under_dynamic_gas_mixtures_322} & $4178504$ & $18$ & $5$  \\
        Songs  \cite{misc_yearpredictionmsd_203} & $515345$ & $89$ & $15$  \\
        \hline
        \end{tabular}
    \end{center}
 \end{table}

  \begin{table*}[h]
    \caption{Experimental evaluation of error $\psi$ on real-world datasets, including standard deviations. Here we are in the setting where $\epsilon = 0.05$.}
    \label{tab:real-world-data-low-rank}
    \begin{center}
        \begin{tabular}{ | l | c | c | c | c |  }
        \hline
        \textbf{Dataset}  & \textbf{Local} & \textbf{LTM Laplace} & \textbf{LTM Gauss} & \textbf{Central} \\ 
        \hline
        Power  & $ (1.206 \pm 0.651) \times 10^{-2} $ & $(4.036 \pm 2.448) \times 10^{-3}$ & $(2.108 \pm 2.400) \times 10^{-6}$ & $(1.763 \pm 0.621) \times 10^{-6}$ \\ 
        Elevation & $(1.193 \pm 0.896) \times 10^{-1}$ & $(1.520 \pm 4.484) \times 10^{-2}$ & $(4.718 \pm 4.201) \times 10^{-8}$ & $(4.465 \pm 4.836) \times 10^{-8}$ \\
        Ethylene   & $(5.877 \pm 1.256) \times 10^{-4}$ & $(5.821 \pm 1.074) \times 10^{-4}$ & $(2.420 \pm 0.486) \times 10^{-6}$ & $(7.901 \pm 1.909) \times 10^{-6}$  \\
        Songs   & $(5.617 \pm 0.283) \times 10^{-7}$ & $(5.629 \pm 0.311) \times 10^{-7}$ & $(5.556 \pm 0.347) \times 10^{-7}$ & $(5.613 \pm 0.324) \times 10^{-7}$ \\
        \hline
        \end{tabular}
    \end{center}
 \end{table*} 

In contrast to the central model of differential privacy, the performance of differentially private outputs in the local model decreases with increasing number of clients $n$.
For regression and matrix factorization problems, our theoretical results for the LTM and the Gaussian mechanism suggest that this performance decrease can be avoided without sacrificing privacy under reasonable assumptions.
It is natural to ask for which parametrizations of the Gaussian mechanism the performance remains acceptable. If more noise still yields good performance, we can tolerate more corrupted clients. 
Therefore, we evaluate empirically how the error of ridge regression and low-rank approximation develops as $n$ grows in practice and against synthetic benchmarks.

The experiments described here thus aim to demonstrate that, as the number of clients increases, the asymptotic error of differential privacy mechanisms in the LTM is between the error in the central and local models.

\mypar{Setup.}
Our Gaussian mechanism in the LTM adds noise sampled from $\mathcal{N}(0,\sigma^2)$ to every entry in input data matrix $\mdata$.
The variance $\sigma^2$ depends on privacy parameters $\epsilon$ and $\delta$ and is chosen proportional to $n^{-p}$ for $p \in [0,1]$.
This enables us to interpolate between the local model ($p = 0$) and the LTM ($p=1$). For each dataset we measure error for the central model, the Laplace mechanism and the Gaussian mechanism with $p \in \{ 1, 0.9, 0.8, 0.7, 0.6, 0.5, 0 \}$, by reporting the average error over $20$ runs of the algorithms per dataset.
The Laplace mechanism in the LTM on the other hand adds the sum of $m$ samples from $\text{Lap}(0,1/\epsilon)$ to every entry in $\sketch\mdata$ in order to investigate the $\delta = 0$ regime in the LTM.

As a baseline for low-rank approximation in the central model, we implemented MOD-SULQ \cite{DBLP:conf/nips/ChaudhuriSS12}, an approach where $\mdata^T\mdata$ is perturbed by adding Gaussian noise.
We measure the error in terms of the excess risk;
\begin{equation*}
    \psi = \frac{\lVert \mdata - \mdata\mxest\mxest^T \rVert_F^2 - \lVert \mdata - \mdata\mxopt\mxopt^T \rVert_F^2}{n},
\end{equation*}
where $\mxopt$ denotes the optimal solution and $\mxest$ denotes the solution after adding noise to the training data.
We vary the privacy parameter $k$ based on the dataset at hand.

For linear regression in the central model, we implemented the so-called Sufficient Statistics Perturbation (SSP) algorithm~\cite{DBLP:conf/icdm/VuS09}. Due to its singular dependency on a multiplicative error, we use approximation factor as an error measure:
\begin{equation*}
    \phi = \frac{\lVert \mdata \xest - \predictor \rVert_2^2 + \lambda \lVert \xest \rVert_2^2}{\lVert \mdata \xopt - \predictor \rVert_2^2 + \lambda \lVert \xopt \rVert_2^2}
\end{equation*}
where $\xopt$ denotes the optimal solution and $\xest$ denotes the solution after adding noise to the training data. 

  \begin{figure*}
	\centering
	\begin{subfigure}
        \centering
            \includegraphics[width=0.4\linewidth]{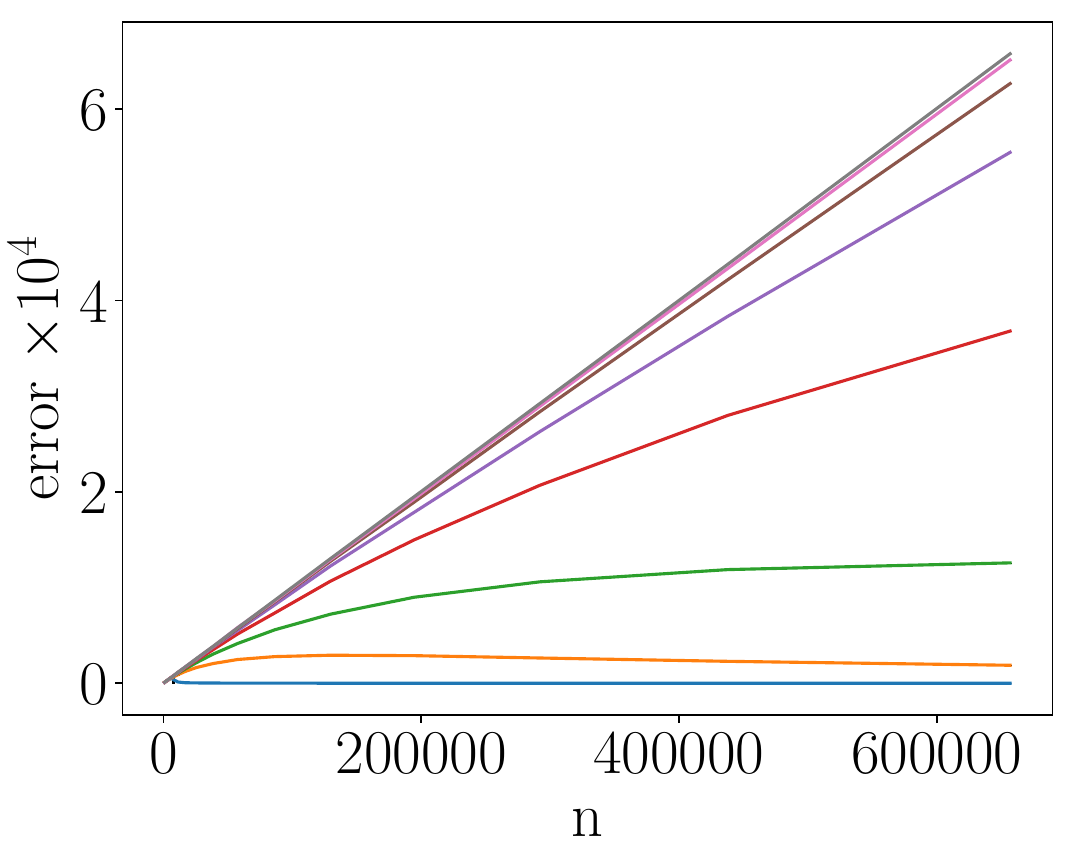}
            \includegraphics[width=0.4\linewidth]{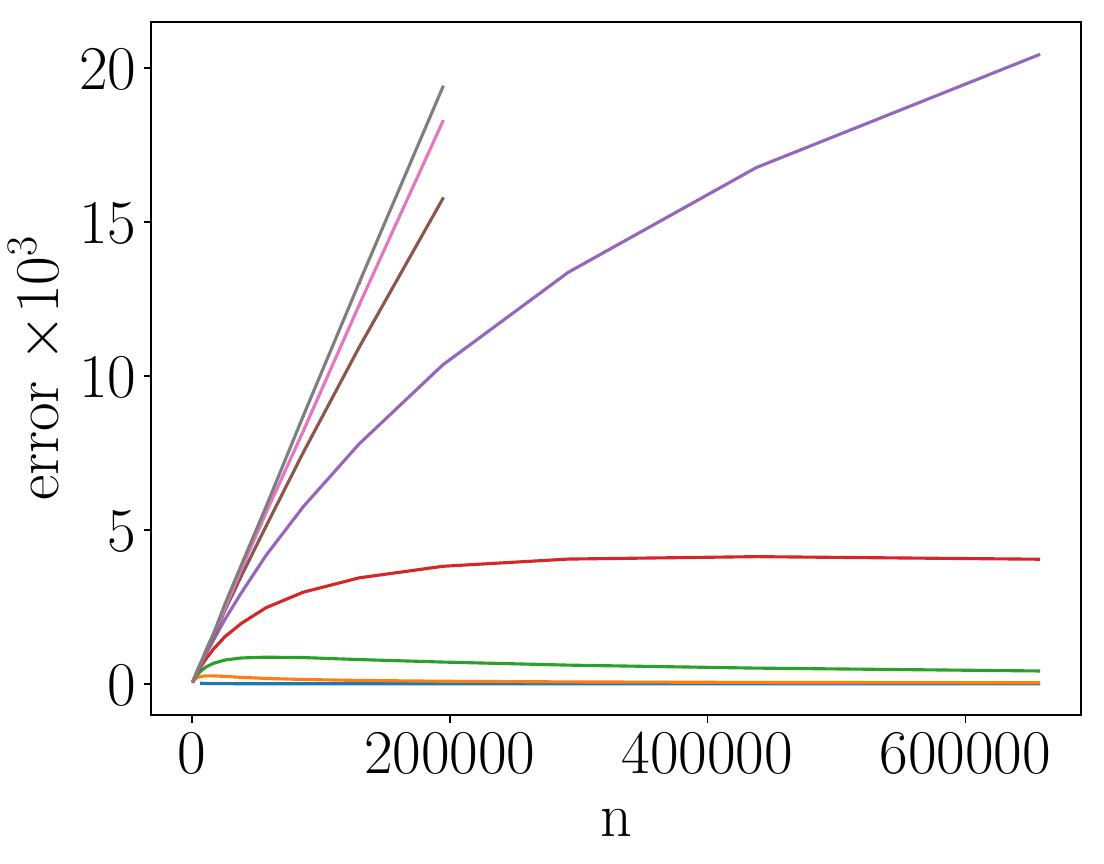}
    \end{subfigure}
    \\
    \begin{subfigure}
        \centering
        \includegraphics[width=0.4\linewidth]{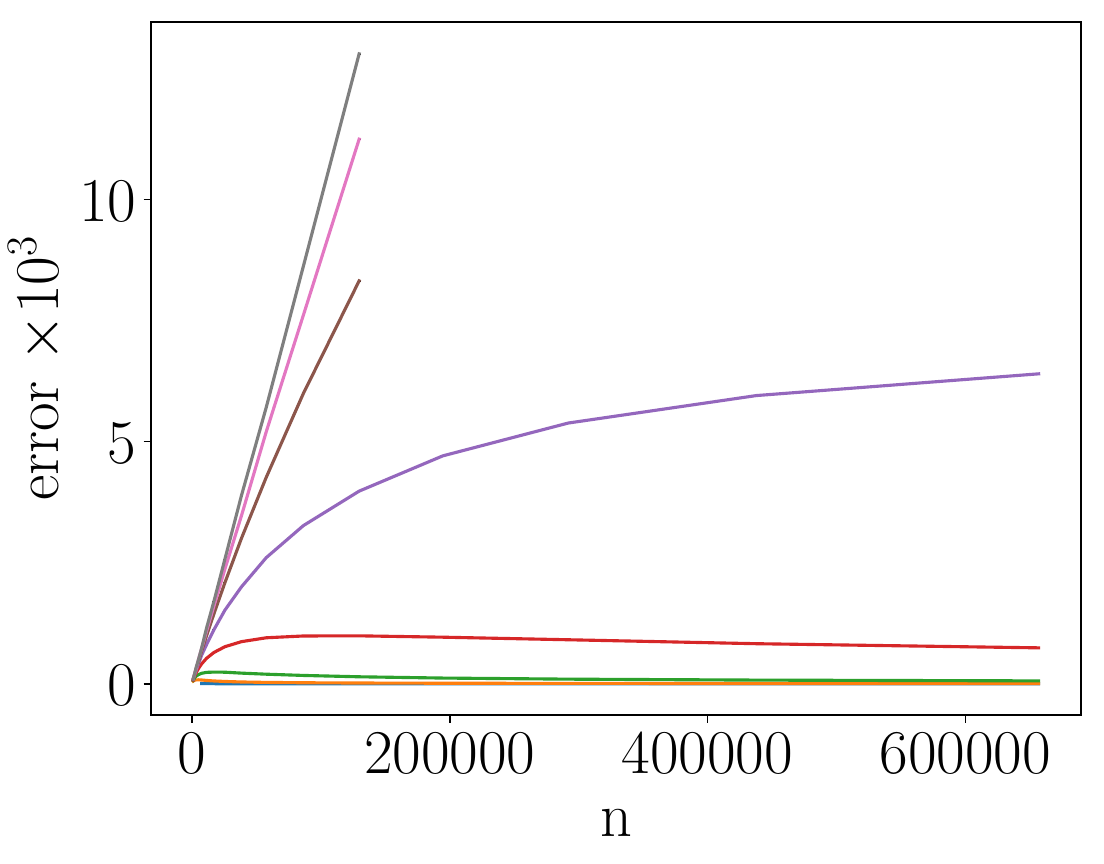}
        \includegraphics[width=0.4\linewidth]{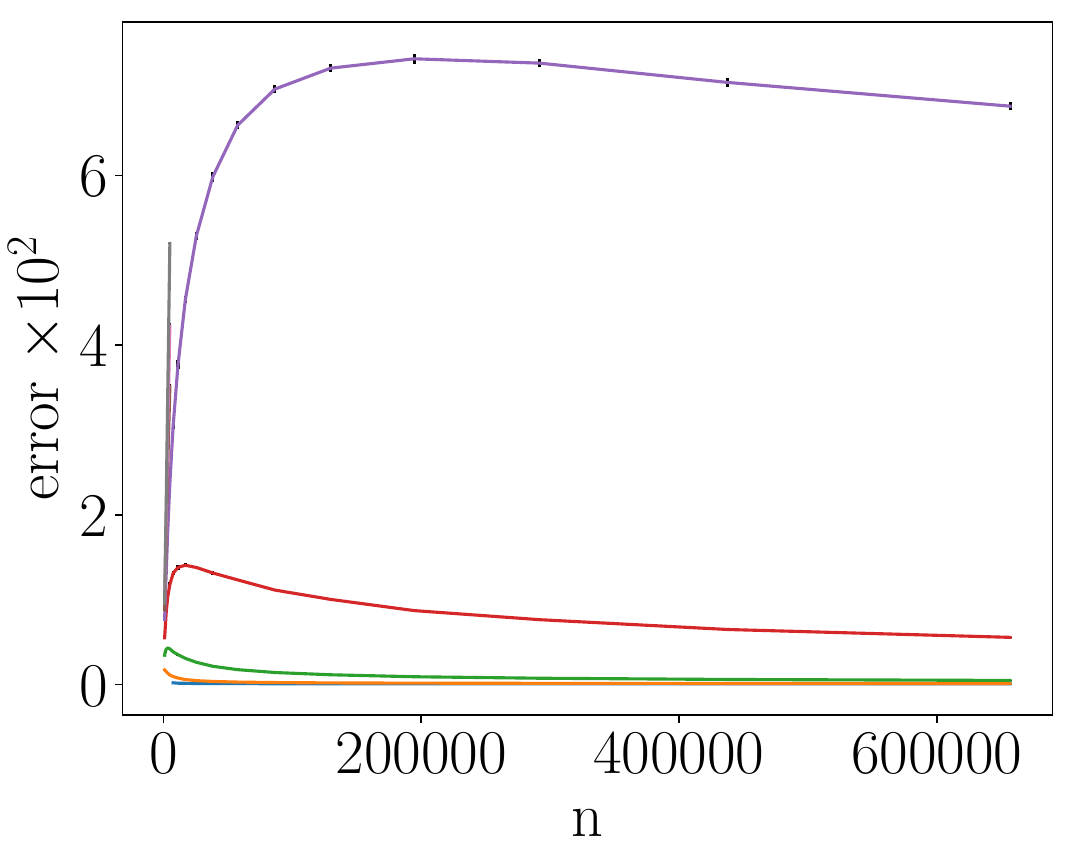} \\
        \includegraphics[width=0.6\linewidth]{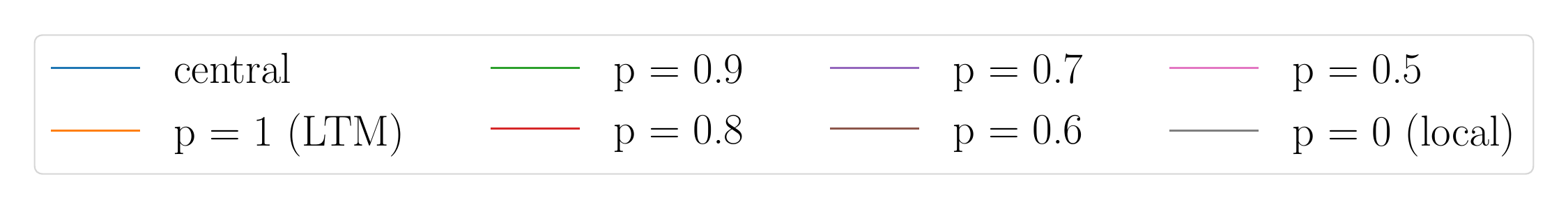}
    \end{subfigure}
    \caption{Plots depicting the asymptotic behavior of error $\phi$ for $\epsilon \in \{ 0.01,0.03,0.05,0.1 \}$ (top left, top right, bottom left, bottom right), with $d=10$, $\lambda = 10$ and $\mu^2 = n$. The grey line depicts the error of the local mechanism, the blue line does it in the central model and the orange one depicts our approach. The other lines resemble different values of $p$. In most cases the standard deviations are so small, that it is not possible to see those.}
    \label{fig:experiment}
\end{figure*}

\mypar{Datasets.}
We evaluate our mechanism on synthetic and real datasets, enabling  us to vary the number of sampled points $n$ and interpolating between datasets of different sizes.

The synthetic datasets for low-rank approxiamtion have a large spectral gap from the $k$th to the $(k+1)$st singular value with random bases, which emphasizes the performance difference between the various private mechanisms.
We generate synthetic data by first producing a matrix $\mdata'$ where every entry is sampled from $\mathcal{N}(0,1)$ and then changing its singular values such that there are exactly $k$ big ones and the rest are small. More specifically, we set $\mdata = \mathbf{U}'\mathbf{\Sigma}\mathbf{V}'$, where $\mathbf{U}'\mathbf{\Sigma}'\mathbf{V}'$ is the SVD of $\mdata$ and $\Sigma$ is a diagonal matrix with the first $k$ values set to $\sqrt{n / k}$ and the rest set to $1 / n$. We vary parameter $k \in \{ 5,10 \}$ for $d = 50$. 

We generate synthetic data for regression by first sampling every entry in $\mdata$ from $\mathcal{N}(0,1)$, then sampling $\xopt$ such that all entries are sampled from $\mathcal{N}(0,\mu^2)$, and finally setting $\predictor = \mdata \xopt$.
We vary parameters $d \in \{ 3,10,50 \}$, and  $\mu^2 \in \{ 1,n,n^2 \}$.

For all combinations of parameters, we generate $17$ synthetic datasets of sizes $\{ 1000^{i \cdot 1.5} | i = 0, \dots , 16 \}$. For each of those we then measure error for the central mechanism, the Laplace mechanism and the Gaussian mechanism with $p \in \{ 1, 0.9, 0.8, 0.7, 0.6, 0.5, 0 \}$, by running the algorithms $20$ times per dataset and reporting the average error.

In addition to synthetic datasets, we also evaluated our mechanism on $4$ datasets from the UC Irvine Machine Learning Repository. Table \ref{tab:real-world-data-low-rank-params} provides their number of entries $n$ and dimensionality $d$, as well as the rank $k$ we chose for low-rank approximation experiments. 

\begin{itemize}
    \item The first dataset consists of electric power consumption measurements in one household \cite{misc_individual_household_electric_power_consumption_235} and the feature we try to predict is \texttt{sub\_metering\_3}. We ignore the date and time features and the data points that had missing values. This leaves us with $6$ features (plus the one we are predicting) and $2049280$ data points. 
    \item The Elevation dataset \cite{misc_3d_road_network_north_jutland_denmark_246} consists of $434874$ open street map elevation measurements from North Jutland, Denmark. We predict the elevation from the longitude and latitude  features.
    \item The Ethylene dataset \cite{misc_gas_sensor_array_under_dynamic_gas_mixtures_322} contains recordings of sensors exposed to a mixture of gas. We trained on the part where the sensors were exposed to a mixtures of Ethylene and CO in air. The feature we are predicting is the last one \texttt{TGS2620}, which leaves us with $d = 18$ and $n = 4178504$.
    \item The Songs dataset \cite{misc_yearpredictionmsd_203}, consists of $89$ audio features that are meant to predict the release year of a song.
\end{itemize}

\mypar{Low-Rank Approximation.}
Figure \ref{fig:low-rank-experiment-main} shows our results for two privacy regimes ($\epsilon \in \{ 0.1, 0.5 \}$) with synthetic data and Table \ref{tab:real-world-data-low-rank} shows the error for $\epsilon = 0.05$ with real-world datasets.
In all chosen parameter settings, we observe that as $n$ grows, the error in the LTM asymptotically approaches the error in the central model, both on real and synthetic data sets.
On real-world datasets, our Gaussian approach performs significantly better than the Gaussian mechanism in the local model and is very close to the central MOD-SULQ \cite{DBLP:conf/nips/ChaudhuriSS12} mechanism. 
Both with synthetic and real-world datasets, our Laplace approach performs better than the local model, though not as significantly as the Gaussian approach.
However, this is the price of achieving pure $\epsilon$-differential privacy with $\delta = 0$.

\begin{figure*}
    \centering
    \includegraphics[width=0.9\linewidth]{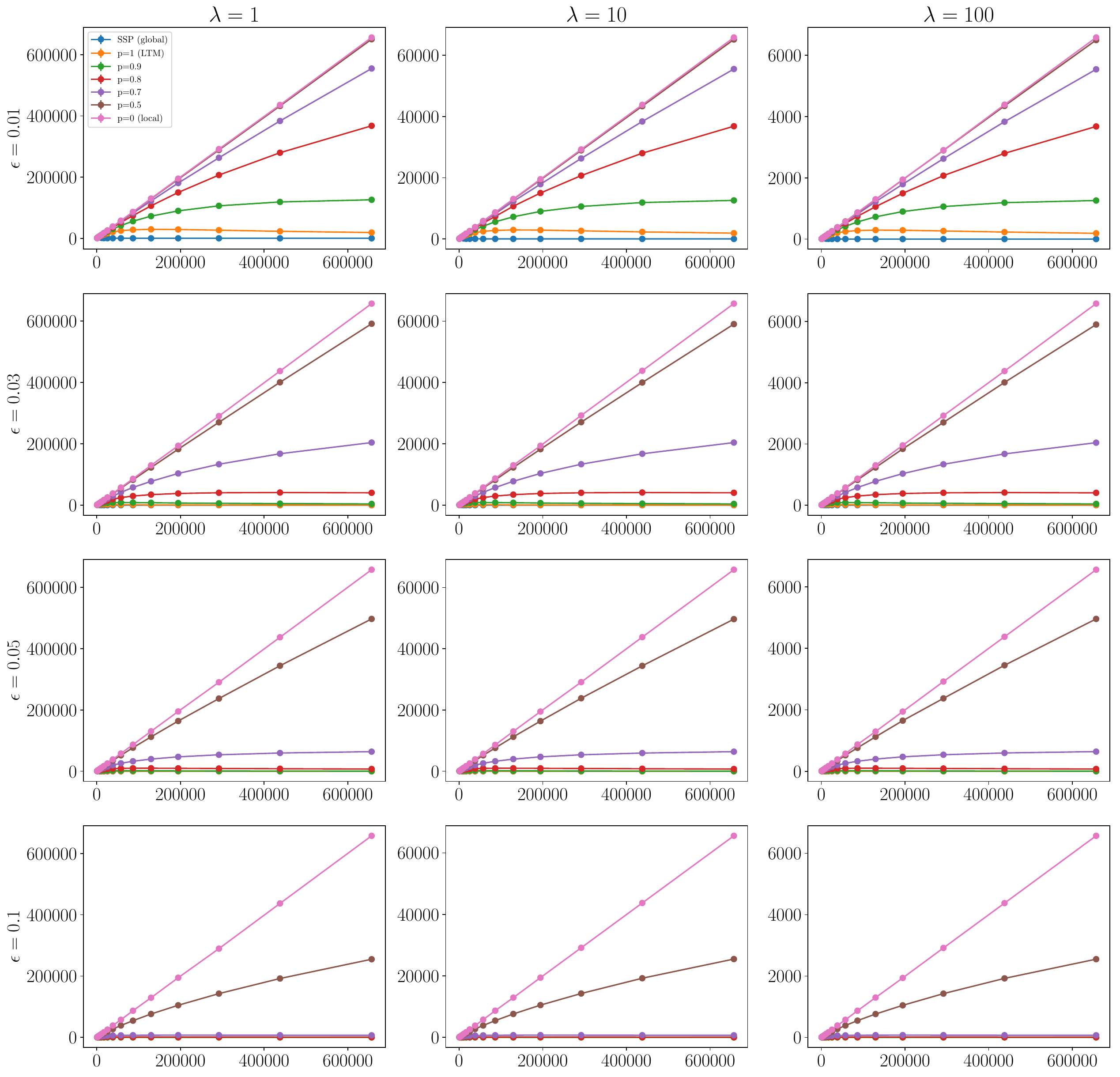}
    \caption{Plots depicting the error $\phi$ as $n$ increases, for $\epsilon \in \{ 0.01,0.03,0.05,0.1 \}$ (top to bottom) and $\lambda \in \{ 1,10,100 \}$ (left to right). The middle row depicts the same choice of parameters as Figure \ref{fig:experiment}.}
    \label{fig:experiment-appendix}
\end{figure*}
 \begin{table}
    \caption{Experimental evaluation of error $\phi$ on real-world datasets, including standard deviations. Here we are in the setting where $\epsilon = 0.03$ and $\lambda = 10$.}
    \label{tab:real-world-data}
    \begin{center}
        \begin{tabular}{ | l | c | c | c | }
        \hline
        \textbf{Dataset} & Local & Our & Central \\ 
        \hline
        Power & $2.364 \pm .007$ & $1.055 \pm .000$ & $1.001 \pm .001$ \\  
        Elevation & $1.939 \pm .008$ & $1.000 \pm .000$ & $1.000 \pm .000$ \\
        Ethylene & $3.125 \pm .002$ & $1.000 \pm .000$ & $1.000 \pm .000$ \\
        Songs & $20.64 \pm .016$ & $1.000 \pm .000$ & $1.000 \pm .000$ \\
        \hline
        \end{tabular}
    \end{center}
 \end{table}

\mypar{Ridge Regression.}
Figure~\ref{fig:experiment} provides our experimental results based on synthetic data, 
and Table~\ref{tab:real-world-data} shows the error resulting from real-world datasets. As expected, our approach performs asymptotically better than the Gaussian mechanism in the local model, but worse than SSP in the central model. We also found that as $n$ increases, the error of our approach asymptotically approaches the error in the central model. For $p = 0.9, 0.8, 0.7, 0.6$, the error eventually decreases for a sufficiently large $n$ for the privacy regimes we considered ($\epsilon \in \{ 0.01,0.03,0.05,0.1 \}$). 

For $p = 0.5$ though we saw a significant jump towards the local model in terms of asymptotic behaviour. In all the settings we considered, $p=0.5$ showed an asymptotic increase of the error. We tested on synthetic datasets of up to $n = 40$ million, and the error also increases in this regime. 

Table~\ref{tab:real-world-data} shows the results we got by applying the local, our mechanism and SSP on real-world data. They include standard deviations, though for many of the results those are so small that they appear as $0$ in the table. We found, that our mechanism performs significantly better than the Gaussian mechanism for local privacy. In most it even performs so well that there is basically no difference to the central model.

\paragraph{Varying $\lambda$}
In Figure \ref{fig:experiment-appendix}, we provide plots for the same setting of parameters as on Figure \ref{fig:experiment} ($d = 10$ and $\mu^2 = n$). Though, we also vary $\lambda \in \{ 1,10,100 \}$ here.
The experiments show that varying $\lambda$ does not change the asymptotic behavior of any of the mechanisms we investigated, in the sense that a variance proportional to $n^{-0.6}$ or lower eventually had decreasing error, while a variance proportional to $n^{0.5}$ produced increasing errors. The increase of $\lambda$ does however produce a significantly lower error in all settings (note the $y$-axes on the figure). Our theoretical results likewise predict this behaviour, as while the error bounds of Theorem~\ref{thrm:linreg} improve with increasing $\lambda$, they do not affect the dependency on $n$. Thus, we view the experiments as a confirmation that the theoretical bounds, while potentially improvable, express the correct asymptotic relationship between parameters and approximation bounds.

\begin{acks}
    The research described in this paper has received funding from: the European Research Council (ERC) under the European Unions's Horizon 2020 research and innovation programme under grant agreement No 803096 (SPEC); the Danish Independent Research Council under Grant-ID DFF-2064-00016B (YOSO); DFF-3103-00077B (CryptoDigi) and the Danish Independent Research Council under Grant-ID 1051-001068 (ACBD).
\end{acks}

\bibliographystyle{ACM-Reference-Format}
\bibliography{bibliography}

\end{document}